\numberwithin{equation}{section}
\theoremstyle{plain}
\newtheorem{theorem}{Theorem}[section]
\newtheorem{definition}[theorem]{Definition}
\newtheorem{claim}[theorem]{Claim}
\newtheorem{lemma}[theorem]{Lemma}
\newtheorem{proposition}[theorem]{Proposition}
\newtheorem{remark}[theorem]{Remark}
\newtheorem{corollary}[theorem]{Corollary}
\newcommand{\sgn}{\operatorname{sgn}}
\begin{document}

\begin{frontmatter}

\title{Cascade Markov Decision Processes: Theory and Applications}
\runtitle{Cascade MDPs}

\begin{aug}
  \author{\fnms{Manish}  \snm{Gupta}\corref{}\thanksref{t2}\ead[label=e1]{mgupta@fas.harvard.edu}},
  
  \thankstext{t2}{Ph.D Candidate in Applied Mathematics, Harvard School of Engineering and Applied Sciences.}

  \runauthor{M. Gupta.}

  \affiliation{Harvard School of Engineering and Applied Sciences}

  \address{31 Oxford Street, Cambridge MA 02138,\\ 
          \printead{e1}}

\end{aug}

\begin{abstract}
This paper considers the optimal control of time varying continuous time
Markov chains whose transition rates are themselves Markov processes. In one
set of problems the solution of an ordinary differential equation is shown to determine the
optimal performance and feedback controls, while some other cases are
shown to lead to singular optimal control problems which are more difficult to solve. Solution techniques are
demonstrated using examples from finance to behavioral decision making.
\end{abstract}

\begin{abstract}
\end{abstract}

\begin{keyword}[class=MSC]
\kwd[Primary ]{60J20}
\kwd{}
\kwd[; secondary ]{90C40}
\end{keyword}

\begin{keyword}
\kwd{Markov Decision Processes}
\kwd{Continuous-time Markov Processes}
\end{keyword}

\end{frontmatter}

\section{\label{SecIntro}Introduction}

For over five decades the subject of control of Markov processes has enjoyed
tremendous successes in areas as diverse as manufacturing, communications,
machine learning, population biology, management sciences, clinical systems
modelling and even human memory modeling \cite{OPTHASSL}. In a Markov
decision process (MDP) the transition rates depend upon controls, which can
be chosen appropriately so as to achieve a particular optimization goal. The
subject of this paper is to explore a class of MDPs where the transition
rates are, in addition, dependent upon the state of another stochastic
processes and are thus Markov processes themselves. Our purpose is to
describe a broad range of optimal control problems in which these so-called 
\textit{cascade Markov decision processes} (CMDP)\ admit explicit solutions 
\cite{BROCKOPT}, as well as problems in which dynamic programming is not
applicable at all.

Cascade processes are ideal in modeling games against nature. An epidemic
control system where infection rates vary in accordance with uncontrollable
factors such as the weather is one such case. They are also applicable in
behavioral models of decision making where available choices at each step
may be uncertain. For example, a behavioral decision-making problem called
the "Cat's Dilemma" first appeared in \cite{Makowski:896829} as an attempt
to explain "irrational" choice behavior in humans and animals where observed
preferences seemingly violate the fundamental assumption of transitivity of
utility functions \cite{KALENSCH},\cite{ISI:000229309600009}. In this
problem, each day the cat needs to choose one among the many types of food
presented to it so as to achieve a long-term balanced diet goal. However,
the pet owner's daily selection of food combinations presented to the cat is
random. The cat's feeding choice forms a controlled Markov chain, but the
available foods themselves are contingent on the owner's whim. Another
example is found in \textit{dynamically-hedged} portfolio optimization,
where dynamic (stochastic) rebalancing of allocated weights can be modeled
as a controlled Markov chain. However, what reallocations are possible may
depend on the current prices of assets, which are themselves stochastic.
Such MDP models have the advantage, for example, of being more realistic
than their \textit{continuously-hedged} counterpart, which have
traditionally been studied using Gauss/Markov models on augmented state
spaces \cite{5429700}, \cite{Noh2011510}. Other examples where CMDP are
applicable include queuing systems where service times depend on the 
state of another queue and models of resource sharing where one process
requires exclusivity and another doesn't (e.g., determining the optimal sync
rate for an operating system).

While a cascade Markov process can be equivalently represented on the joint
(coupled) state space as a non-cascade, the main purpose of this paper is to
investigate solutions on \emph{decomposed }state spaces. The main
contributions in doing so include:

\begin{itemize}
\item Decoupled matrix differential equations as solutions to a variety of
fully observable cascade problems involving optimization of the expectation
of a utility functional, which are computationally easier to implement than
their non-decoupled counterpart, and require solving of a one-point instead
of a two-point boundary value problem. 

\item Reduction of a partially observable cascade optimal control problem to
a lower dimensional non-cascade problem (via a process we call \emph{%
diagonalization }) that facilitates the use of standard optimization
techniques on a reduced state space, thereby circumventing the "curse of
dimensionality".  

\item Simpler analysis, via diagonalization, of a class of problems those
that involve optimization of a non-linear function of expectation (such as a
fairness or diversity index) and a full solution to a particular example of
such \emph{singular }optimal control problems.

\item A simple toy model for the dynamically-hedged portfolio optimization
problem and solutions that can be easily generalized to computationally
feasible algorithms for optimal allocation of large scale portfolios.
\end{itemize}

In addition to having the advantages of being able to efficiently represent
large state space Markov processes by factorization to simpler lower
dimensional problems and thus derive computationally simpler solutions, our
approach of working decomposed representations is generalizable to
multi-factor processes, stochastic automata networks \cite{Plateau97stochasticautomata}, 
and even quantum Markov chains and controls 
\cite{ISI000258175300005},\cite{BELLAVK}. 

The particular framework of Markov decision processes closely follows the
assumptions and modeling of \cite{BROCKOPT}, which are characterized by
finite or denumerably many states with perfect state observations and affine
dependence of transition rates on controls. The paper is organized as
follows. A mathematical framework is first outlined, more details of which
are in Appendix \ref{SectionProductMP}. We then derive solutions to two classes of optimal
control problems. In the first case the cost function is a the expectation
of a functional, one that can be solved by dynamic programming requiring
solution to a one point boundary value problem. The second class is the case
where the cost function can not be written as an expectation, a rather
non-standard stochastic control problem but one that arises in applications
requiring diversification (entropy) maximization or variance minimization
and requires solution to two-point boundary value problems. In many cases
the latter is a singular optimal control problem. We will then discuss toy
examples in each class of problems: a portfolio optimization problem and
animal behavior (decision-making) problem. More examples of portfolio
optimization and their cascade solutions appear in the Appendix. 

\section{Cascade Markov Decision Processes}

\subsection{\label{SecMarkovProcessModel}Markov Decision Process Model}

We use the framework of \cite{BROCKOPT} for continuous-time finite-state
(FSCT) Markov processes. We assume a probability space $(\Omega ,\mathcal{F},%
\mathbb{P})$ and right-continuous stochastic processes adapted to a
filtration $\mathbb{F=(}\mathcal{F}_{t})_{t\in T}$ on this space. An FSCT
Markov process $x_{t}$ that is assumed to take values in $\{e_{i}\}_{i=1}^{n}
$, the set of $n$standard basis vectors in $\mathbb{R}^{n}$, has the
following sample path (It\^o) description:
descriptions: 
\begin{eqnarray}
dx &=&\sum_{i=1}^{m}G_{i}xdN_{i}~~  \label{ItoMC} \\
\end{eqnarray}
where $G_{i}\in \mathbb{G}^{n}$ are \textit{distinct}\footnote{%
If the $G_{i}^{\prime }s$ are not distinct, then one can combine the Poisson
counters corresponding to identical $G_{i}^{\prime }s$ to get a set of
distinct $G_{i}^{\prime }s$. For example, $G_{1}ydN_{1}+G_{1}ydN_{2}$ can be
replaced by $G_{1}ydN$ where $dN=dN_{1}+dN_{2},$ a Poisson counter with rate
equal sum of the rates of the counters $N_{1},N_{2}$}, $\mathbb{G}^{n}$ %
being the space of square $n-$matrices of the form $F_{kl}-F_{ll}$ ~where $%
F_{ij}$ is the matrix of all zeros except for one in the $i^{\prime }th$ row
and $j^{\prime }th$ column, and $N_{i}$ are Poisson counters with rates $%
\lambda _{i}.$ The resulting \emph{infinitesimal generator~}that governs the
transition probabilities of the process is $P\in \mathbb{P}^{n}$, the space
of all stochastic $n-$matrices and is given by: 
\begin{eqnarray*}
P &=&\sum_{i=1}^{m}G_{i}\lambda _{i}
\end{eqnarray*}%
In a Markov decision process, the transition
rates are allowed to depend on $\mathcal{F}_{t}-$progressively measurable
control processes $u=(u_{1,}u_{2}...u_{p})$ in an affine accordance with\footnote{that is, we assume an affine dependence on controls}:
\begin{equation*}
\lambda_{i}= \lambda _{i0}+\sum_{j=1}^{p}\mu_{ij}u_{j}
\end{equation*}
so that the infinitesimal generator can be written as:
\begin{equation*}
P(u)=\sum_{i=1}^{m}G_{i}\left( \lambda _{i0}+\sum_{j=1}^{p}\mu
_{ij}u_{j}\right) 
\end{equation*}

\subsection{\label{SectionCascadeMDPModel1}Cascade MDP Model}

We are interested in the case where transition rates of $x_{t}\in
\{e_{i}\}_{i=1}^{n}$ are themselves stochastic: specifically, they depend on
the state of another Markov process, say, $z_{t}\in \{e_{i}\}_{i=1}^{r}$. We
will call such a pair to form a \textbf{Cascade Markov chain~(}CMC)\textbf{\ 
}In general, various levels of interactions between two processes $x_{t}$
and $z_{t}$ defines a joint Markov process $y_{t}=z_{t}\otimes x_{t}$ that
evolves on the product space $\{e_{i}\}_{i=1}^{n}\times \{e_{i}\}_{i=1}^{r}$
(see Appendix A) but we are specifically interested in CMCs where sample
paths of $z_{t}$ and $x_{t}$ have the following have the following Ito
description (Proposition \ref{ProposDecomposIto}, Appendix A): 
\begin{eqnarray}
dz &=&\sum_{i=1}^{s}H_{i}zdM_{i}  \label{CascadeDecRep} \\
dx(z) &=&\sum_{i=1}^{m}G_{i}(z)xdN_{i}(z)~
\end{eqnarray}%
where $H_{i}\in \mathbb{G}^{r},~G_{i}(z)\in \mathbb{G}^{n}$ and the rates of
Poisson counters $M_{i}$ and $N_{i}$ are $\nu _{i}$ and $\lambda _{i}$ with $%
\lambda _{i}$ depending on the state of $z_{t}$. Thus the infinitesimal
generators $P$ and $C$ of $x_{t}$ and $z_{t}$ ($P$ depends on $z_{t}$ and $%
P(z)$ propagates the \emph{conditional} probabilities of $x_{t}$ given $z$)
are 
\begin{eqnarray}
P(z) &=&\sum_{i=1}^{m}G_{i}\lambda _{i}(z)\ ~  \label{CascadeTranProbRep} \\
~~C &=&\sum_{i=1}^{s}H_{i}\nu_{i} 
\end{eqnarray}%
In a \textbf{Cascade Markov decision process~(CMDP)}, we assume the rates $%
\lambda _{i}$ of counters $N_{i}$  are allowed to additionally depend on $%
\mathcal{F}_{t}-$progressively measurable control processes $%
u=(u_{1,}u_{2}...u_{p})$ in accordance with \footnote{%
Each term is, in additional, a function of time $t$ but for clarity explicit
dependence on $t$ will not be specified in notation.}%
\begin{equation*}
\lambda _{i}(z)=\lambda _{i0}^{0}+\lambda _{i0}(z)+\sum_{j=1}^{p}\mu
_{ij}(z)u_{j}
\end{equation*}%
so that the conditional probability vector $p(z,u)$ \footnote{%
same as above.} of $x_{t}$ given $z$
evolves as%
\begin{equation*}
\dot{p}(z,u)=\sum_{i=1}^{m}G_{i}\left( \lambda _{i0}^{0}+\lambda
_{i0}(z)+\sum_{j=1}^{p}\mu _{ij}(z)u_{j}\right) p(z,u)
\end{equation*}%
which will be abbreviated as 
\begin{eqnarray}
P(z,u) &=&A_{0}+A(z)+\sum_{j=1}^{p}u_{j}B_{j}(z)  \label{CascadeMDPTranProb}
\\
\dot{p}(z,u) &=&P(z,u)p(z,u)
\end{eqnarray}%
The CMDP model is completely specified by $(A_{0},A,B_{j})$.

\textbf{The Admissible Controls, defining $\mathcal{U}$: } The requirements
on $P(z,u)$ to be an infinitesimal generator \textit{for each }$z$ put
constraints on the matrices $A_{0},A,B_{j}$ and impose admissibility
constraints on the controls $u_{j.}$ We will require $A_{0}$ and $A$ to be
infinitesimal generators themselves (for each $t$ \ and $z$) and the $B_{j}$
to be matrices whose columns sum to zero (for each $t$ and $z$). We also
allow the controls to be dependent on $z$ and $x$ which will define the set
of admissible controls $\mathcal{U}$ as the set of measurable functions
mapping the space $\{e_{i}\}_{i=1}^{r}\times \{e_{i}\}_{i=1}^{n}$ to the
space of controls $\mathbb{R}^{p}$ such that the matrix with $j^{th}$ column 
\begin{equation*}
f_{j}=A_{0}e_{j}+A(e_{k})e_{j}+\sum_{i=1}^{p}u_{i}(e_{k},e_{j})B_{j}(e_{k})
\end{equation*}%
for $~j=1..n,~k=1..r$ is an infinitesimal generator. Explicit dependence on $%
t$ is omitted in notation above for clarity.

\subsection{Examples of CMDP}

Two toy examples of CMDP that will be later discussed are outlined below.
Some background on the terminology used in description of portfolio
optimization is in Appendix \ref{SecPOBg}. 

\subsubsection{\label{SecSFPModel}Example 1: A Self-Financing Portfolio Model}

In this toy example on portfolio optimization\footnote{See Appendix C, Section C.1 for some basic definitions on Portfolio Optimization} we will assume that there is one bond and one stock in
the portfolio, with the bond price being fixed at $1$ and the stock having
two possible prices $1$ and $-1/3.$ Thus the price vector takes values in
the set $\{(1,1),(1,-\frac{1}{3})\}$.  Assume a portfolio that can shift
weights between the two assets with allowable weights $W$ of $%
(0,2),(-1,-1),(0,-2)$ so that the portfolio has a constant total position
(of $\frac{-2}{3}$). Further, we allow only weight adjustments of $+1$ or $-1
$ for each asset, and we further restrict the weight shifts to only those
that do not cause a change in net value for any given asset price. The
latter condition makes the portfolio \emph{self-financing}.

The resulting process can be modeled as a cascade MDP. Let $z_{t}$ be the
(joint) prices of the two assets with prices $(1,1)$, $(1,\frac{1}{3})$
represented as states $e_{1},e_{2}$ respectively. Let $x_{t}$ be the choice
of weights with weights $(0,2)$, $(-1,-1)$, $(0,-2)$ represented as states $%
e_{1},e_{2},e_{3}$ respectively. Transition rates of $z_{t}$ are determined
by some pricing model, whereas the rates of $x_{t}$ which represent
allowable weight shifts are controlled by the portfolio manager. The
portfolio value $v(z_{t},x_{t})$ can be written using its matrix
representation, $v(z,x)=z^{T}Vx$, where $V$ is

\begin{equation}
{ V=%
\begin{pmatrix}
2 & -\frac{2}{3} \\ 
-2 & -\frac{2}{3} \\ 
-2 & \frac{2}{3}%
\end{pmatrix}%
}  \label{AssetVMSFP}
\end{equation}%
The portfolio manager is able to adjust the rate $u$ of buying stock (which
has the effect of simultaneously decreasing or increasing the weight of the
bond). The resulting transitions of $x_{t}$ depend on $z_{t}$ (see Figure
\ref{fig:SFP1} ) and transition matrices $P(z)$ of the weights $x_{t}$ can be written
as $P(z)=$\thinspace $A(z)+uB(z)$, where $A(z)$ and $B(z)$are: { {\ 
\begin{equation*}
\begin{tabular}{cc}
$A(e_{1})=\frac{1}{2}%
\begin{pmatrix}
0 & 0 & 0 \\ 
1 & -1 & 1 \\ 
0 & 1 & -1%
\end{pmatrix}%
$ & $A(e_{2})=\frac{1}{2}%
\begin{pmatrix}
-1 & 1 & 0 \\ 
1 & -1 & 0 \\ 
0 & 0 & 0%
\end{pmatrix}%
$ \\ 
$B(e_{1})=%
\begin{pmatrix}
0 & 0 & 0 \\ 
0 & 1 & 1 \\ 
0 & -1 & -1%
\end{pmatrix}%
$ & $B(e_{2})=%
\begin{pmatrix}
1 & 1 & 0 \\ 
-1 & -1 & 0 \\ 
0 & 0 & 0%
\end{pmatrix}%
$%
\end{tabular}%
\end{equation*}%
}} For $P(z)$ to be a proper transition matrix we require admissible
controls $u$ needs to satisfy $\left\vert u\right\vert \leq \frac{1}{2}$%
.~The portfolio manager may choose $u$ in accordance with current values of $%
x_{t}$ and $z_{t}$ so that $u$ is a Markovian feedback controls $%
u(t,z_{t},x_{t}).$ Note that this model differs from the traditional
Merton-like models where only feedback on the total value $v_{t}$ is
allowed. Note that it is the self-financing constraint that leads to the
dependence on the current price $z_{t}$ of the transitions of $x$, which
allows us to model this problem as a cascade.

\begin{center}
\begin{figure}[th]
\centering
\subfigure[$z=e_1$]{
\begin{tikzpicture}[->,>=stealth',shorten >=1pt,auto,node distance=2.4cm,
semithick]
\tikzstyle{every state}=[fill=white,draw=black,thick,text=black,scale=1]
\node[state]         (A)              {\small{(-1,-1)}};
\node[state]         (C) [right of=A] at (A) {\small{(0,-2)}};
\path (A) edge  [bend right] node[below] {$\frac{1}{2}+u$} (C);
\path (C) edge  [bend right] node[above] {$\frac{1}{2}-u$} (A);
\end{tikzpicture}
\label{fig:SFP1}
} 
\subfigure[$z=e_2$]{
\begin{tikzpicture}[->,>=stealth',shorten >=1pt,auto,node distance=2.4cm,
semithick]
\tikzstyle{every state}=[fill=white,draw=black,thick,text=black,scale=1]
\node[state]         (A)              {\small{(-1,-1)}};
\node[state]         (B) [left of=A] {\small{(0,2)}};
\path (A) edge  [bend right] node[above] {$\frac{1}{2}-u$} (B);
\path (B) edge  [bend right] node[below] {$\frac{1}{2}+u$} (A);
\end{tikzpicture}
\label{fig:SFP2}
} \label{fig:SFPModel1}
\caption[Self Financing Portfolio Transition Diagrams]{Transition diagram of
weight $x(t)$ in the self-financing portfolio for various asset prices $z(t)$
are shown in \subref{fig:SFP1} and  \subref{fig:SFP2}.
States $e_{1}$,$e_{2}$ of $z(t)$ correspond to price vectors
(1,1),(1,-1/3) respectively. Self-transitions are omitted for
clarity.}
\end{figure}
\end{center}

\subsubsection{\label{SectionExampleCatsDlm}Example 2: The Cat's Dilemma
Model}

As an example of a cascade MDP, we discuss the cat feeding problem
introduced in Section \ref{SecIntro}. The feeding cat is represented by the
process $x(t)$ with four states $e_{4}=$\emph{Unfed}, $e_{1}=$\emph{Ate Meat}%
, $e_{2}=$\emph{Ate Fish}, $e_{3}=$\emph{Ate Milk}. We assume a constant
feeding rate $f$, and a constant "satisfaction" (digestion) rate $s$ for
each food, upon feeding which the cat always returns to the Unfed state. The
Markov process $z(t)\in \{e_{1},e_{2,}e_{3}\}$ represents availability of
different combinations of food where $e_{1},e_{2},e_{3}$ denote the
combinations \{\emph{Fish,Milk}\},\{\emph{Meat,Milk}\} and \{\emph{Meat,Fish}%
\} respectively. The food provider is unaffected by the cat's eating rate,
and so we can model the process as a cascade with the transition matrix $%
P~ $of $x$ given by (see Figure \ref{fig:subfigureExample}), 
\begin{equation}
P(z,u)=A_{0}+A(z)+B(z)u  \label{CatTransMatrixDef}
\end{equation}%
where the control $u(z,x)$ $\in \lbrack \frac{-1}{2},\frac{1}{2}]$
represents the cat's choice strategy (extreme values $\pm \frac{1}{2}$ %
denoting strongest affinity for a particular food in the combination $z$),
and with $A_{0},A(z)$ and $B(z)$ given by {\tiny 
\begin{equation*}
\begin{tabular}{ccc}
$A_{0}=%
\begin{pmatrix}
-s & 0 & 0 & 0 \\ 
0 & -s & 0 & 0 \\ 
0 & 0 & -s & 0 \\ 
s & s & s & 0%
\end{pmatrix}%
$ &  &  \\ 
$A(e_{1})=%
\begin{pmatrix}
0 & 0 & 0 & 0 \\ 
0 & 0 & 0 & \frac{f}{2} \\ 
0 & 0 & 0 & \frac{f}{2} \\ 
0 & 0 & 0 & -f%
\end{pmatrix}%
$ & $A(e_{2})=%
\begin{pmatrix}
0 & 0 & 0 & \frac{f}{2} \\ 
0 & 0 & 0 & 0 \\ 
0 & 0 & 0 & \frac{f}{2} \\ 
0 & 0 & 0 & -f%
\end{pmatrix}%
$ & $A(e_{3})=%
\begin{pmatrix}
0 & 0 & 0 & \frac{f}{2} \\ 
0 & 0 & 0 & \frac{f}{2} \\ 
0 & 0 & 0 & 0 \\ 
0 & 0 & 0 & -f%
\end{pmatrix}%
$ \\ 
$B(e_{1})=%
\begin{pmatrix}
0 & 0 & 0 & 0 \\ 
0 & 0 & 0 & f \\ 
0 & 0 & 0 & -f \\ 
0 & 0 & 0 & 0%
\end{pmatrix}%
$ & $B(e_{2})=%
\begin{pmatrix}
0 & 0 & 0 & -f \\ 
0 & 0 & 0 & 0 \\ 
0 & 0 & 0 & f \\ 
0 & 0 & 0 & 0%
\end{pmatrix}%
$ & $B(e_{3})=%
\begin{pmatrix}
0 & 0 & 0 & f \\ 
0 & 0 & 0 & -f \\ 
0 & 0 & 0 & 0 \\ 
0 & 0 & 0 & 0%
\end{pmatrix}%
$%
\end{tabular}%
\end{equation*}
}

\begin{center}
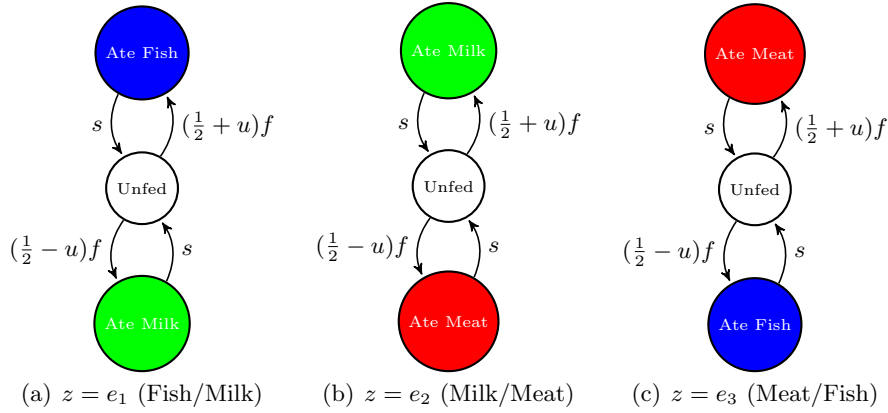
\begin{figure}[t]
\centering
\subfigure[$z=e_1$ (Fish/Milk)]{
\begin{tikzpicture}[->,>=stealth',shorten >=1pt,auto,node distance=1.8cm,
semithick]
\tikzstyle{every state}=[fill=white,draw=black,thick,text=black,scale=1]
\node[state]         (A)              {\tiny{Unfed}};
\tikzstyle{every state}=[fill=blue,draw=black,thick,text=white,scale=1]
\node[state]         (B) [above of=A] {\tiny{Ate Fish}};
\tikzstyle{every state}=[fill=green,draw=black,thick,text=white,scale=1]
\node[state]         (C) [below of=A] at (A) {\tiny{Ate Milk}};
\path (A) edge  [bend right] node[right] {$(\frac{1}{2}+u)f$} (B);
\path (A) edge  [bend right] node[left] {$(\frac{1}{2}-u)f$} (C);
\path (B) edge  [bend right] node[left] {$s$} (A);
\path (C) edge  [bend right] node[right] {$s$} (A);
\end{tikzpicture}
\label{fig:subfig1}
} 
\subfigure[$z=e_2$ (Milk/Meat)]{
\begin{tikzpicture}[->,>=stealth',shorten >=1pt,auto,node distance=1.8cm,
semithick]
\tikzstyle{every state}=[fill=white,draw=black,thick,text=black,scale=1]
\node[state]         (A)              {\tiny{Unfed}};
\tikzstyle{every state}=[fill=green,draw=black,thick,text=white,scale=1]
\node[state]         (B) [above of=A] {\tiny{Ate Milk}};
\tikzstyle{every state}=[fill=red,draw=black,thick,text=white,scale=1]
\node[state]         (C) [below of=A] at (A) {\tiny{Ate Meat}};
\path (A) edge  [bend right] node[right] {$(\frac{1}{2}+u)f$} (B);
\path (A) edge  [bend right] node[left] {$(\frac{1}{2}-u)f$} (C);
\path (B) edge  [bend right] node[left] {$s$} (A);
\path (C) edge  [bend right] node[right] {$s$} (A);
\end{tikzpicture}
\label{fig:subfig2}
} 
\subfigure[$z=e_3$ (Meat/Fish)]{
\begin{tikzpicture}[->,>=stealth',shorten >=1pt,auto,node distance=1.8cm,
semithick]
\tikzstyle{every state}=[fill=white,draw=black,thick,text=black,scale=1]
\node[state]         (A)              {\tiny{Unfed}};
\tikzstyle{every state}=[fill=red,draw=black,thick,text=white,scale=1]
\node[state]         (B) [above of=A] {\tiny{Ate Meat}};
\tikzstyle{every state}=[fill=blue,draw=black,thick,text=white,scale=1]
\node[state]         (C) [below of=A] at (A) {\tiny{Ate Fish}};
\path (A) edge  [bend right] node[right] {$(\frac{1}{2}+u)f$} (B);
\path (A) edge  [bend right] node[left] {$(\frac{1}{2}-u)f$} (C);
\path (B) edge  [bend right] node[left] {$s$} (A);
\path (C) edge  [bend right] node[right] {$s$} (A);
\end{tikzpicture}
\label{fig:subfig3}
} \label{fig:subfigureExample}
\caption[Cat's Dilemma Transition Diagrams]{Transition diagram of cat
feeding states $x(t)$ in the Cat's Dilemma for various food combinations $%
z(t)$ are shown in \subref{fig:subfig1}, \subref{fig:subfig2} and 
\subref{fig:subfig3}. Self-transitions are omitted for clarity.}
\end{figure}
\end{center}

\section{Optimal Control Problem Type I : Expected Utility Maximization}

As alluded to in the introduction, the first category of optimal control
problems on cascade MDPs is one where performance measure is the expectation
of a functional, and hence linear in the underlying probabilities. We will
primarily discuss the fully observable (full feedback), finite time-horizon
case and derive a general solution as a matrix differential equation.

\subsection{\label{OptProbDefSection}Problem Definition}

Fix a finite time horizon $T$ on the cascade MDP $(z_{t},x_{t})$ defined in
Section \ref{SectionCascadeMDPModel1}. and define the cost function $\eta $, 
\begin{equation}
\eta (u)=\mathbb{E}\int_{0}^{T}(z^{T}(\sigma )\mathbf{L}^{T}(\sigma
)x(\sigma )+\psi (u(\sigma ))d\sigma +z^{T}(T)\Phi ^{T}(T)x(T)
\label{OCPIPerfMeas}
\end{equation}%
where $c,\phi $ are real-valued functions on the space $\mathbb{R}^{+}\times
\{e_{i}\}_{i=1}^{r}\times \{e_{i}\}_{i=1}^{n}$, that are represented by the
real matrices $\mathbf{L}(t)$ and $\mathbf{\Phi }(t)$ as $\ c(t,z,x)=z^{T}%
\mathbf{L}(t)x$ and $\phi (t,z,x)=z^{T}\mathbf{\Phi }(t)x$; and $\psi $ a
(Borel) measurable function $\mathbb{R}^{p}\rightarrow \mathbb{R}$. If $c$
is bounded the problem of finding the solution to 
\begin{equation}
\eta ^{\ast }=\min_{u\in \mathcal{U}}\eta (u)  \label{OptCtrlProblem}
\end{equation}%
is well-defined and will be subsequently referred to as Problem (\textbf{%
OCP-I}). The corresponding optimal control is given by 
\begin{equation}
u^{\ast }=\arg \min_{u\in \mathcal{U}}\eta (u)  \label{OptCtrl}
\end{equation}

\subsection{Solution Using Dynamic Programming Principle}

\begin{theorem}
\label{TheoremBellman}Let $(z_{t},x_{t})$ be a cascade MDP\ as defined in
Section \ref{SectionCascadeMDPModel1} with $C,A_{0},A$ and $B_{i}$ as
defined thereof. Let $T>0,$ and $\mathcal{U},\psi $,$\Phi $and $\eta $ be as
defined in section \ref{OptProbDefSection}. Then there exists \ a unique
solution to the equation (on the space of $n\times r$ matrices) 
\begin{eqnarray}
\dot{K} &=&-KC-L-A_{0}^{T}K-A^{T}(z)K-\min_{u(z,x)\in \mathcal{U}%
}(\sum_{i=1}^{p}u_{i}z^{T}K^{T}B_{i}(z)x+\psi (u))~
\label{BellmanPartiallyDec} \\
K(T) &=&\Phi (T)
\end{eqnarray}%
%
%
%
%
%
%
on the interval $[0,T],$ where $A^{T}(z)K$ denotes the matrix whose $%
j^{\prime }th$ column is $A(e_{j})K^{T}e_{j}^{T}$ (which can be more
explicitly written as $\sum_{z}A^{T}(z)Kzz^{T}$, that is, the matrix
representation of the functional $x^{T}A^{T}(z)Kz$). Furthermore, if $K(t)$ %
is the solution to \ref{BellmanPartiallyDec} then the optimal control
problem \textbf{OCP-I }defined in (\ref{OptCtrlProblem}) has the solution 
\begin{eqnarray}
\eta ^{\ast } &=&\mathbb{E}z^{T}(0)K^{T}(0)x(0)~  \label{BellmanOptCtrlSoln}
\\
~u^{\ast } &=&\arg \min_{u(z,x)\in \mathcal{U}}(%
\sum_{i=1}^{p}z^{T}K^{T}u_{i}B_{i}(z)x+\psi (u_{i}))
\end{eqnarray}
\end{theorem}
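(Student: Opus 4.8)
The plan is to establish this via the standard verification/dynamic-programming argument for finite-horizon continuous-time Markov decision processes, adapted to the cascade (product-space) structure. The core idea is to posit a value function that is linear in the underlying probability vectors — which the cascade structure suggests should take the form $V(t,z,x) = z^T K^T(t) x$ for some matrix-valued $K(t)$ — and then show (i) that the equation \eqref{BellmanPartiallyDec} for $K$ admits a unique solution on $[0,T]$, and (ii) that this solution yields the optimal cost and control via a verification theorem.

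First I would address existence and uniqueness of the solution to \eqref{BellmanPartiallyDec}. The right-hand side is affine in $K$ except for the $\min_u$ term; the plan is to argue that the map $u \mapsto \sum_i u_i z^T K^T B_i(z) x + \psi(u)$, being minimized over the compact (or closed) admissible set defined by $\mathcal{U}$, produces a Hamiltonian $\mathcal{H}(t,K)$ that is Lipschitz in $K$ uniformly in $t$ (the minimum of a family of affine-in-$K$ functions is concave and Lipschitz, with constant controlled by the bounds on $B_i$ and the finite state spaces). With Lipschitz continuity in hand, the Picard--Lindel\"of theorem gives a unique local solution to the terminal-value ODE $\dot K = F(t,K)$, $K(T)=\Phi(T)$, and the affine-plus-Lipschitz structure precludes finite-time blow-up, extending the solution to all of $[0,T]$.

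Next I would carry out the verification step. Taking $K(t)$ to be the solution just constructed, I would define the candidate value process $V_t = z_t^T K^T(t) x_t$ and apply the product (It\^o) rule for the jump processes governing $dz$ and $dx(z)$ from \eqref{CascadeDecRep}, using the generators $C$ and $P(z,u)$ from \eqref{CascadeTranProbRep} and \eqref{CascadeMDPTranProb}. Computing $d\bigl(z_t^T K^T(t) x_t\bigr)$ and taking expectations, the deterministic drift picks up the terms $z^T\dot K^T x$, $z^T K^T P(z,u) x$ (from the $x$-dynamics), and $z^T C^T K^T x$ (from the $z$-dynamics). The plan is to show that, by construction of \eqref{BellmanPartiallyDec}, adding the running cost $z^T L^T x + \psi(u)$ makes the drift nonnegative for every admissible $u$ and exactly zero at the minimizing $u^*$. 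Integrating from $0$ to $T$, adding the running-cost integral, and using the terminal condition $K(T)=\Phi(T)$ then yields $\eta(u) \ge \mathbb{E}\, z^T(0)K^T(0)x(0)$ with equality at $u^*$, which is precisely \eqref{BellmanOptCtrlSoln}.

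The main obstacle I expect is the bookkeeping in the It\^o expansion: because the transition rates $\lambda_i$ of the $N_i$ counters depend on $z_t$, the cross terms between the $z$-jumps and the $x$-jumps must be handled carefully, and one must verify that the generator acting on the bilinear form $z^T K^T x$ decomposes cleanly into the $C$-part and the $P(z,u)$-part without spurious coupling. Reconciling the abstract notation $A^T(z)K$ (defined in the statement as the matrix representation of $x^T A^T(z) K z$) with the concrete drift terms is the delicate point, and I would want to confirm that the measurability and admissibility of the feedback law $u^*(z,x)$ — guaranteed by the definition of $\mathcal{U}$ as measurable functions into $\mathbb{R}^p$ — suffices to make $u^*$ a legitimate progressively measurable control so that the verification inequality is attained rather than merely approached.
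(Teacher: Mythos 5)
Your proposal is correct and follows essentially the same route as the paper's proof: the bilinear ansatz $k(t,z,x)=z^{T}K^{T}(t)x$, the It\^o expansion of $d(z^{T}K^{T}x)$ with martingale compensation of the counters $M_{i},N_{i}$ (the cross term vanishing by the cascade/decomposable structure, as you anticipate), and minimization over $u$ producing equation (\ref{BellmanPartiallyDec}) and the feedback law. The only difference is packaging: you run the argument in the verification direction with an explicit Picard--Lindel\"of well-posedness step for the terminal-value ODE, whereas the paper derives the equation from the stochastic dynamic programming principle and delegates existence/uniqueness to Theorem 1 of \cite{BROCKOPT}, which rests on the same Lipschitz-Hamiltonian observation you sketch.
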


\begin{proof}
With $z$,$x,\eta $ as defined above let the minimum return function be $%
k(t,z,x)=z^{T}K^{T}(t)x,$where $K(t)$ is an $n\times r$ matrix, so that$%
~k(0,z(0),x(0)=\eta ^{\ast }$. Using Ito rule for$~z^{T}K^{T}x$ 
\begin{equation*}
d(z^{T}K^{T}x)=\sum_{i=1}^{s}z^{T}H_{i}^{T}K^{T}xdM_{i}+z^{T}\dot{K}%
^{T}x+\sum_{i=1}^{n}z^{T}K^{T}G_{i}xdN_{i}
\end{equation*}%
Since the process $dN_{i}-(\lambda _{i0}^{0}+\lambda
_{i0}(z)+\sum_{j=1}^{p}\mu _{ij}(z)u_{j})dt$ is a martingale equating the
expectation to zero gives 
\begin{eqnarray*}
\mathbb{E(}\sum_{i=1}^{n}z^{T}K^{T}G_{i}xdN_{i})&=&\mathbb{E(}g(t,x,z,u)dt) \\
\mathbb{E(}\sum_{i=1}^{s}z^{T}H_{i}^{T}K^{T}xdM_{i})&=&\mathbb{E(}%
z^{T}C^{T}K^{T}x) \\
\end{eqnarray*}%
with $g(t,x,z,u)=z^{T}K^{T}A_{0}x+z^{T}K^{T}A(z)+%
\sum_{i=1}^{p}z^{T}K^{T}u_{i}B_{i}(z)x$.~Writing $c(t,z,x)+\psi
(u)=f(t,z,x,u)$ and $z^{T}C^{T}K^{T}x+g(t,x,z,u)$ $=\xi (t,x,z,u)$, a simple
application of the stochastic dynamic programming principle shows that 
\begin{equation*}
z(t)^{T}\dot{K}(t)^{T}x(t)+\min_{u}(\xi (t,x,z,u)+f(t,z,x,u))\geq 0
\end{equation*}%
The minimum value of $0$ is actually achieved by $u^{\ast }$so that the
inequality above must be an equality. Introducing notation $A^{T}(z)K$, we
get precisely (\ref{BellmanPartiallyDec}). Proof of uniqueness is identical
to that in \cite{BROCKOPT} Theorem 1.
\end{proof}

Note that the Bellman equation (\ref{BellmanPartiallyDec}) is very similar
to that of a single (non cascade) MDP with the additional term $-KC$ %
representing the backward (adjoint) equation for the process $z(t)$ and the
appearance of $z$ in the term for minimization which permits feedback of the
optimal control $u^{\ast }$ on $z$ in addition to $x.$ The matrix $K$ above
is also known as the \textbf{Minimum Return Function.~}The above solution is
a single point boundary value problem instead of two-point.\ For small $KC,$
the above decouples one column at a time. This form is readily generalizable
to multifactor \ MDPs as well.

\begin{corollary}
(\emph{Quadratic Cost of Control}) Under the hypothesis of the above
theorem, if $\psi (u_{i})=u_{i}^{2}$ then if $u_{i}(t,z,x)=\frac{-1}{2}%
z^{T}(t)K^{T}(t)B_{i}(z)x(t)$ lies in the interior of $\mathcal{U}$ then it
is the optimal control. Otherwise the optimal control is on the boundary of $%
\mathcal{U}$. If the former is the case at every $t\in \lbrack 0,T]$, then
equation (\ref{BellmanPartiallyDec}) defining the optimal solution becomes
(where the notation $M^{.2}$ for a matrix is element-wise squared matrix): 
\begin{equation*}
\dot{K}=-KC-L-A_{0}^{T}K-A^{T}(z)K+\frac{1}{4}%
\sum_{i=1}^{p}(B_{i}^{T}(z)K)^{.2}
\end{equation*}
\end{corollary}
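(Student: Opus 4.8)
The plan is to specialize the optimal control characterization from Theorem~\ref{TheoremBellman} to the case $\psi(u_i) = u_i^2$ and then carry out the pointwise minimization explicitly. First I would observe that the minimization appearing inside the Bellman equation~\eqref{BellmanPartiallyDec} decouples completely across the control components $u_i$: the objective is $\sum_{i=1}^p \bigl( u_i\, z^T K^T B_i(z) x + u_i^2 \bigr)$, which is a sum of $p$ independent scalar quadratics, one per $u_i$. For each fixed state pair $(z,x)$ and each index $i$, I would minimize the scalar function $q_i(u_i) = u_i^2 + u_i\, z^T K^T B_i(z) x$ over the admissible interval for $u_i$. Treating $z^T K^T B_i(z) x$ as a fixed scalar coefficient, this is a convex parabola in $u_i$ whose unconstrained minimizer is found by setting the derivative to zero.

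Next I would compute the stationary point: $\frac{d}{du_i} q_i(u_i) = 2u_i + z^T K^T B_i(z) x = 0$, giving $u_i^* = -\tfrac{1}{2} z^T(t) K^T(t) B_i(z) x(t)$, which is exactly the claimed feedback law. Because $q_i$ is strictly convex (the coefficient of $u_i^2$ is $1>0$), this critical point is the global unconstrained minimizer. I would then invoke the standard dichotomy for minimizing a convex function over a convex (interval) constraint set: if $u_i^*$ lies in the interior of the admissible set $\mathcal{U}$, it is the constrained minimizer; otherwise, by convexity, the minimizer is attained at the boundary point of $\mathcal{U}$ nearest to $u_i^*$. This justifies the two cases stated in the corollary.

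Finally, assuming the interior case holds at every $t\in[0,T]$, I would substitute $u_i^* = -\tfrac{1}{2} z^T K^T B_i(z) x$ back into the minimized term to obtain its value along the optimal trajectory. Substituting gives, for each $i$, the value $\tfrac{1}{4}\bigl(z^T K^T B_i(z) x\bigr)^2 - \tfrac{1}{2}\bigl(z^T K^T B_i(z) x\bigr)^2 = -\tfrac{1}{4}\bigl(z^T K^T B_i(z) x\bigr)^2$, so that $-\min_u(\cdots) = +\tfrac{1}{4}\sum_{i=1}^p \bigl(z^T K^T B_i(z) x\bigr)^2$. The remaining step is to recognize this scalar quadratic form, summed against the state indicator structure of $z$ and $x$, as the matrix representation $\tfrac{1}{4}\sum_{i=1}^p \bigl(B_i^T(z)K\bigr)^{.2}$ in the same sense that $A^T(z)K$ was defined in the theorem statement.

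The main obstacle is this last identification: verifying that the elementwise-squared matrix $\bigl(B_i^T(z)K\bigr)^{.2}$ is precisely the matrix whose bilinear form $z^T(\cdot)x$ reproduces the scalar $\bigl(z^T K^T B_i(z) x\bigr)^2$ for basis-vector states $z = e_k$, $x = e_j$. Since $z$ and $x$ range over standard basis vectors, the bilinear form $z^T M x$ simply selects the $(k,j)$ entry of $M$, and squaring the scalar $z^T K^T B_i(z) x = e_k^T K^T B_i(e_k) e_j$ corresponds to squaring that single selected entry; this is exactly what the Hadamard square $(\cdot)^{.2}$ encodes, provided the $z$-dependence of $B_i(z)$ is handled consistently with the $A^T(z)K$ convention. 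Care is needed to confirm the dependence on $z$ through $B_i(z)$ matches the column-wise assembly used for $A^T(z)K$, but once the notational bookkeeping is aligned the identity follows directly from the definition of the Hadamard square on basis-vector evaluations.
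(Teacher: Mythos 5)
Your proposal is correct and is exactly the computation the paper intends: the corollary is stated without a separate proof because it follows immediately from Theorem \ref{TheoremBellman} by the pointwise completion-of-the-square you carry out, giving $u_i^*=-\tfrac{1}{2}z^TK^TB_i(z)x$ and minimized value $-\tfrac{1}{4}(z^TK^TB_i(z)x)^2$, whose negative enters the Bellman equation as $+\tfrac{1}{4}\sum_i(B_i^T(z)K)^{.2}$. Your final identification is also right: since $z^TK^TB_i(z)x=x^TB_i^T(z)Kz$ picks out the $(j,k)$ entry of $B_i^T(e_k)K$ at basis states $x=e_j$, $z=e_k$, squaring it is precisely the Hadamard square assembled column-by-column under the same $\sum_z(\cdot)zz^T$ convention the paper uses for $A^T(z)K$.
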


\begin{corollary}
(\emph{No Cost of Control) }Under the hypothesis of the above theorem, if $%
\psi (u_{i})=0$ then the optimal control is at the boundary of $\mathcal{U}$%
. If $\mathcal{U}$ is defined as the set $\{$ $-a_{i}\leq \left\vert
u_{i}\right\vert \leq a_{i}\}$ the optimal control is the bang-bang control $%
u_{i}(t,z,x)=-a_{i}\sgn(z^{T}K^{T}(t)B_{i}(z)x)$ and equation (\ref%
{BellmanPartiallyDec}) defining the optimal solution becomes%
\begin{equation*}
\dot{K}=-KC-L-A_{0}^{T}K-A^{T}(z)K+\sum_{i=1}^{p}a_{i}\left\vert
B_{i}^{T}(z)K\right\vert ;
\end{equation*}
\end{corollary}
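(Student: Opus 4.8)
The plan is to specialize Theorem~\ref{TheoremBellman} to $\psi\equiv 0$ and to evaluate the pointwise minimization in~(\ref{BellmanPartiallyDec}) by hand. With no cost of control the quantity to be minimized is
\[
\sum_{i=1}^{p} u_i\, z^{T}K^{T}B_i(z)x ,
\]
which is \emph{linear} in $u$. First I would record the scalar coefficients $c_i(z,x)=z^{T}K^{T}B_i(z)x$ and note that, because the feedback control $u(z,x)$ may be chosen independently at each point, the minimization is pointwise and the objective is affine in each $u_i$. A nonconstant affine functional over the (compact, convex) set of admissible control values attains its minimum on the boundary; this already yields the first assertion of the corollary, that the optimizer lies on $\partial\mathcal{U}$ away from the indifference set where the relevant coefficient vanishes.

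Next I would carry out the explicit one--dimensional minimization in the box case $\mathcal{U}=\{\,|u_i|\le a_i\,\}$, where the constraint decouples coordinatewise. On the interval $[-a_i,a_i]$ the affine map $u_i\mapsto u_i c_i$ is minimized at $u_i^{\ast}=-a_i$ if $c_i>0$ and at $u_i^{\ast}=+a_i$ if $c_i<0$, both summarized by
\[
u_i^{\ast}=-a_i\,\sgn(c_i)=-a_i\,\sgn\!\big(z^{T}K^{T}(t)B_i(z)x\big),
\]
with minimal value $-a_i|c_i|$. This is exactly the claimed bang--bang control, and since $|u_i^{\ast}|=a_i$ whenever $c_i\neq 0$ it confirms that the control saturates the boundary. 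Summing, the minimal value of the bracket in~(\ref{BellmanPartiallyDec}) is $-\sum_{i=1}^{p}a_i\,|z^{T}K^{T}B_i(z)x|$.

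I would then substitute this value back into~(\ref{BellmanPartiallyDec}). The leading minus sign in front of the minimization converts the negative minimum into a positive contribution, so the right--hand side acquires the scalar functional $+\sum_{i=1}^{p}a_i\,|z^{T}K^{T}B_i(z)x|$. The final step is to rewrite this as an $n\times r$ matrix under the same representation convention used for $A^{T}(z)K$ in the theorem: evaluating the functional $x^{T}B_i^{T}(z)Kz$ at $z=e_k$, $x=e_j$ collapses the $z$--dependence onto the $k$-th column and isolates the single entry $(B_i^{T}(e_k)K)_{jk}$, whence $|z^{T}K^{T}B_i(z)x|$ at these basis vectors equals $|(B_i^{T}(e_k)K)_{jk}|$. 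The representing matrix therefore has $(j,k)$ entry $\sum_i a_i|(B_i^{T}(e_k)K)_{jk}|$, i.e.\ it is the elementwise absolute value $\sum_{i=1}^{p}a_i\,|B_i^{T}(z)K|$, giving precisely the asserted equation for $\dot K$.

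The step I expect to need the most care is the last one, because $|\cdot|$ is nonlinear and one cannot legitimately push the linear $z$--collapse map through the absolute value of a matrix. The justification rests entirely on the fact that the matrix representation is defined entry by entry through evaluation on basis vectors, and each such evaluation returns a single scalar entry; it is only in this entrywise sense that the elementwise modulus and the representation commute, and I would state this explicitly to avoid suggesting a spurious linear identity. A secondary point worth flagging is that the boundary/bang--bang conclusion is forced only where $c_i\neq 0$: on the set $\{z^{T}K^{T}B_i(z)x=0\}$ the minimizer is non--unique and $u_i$ may be chosen freely in $[-a_i,a_i]$ without affecting optimality, so the statement should be read modulo this indifference set.
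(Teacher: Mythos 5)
Your proof is correct and follows essentially the same route as the paper, which states this corollary without a separate proof as an immediate specialization of Theorem~\ref{TheoremBellman}: the intended argument is exactly your pointwise minimization of the affine functional $\sum_i u_i\,z^{T}K^{T}B_i(z)x$ over the box, giving the bang--bang law $u_i^{\ast}=-a_i\sgn(z^{T}K^{T}B_i(z)x)$ with minimal value $-\sum_i a_i|z^{T}K^{T}B_i(z)x|$, whose sign flips under the leading minus to produce the term $\sum_{i=1}^{p}a_i\left\vert B_i^{T}(z)K\right\vert$. Your two flagged refinements --- that the elementwise modulus commutes with the matrix representation only in the entrywise sense of evaluation on basis vectors, and that the bang--bang conclusion holds modulo the indifference set where the switching function vanishes --- are correct and slightly more careful than what the paper makes explicit.
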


\subsection{\label{SectionMaxPrinciple}Solution Using The Maximum Principle}

The stochastic control problem \textbf{OCP-I} can be formulated as a
deterministic optimization problem (and hence also an open-loop optimization
problem) using probability densities permitting the application of
variational techniques. While this gives us no particular advantage over the
DPP approach in providing a solution to \textbf{OCP-I} , understanding this
formulation is useful for a broader class of problems.

First we note that for the cascade MDP of Section \ref%
{SectionCascadeMDPModel1} the transition matrices $P(z,u)$ in (\ref%
{CascadeMDPTranProb}) can be written in open-loop form 
\begin{equation}
P_{i}=A_{i}+\sum_{j=1}^{p}B_{ij}D_{ij}  \label{GenrDiagForm}
\end{equation}

where $D_{ij}(u)$ is a diagonal matrix with diagonal $%
[u_{j}(e_{i},e_{1})...u_{j}(e_{i},e_{n})]^{T}$,  $A_{i}=A_{0}+A(e_{i})$,~ $%
B_{ij}=B_{j}(e_{i})$ and $P_{i}(u)=P(e_{j},u)$. Next, we can write evolution
of the marginal probabilities $c_{i}(t)=\Pr \{z(t)=e_{i}\}$ and joint
probabilities $p_{ij}(t)=$ $\Pr \{z(t)=e_{i},x(t)=e_{j}\}$ as the state
equations%
\begin{eqnarray}
\dot{c} &=&Cc  \label{StateEqns} \\
\dot{p}_{i} &=&P_{i}p_{i}+p_{i}\dot{c}_{i}/c_{i}  \notag
\end{eqnarray}

where $p_{i}(t)$ is the vector $[p_{i1}(t)~p_{i2}(t)...p_{in}(t)]^{T},~c(t)$ %
the vector $[c_{1}(t)...c_{r}(t)]^{T}$.

Now we are ready to show the equivalence of the variational approach to the
Bellman approach for the problem (\textbf{OCP-I})

\begin{theorem}
\label{TheoremMaxP}Let $z\in \{e_{i}\}_{i=1}^{r}$ $,x\in
\{e_{i}\}_{i=1}^{n}~ $and $C,A_{0},A(z),B_{i}(z)$ be as defined in Section %
\ref{SectionCascadeMDPModel1} and let the $n-$vectors $p_{i}(t)$ (for $%
i=1..r $)$~ $and $r-$vector $c(t)$ satisfy (\ref{StateEqns}) with $P_{i}$ as
defined by (\ref{GenrDiagForm}) for $D_{ij}$ arbitrary time dependent
diagonal $p-$matrices, considered as controls. Then the minimization of 
\begin{equation*}
\int_{0}^{T}\sum_{i=1}^{r}(e_{i}^{T}\mathbf{L}^{T}p_{i}+\sum_{j=1}^{p}e^{T}%
\psi (D_{ij})p_{i})dt+\sum_{i=1}^{r}e_{i}^{T}\Phi ^{T}p_{i}(T)~
\end{equation*}
for $n\times r$ real matrices $\mathbf{L}$ and $\Phi $ and (Borel)
measurable function $\psi :\mathbb{R}^{p}\rightarrow \mathbb{R}$, subject to
the constraints that $P_{i}\in \mathbb{P}^{n}$ results in a choice for the $%
k^{th}$ element of $D_{ij}$ which equals the optimal control $u_{j}^{\ast
}(e_{i,}e_{k})$ of Theorem \ref{TheoremBellman}.
\end{theorem}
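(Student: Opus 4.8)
The plan is to apply Pontryagin's minimum principle to the deterministic problem and show that its first-order necessary conditions coincide, state by state, with the pointwise minimization produced by the dynamic programming principle of Theorem \ref{TheoremBellman}. I would introduce an $r$-vector costate $q(t)$ for the marginal equation $\dot c=Cc$ and $n$-vector costates $\pi_i(t)$ for each joint equation $\dot p_i=P_ip_i+p_i\dot c_i/c_i$ of (\ref{StateEqns}), and form the Hamiltonian
\[
\mathcal{H}=\sum_{i=1}^{r}\Big(e_i^{T}\mathbf{L}^{T}p_i+\sum_{j=1}^{p}\mathbf{1}^{T}\psi(D_{ij})p_i+\pi_i^{T}\big(P_ip_i+p_i\dot c_i/c_i\big)\Big)+q^{T}Cc,
\]
where $\mathbf{1}$ is the all-ones vector and $P_i=A_i+\sum_jB_{ij}D_{ij}$ as in (\ref{GenrDiagForm}). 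The adjoint equations are $\dot\pi_i=-\partial\mathcal{H}/\partial p_i$ and $\dot q=-\partial\mathcal{H}/\partial c$ with transversality $\pi_i(T)=\Phi e_i$ and $q(T)=0$. The decisive structural observation is that the controls $D_{ij}$ enter $\mathcal{H}$ only through the $p_i$-dynamics, so the control minimization involves the $\pi_i$ alone and never $q$.

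The key step is to identify the adjoint $\pi_i(t)$ with the $i$th column $K(t)e_i$ of the minimum-return matrix $K$ of Theorem \ref{TheoremBellman}. The terminal data already agree, since $K(T)=\Phi(T)$ gives $K(T)e_i=\Phi e_i=\pi_i(T)$. I would then check that the adjoint ODE for $\pi_i$ reproduces the $i$th column of (\ref{BellmanPartiallyDec}): the running cost contributes the $-Le_i$ term, the term $P_i^{T}\pi_i=A_0^{T}\pi_i+A^{T}(e_i)\pi_i+\sum_jD_{ij}B_{ij}^{T}\pi_i$ supplies the $-A_0^{T}K$, $-A^{T}(z)K$ and control terms column-wise, while the coupling of the joint density to the marginal dynamics supplies exactly the remaining $-KCe_i$ term (concretely, under the identification the $z$-coupling yields $\sum_mC_{mi}\pi_m=K(Ce_i)=KCe_i$). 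Matching these against (\ref{BellmanPartiallyDec}) column by column, together with the agreeing terminal data and the uniqueness of the solution of (\ref{BellmanPartiallyDec}) borrowed from Theorem \ref{TheoremBellman} and \cite{BROCKOPT}, pins down $\pi_i(t)=K(t)e_i$ along the optimal trajectory.

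With the identification in hand, minimizing $\mathcal{H}$ over the admissible $D_{ij}$ (those keeping $P_i\in\mathbb{P}^n$, the same feasibility that defines $\mathcal{U}$) reduces to the Bellman minimization. Since each $D_{ij}$ is diagonal with $k$th entry $u_j(e_i,e_k)$, the control-dependent part of $\mathcal{H}$ equals $\sum_{i,k}p_{ik}\big(\sum_ju_j(e_i,e_k)(B_{ij}^{T}\pi_i)_k+\psi(u(e_i,e_k))\big)$, which separates over the states $(e_i,e_k)$ with nonnegative weights $p_{ik}$. Minimizing each scalar bracket is exactly the pointwise problem $\min_u\big(\sum_ju_j\,e_k^{T}B_{ij}^{T}Ke_i+\psi(u)\big)$ solved by $u_j^{\ast}(e_i,e_k)$ in Theorem \ref{TheoremBellman}, using $(B_{ij}^{T}\pi_i)_k=e_k^{T}B_{ij}^{T}Ke_i$. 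Hence the $k$th diagonal element of the optimal $D_{ij}$ equals $u_j^{\ast}(e_i,e_k)$ wherever $p_{ik}>0$, and on the null set $p_{ik}=0$ the control is immaterial. A measurable selection of the minimizer and the convexity in the controls (as in \cite{BROCKOPT}) upgrade the necessary conditions to sufficiency.

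The main obstacle is the adjoint–value identification of the middle step: one must verify that the adjoint system reproduces (\ref{BellmanPartiallyDec}) \emph{term by term}, and in particular that the coupling of the joint density $p_i$ to the marginal $c$ in (\ref{StateEqns}) generates precisely the $-KC$ backward term for $z$. A convenient safeguard is that only $B_{ij}^{T}\pi_i$ enters the control, and since every $B_{ij}=B_j(e_i)$ has columns summing to zero we have $B_{ij}^{T}\mathbf{1}=0$; thus $\pi_i$ need only match $Ke_i$ modulo the all-ones direction, so any scalar normalization discrepancy arising from the $\dot c_i/c_i$ coupling is annihilated and cannot affect the optimal control. The remaining work is routine: justifying the pointwise (measurable) minimization of the integrated Hamiltonian and invoking the uniqueness already established in Theorem \ref{TheoremBellman} to conclude that the variational optimum and the dynamic-programming optimum coincide.
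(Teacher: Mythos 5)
Your proposal is correct and takes essentially the same route as the paper's own proof: form the Hamiltonian for the density dynamics (\ref{StateEqns}), observe that since the $D_{ij}$ are diagonal and $p_{ik}\geq 0$ the minimization of $H$ separates over the states $(e_i,e_k)$, and identify the costate $\pi_i$ (the paper's $q_i$) with the $i$th column of $K$ so that the adjoint necessary condition reproduces (\ref{BellmanPartiallyDec}) column by column, forcing the $k$th diagonal entry of the optimal $D_{ij}$ to be $u_j^{\ast}(e_i,e_k)$. The only divergence is in bookkeeping of the $z$-coupling: the paper assumes $z(t)$ stationary so the $\dot c_i/c_i$ term drops and relegates the $-KC$ correspondence to a parenthetical remark, whereas you derive $\sum_m C_{mi}\pi_m = KCe_i$ directly --- note this is the adjoint contribution of the exact joint dynamics $\dot p_i = P_i p_i + \sum_l C_{il}p_l$ rather than of the $p_i\dot c_i/c_i$ form literally written in (\ref{StateEqns}) (whose adjoint contribution is the scalar multiple $(\dot c_i/c_i)\pi_i$, so your $B_{ij}^{T}\mathbf{1}=0$ safeguard would not repair it either, the discrepancy being along $\pi_i$ rather than the all-ones direction), so your computation is in effect the cleaner version of the step the paper waves through.
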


\begin{proof}
Using (\ref{GenrDiagForm}) the Hamiltonian $H$ and costate $(q,s)$ for state
equations (\ref{StateEqns}) for the minimization problem of the theorem
become, assuming normality and stationarity of $z(t)$, are%
\begin{eqnarray}
H
&=&\sum_{i=1}^{r}q_{i}^{T}A_{i}+%
\sum_{j=1}^{p}q_{i}^{T}B_{ij}D_{ij}p_{i}+e^{T}\psi
(D_{ij})p_{i}+l_{i}^{T}p_{i}  \notag \\
\dot{q}_{i}
&=&-(A_{i}+\sum_{j=1}^{p}B_{ij}D_{ij})^{T}q_{i}-l_{i}^{T}-\sum_{j=1}^{p}\psi
(D_{ij})e  \label{Hamiltonian}
\end{eqnarray}
where $l_{i}\equiv e_{i}^{T}L^{T},~\phi _{i}\equiv e_{i}^{T}\Phi ^{T}$ and $%
\psi (D_{ij})\equiv \psi (u_{j}(e_{i},x))$. Introducing minimization of $H$ %
with respect to $D_{ij}$ we see that it is achieved by minimizing $%
\sum_{i=1}^{r}\sum_{j=1}^{p}q_{i}^{T}B_{ij}D_{ij}p_{i}+e^{T}\psi
(D_{ij})p_{i}.$ Noting that $\psi (D_{ij})$ is also diagonal, simple
observation shows that the above is precisely minimized when $%
\sum_{j=1}^{p}(D_{ij}^{T}B_{ij}^{T}q_{i}+\psi (D_{ij})e)$ is minimized for
each $i$ (as $p_{ik}\geq 0$). The maximum priniciple thus gives the
following necessary condition for optimality: 
\begin{equation*}
\dot{q}_{i}=-A_{i}^{T}q_{i}-l_{i}^{T}-\min_{D_{ij}}(%
\sum_{j=1}^{p}D_{ij}^{T}B_{ij}^{T}q_{i}+\psi (D_{ij})e)
\end{equation*}%
We note that since stationarity of $z(t)$ was assumed, the above equation
exactly corresponds to each column of the Bellman matrix equation (\ref%
{BellmanPartiallyDec}) for $K,$ of Theorem \ref{TheoremBellman}. (Note that
the result is valid for non-stationary $z(t)$ as well and algebraic
manipulation shows $(\dot{c}_{i}/c_{i})$ terms to correspond to the $-KC$%
~term in \ref{BellmanPartiallyDec}).
\end{proof}

\begin{remark}
Note that in the variational formulation, linearity of the Hamiltonian in
the state variable $p$ for the problem \textbf{OCP-I~}resulted in complete
decoupling of the state and costate equations $q_{i}$ and $p_{i}$ thereby
permitting an explicit solution identical to that of \ref%
{BellmanPartiallyDec}.~\ However, if we we restrict the set of admissible
controls in to allow feedback on the state $x$ but not on the state $z$ in
problem \textbf{OCP-I~}we get a non-trivial variant problem, a \textit{%
partial feedback} problem, in which case, one can see that in the analysis
in Theorem \ref{TheoremMaxP} the minimization of $\sum_{i=1}^{r}%
\sum_{j=1}^{p}q_{i}^{T}B_{ij}D_{ij}p_{i}+e^{T}\psi (D_{ij})p_{i}$, in
general, does not lead to a decoupling of the state and costate equations.
\end{remark}

\subsection{\label{SecPOEx}Example: A Self-Financing Portfolio}

A toy model of portfolio optimization is discussed as an example or problem~%
\textbf{OCP-I}. Appendix B has a short background on portfolio theory and
also discusses a variety of \textbf{OCP-I }problems on different portfolio
models. Consider the problem of maximizing the expected terminal value $v(T)~
$of the portfolio for a fixed horizon $T$ for the self-financing portfolio
model of Section \ref{SecSFPModel}. With $x,z,u,d,V,A,B,D$ as defined
thereof, we wish to maximize the performance measure given by 
\begin{equation*}
\eta (u,d)=\mathbb{E(}v(T))
\end{equation*}%
Using Theorem \ref{TheoremBellman} we see the solution to this \textbf{OCP-I}
problem is obtained by solving the matrix equation with boundary condition $%
K(T)=-V$ 
\begin{equation}
\dot{K}=-KC-A^{T}(z)K+\frac{1}{2}\left\vert K^{T}B(z)\right\vert +\frac{1}{2}%
\left\vert K^{T}D(z)\right\vert   \label{SelfFnSoln}
\end{equation}%
with the optimum performance measure and controls (in feedback form) given
by 
\begin{eqnarray*}
\eta ^{\ast } &=&z^{T}(0)K^{T}(0)x(0) \\
u^{\ast }(t,z,x) &=&-\frac{1}{2}\sgn(z^{T}K(t)^{T}B(z)x) \\
d^{\ast }(t,z,x) &=&-\frac{1}{2}\sgn(z^{T}K(t)^{T}D(z)x)
\end{eqnarray*}%
with $K(t)$ being the solution to (\ref{SelfFnSoln}). Some solutions for (%
\ref{SelfFnSoln}) and corresponding optimal controls are plotted for $T=15$ %
\ is shown in Figure \ref{fig:FigAssetEx_SF} for various initial conditions
(mixes of the assets in the portfolio initially). Results also show that as $%
T\rightarrow \infty ,$the value of $\eta ^{\ast }$ approaches a constant
value of $1.24$ regardless of the initial values $z(0),x(0).$ That is the
maximal possible terminal value for the portfolio is $1.24.$ However, we do
not see a steady state constant value for the optimal controls $u^{\ast
}(z,x)$ and $d^{\ast }(z,x)$ and that near the portfolio expiration
date, more vigorous buying/selling activity is necessary. \ If the matrix $C$
were reducible or time-varying in our example, multiple steady-states are
possible as $T\rightarrow \infty $ and the initial trading activity will be
more significant.

\begin{figure}[tbp]
\centering
\includegraphics[width=0.8\linewidth]{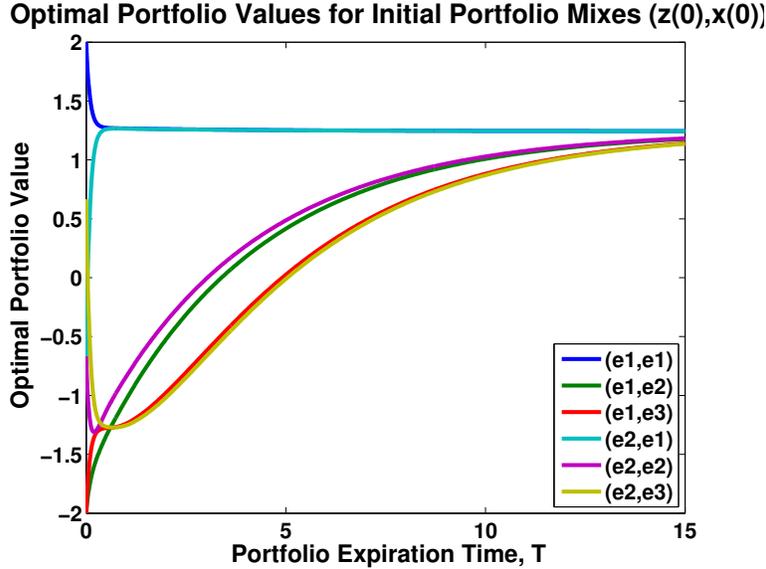}
\caption[Self Financing Portfolio]{Minimum Return Function $k(t,z,x)$,
for the self-financing portfolio with maximal terminal wealth, shown for various
values of $z,x$ specified as $(e_{i},e_{j})$ vectors.%
}
\label{fig:FigAssetEx_SF}
\end{figure}\

%

Two instances of simulation of application of the above optimal controls are
shown in Figures \ref{fig:FigSimulation_1_FF} and \ref{fig:FigSimulation_2_FF}.
 In the first case we see that one is able to
benefit from $x(0)$ being in state $e_{2}$ which is the one that corresponds
to maximal value of the portfolio, but in which state no trading can take
place. We can hold that value and it more than offsets any devaluation due
to stock price decline since the stock is more probable to have a higher
price than lower. In the second simulation, we are unable to achieve state $%
x=e_{2}$, which happens because this state can be attained only in the less
probable case of a lower stock price. However, the optimal strategy still
trues to maximize the portfolio value by forcing state $x=e_{3}$ when the
price is lower, but since this state is less likely, we need only switch to
this sell-out strategy for a small portion of the time. The final value is
most sensitive to the final trading activity. The optimal strategy allows
us to maximize the portfolio value in all cases, and on the average, gives
us the best value.

\begin{figure}[!ht]
\centering
\includegraphics[width=0.9\linewidth]{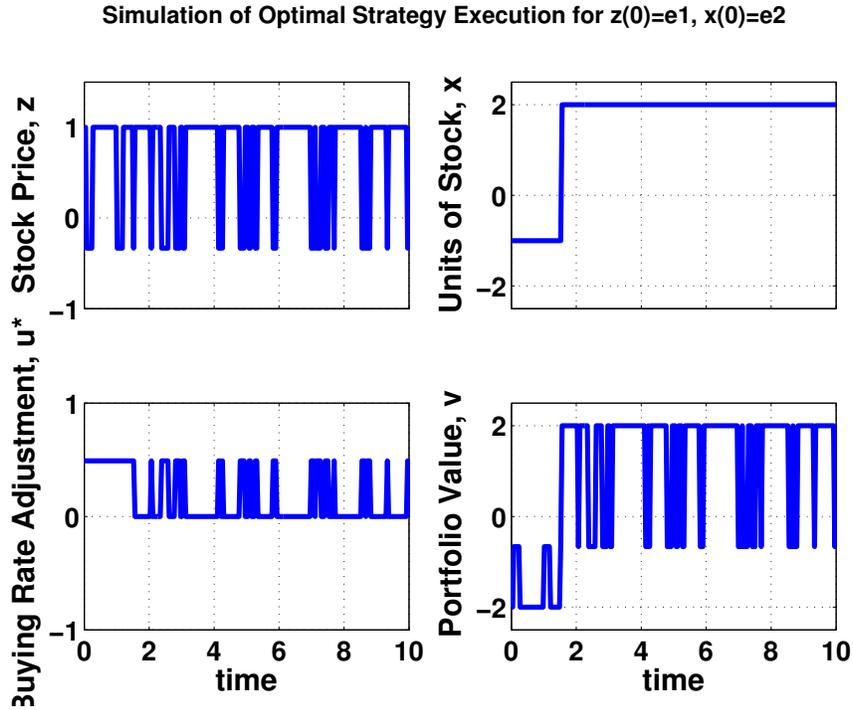}
\caption[Self Financing Portfolio]{ Simulation 1 of optimal control for self-financing portfolio  in 
\ref{SecPOEx}}.%
\label{fig:FigSimulation_1_FF}
\end{figure}

\begin{figure}[!ht]
\centering
\includegraphics[width=0.9\linewidth]{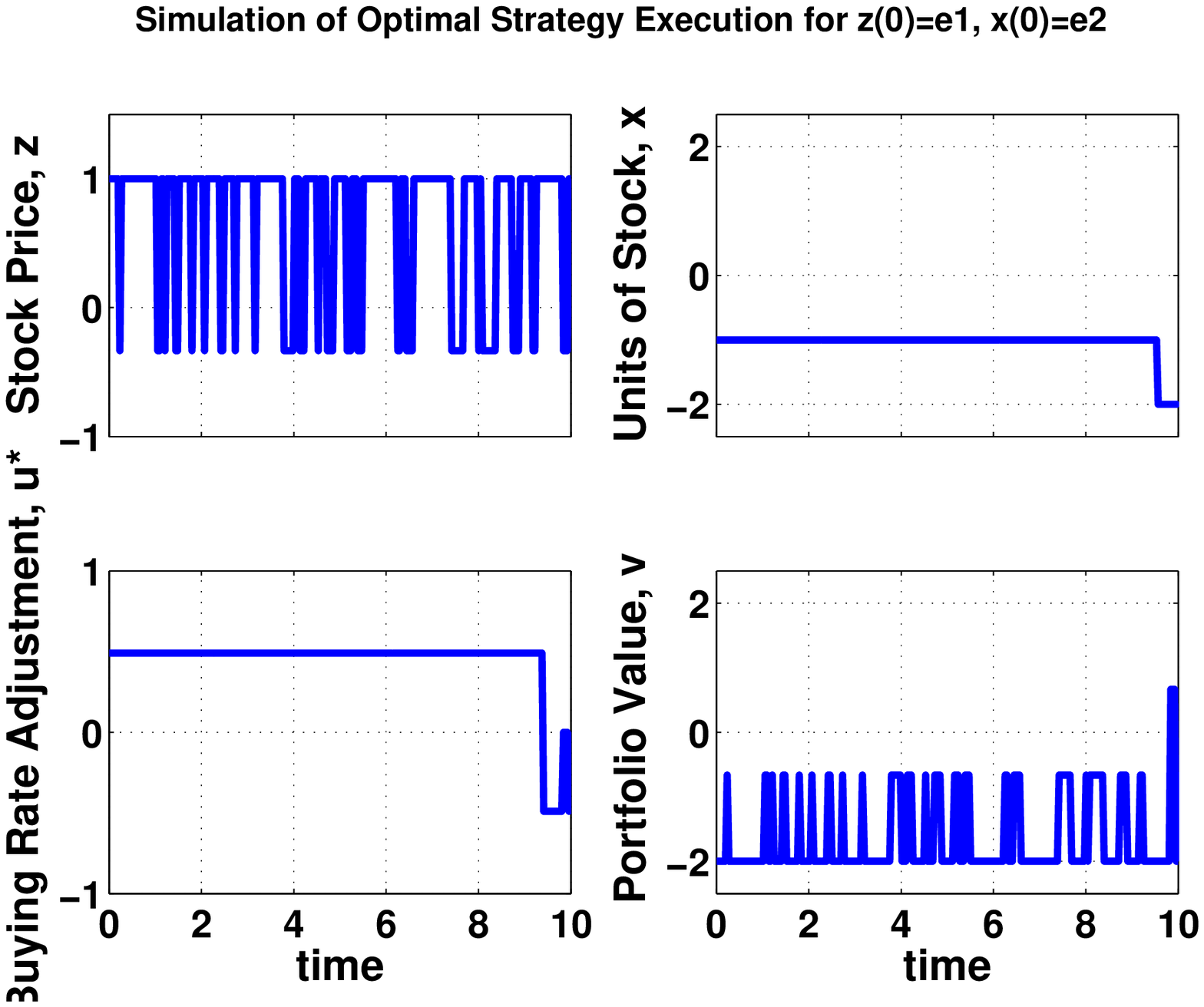}
\caption[Self Financing Portfolio]{ Simulation 2 of optimal control for self-financing portfolio  in 
\ref{SecPOEx}}.%
\label{fig:FigSimulation_2_FF}
\end{figure}

Our approach of using a cascade model is a more realistic model for
portfolio as it is dynamic hedged. Traditional Gauss-Markov models assume
continuous hedging which is unrealistic. Our model can be easily extended to
include features such as transaction costs, etc. Furthermore, by modeling it
as a cascade, we have a computationally scalable solution. The computation
time as a function of the dimensionality of the weight s$z_{t}$ for a
decomposed representation and fully coupled representation (using Bellman
equations on the joint process directly) for various expiration times are
shown in Figure \ref{fig:FigComplexity}. We see that the solution on a coupled state space grows
exponentially with the dimensionality of $z_{t}$ whereas our solution scales
linearly.

\begin{figure}[!ht]
\centering
\subfigure[$T=1$]{
\includegraphics[width=0.3\linewidth]{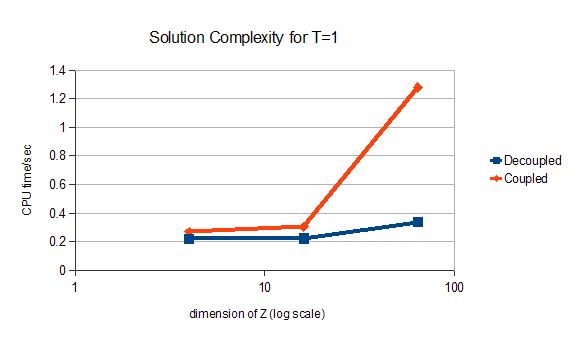}
}
\subfigure[$T=10$]{
\includegraphics[width=0.3\linewidth]{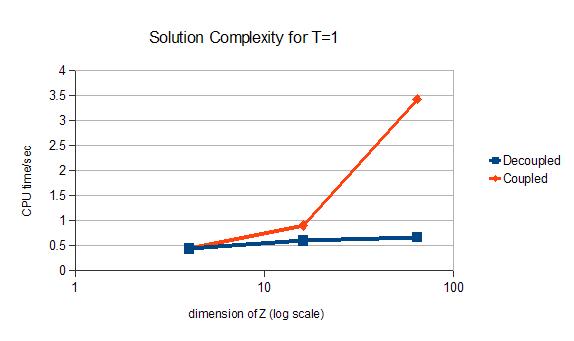}
}
\subfigure[$T=100$]{
\includegraphics[width=0.3\linewidth]{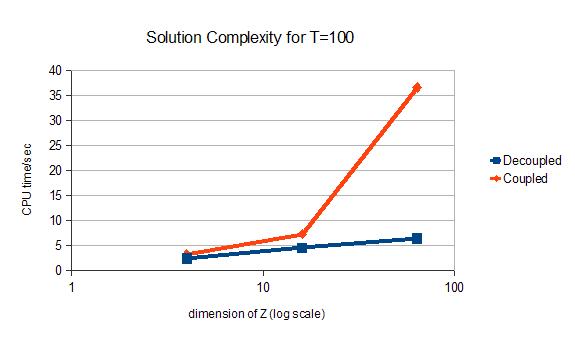}
}
\caption[Computational Complexity]{CPU time in seconds for asset/bond self-financing toy problem when the number of states of $z_{t}$ (possible price combinations) increases, for different expiration times $T=1,10,100$. The decoupled solution scales with dimensionality whereas the coupled solution does not. }
\label{fig:FigComplexity}
\end{figure}

\section{ Optimal Control Problem Type II: Diversification Maximization}

The second category of cascade MDP\ problems are those of optimization of
functionals that are non-linear with respect to the probabilities $p_{ij}$,
such as portfolio diversification or fairness of choices in decision making.
As alluded to in the introduction, these problems are often singular in the
sense that the dynamic programming or maximum principle fail to give a
solution, and we will explore this through an example. In general, this
class of problems falls in the category where the performance measure to be
optimized is a non-linear function of expectation. That is, for a non-linear 
$f$ we want to minimize 
\begin{equation}
\eta (u)=\int_{0}^{T}f(\mathbb{E}(l(t,z_{t},x_{t},u)))dt  \label{NonStdOC}
\end{equation}%
where $l(.)$ is some loss function. For example, $\eta
(u)=\int_{0}^{T}\left( f(x)-\mathbb{E}f(x)\right) ^{2}dt$ minimizes the
variance of function $f$ and $\eta (u)=\int_{0}^{T}\left( \mathbb{E}x_{1}-%
\mathbb{E}x_{2}\right) ^{2}dt$ specifies adherence to a particular state.

\subsection{\label{OptControlOCPIIDef}Quadratic Problem with No Control Cost}

We formulate a problem that is a particular case of (\ref{NonStdOC}). Given a cascade  $(z_{t},x_{t},\mathcal{U})$ with model $(A,B_{i})$ we
define Problem \textbf{OCP-II} as the optimal control problem 
\begin{equation}
\eta ^{\ast }=\min_{u\in \mathcal{U}}\lim_{T\rightarrow \infty }\frac{1}{T}%
\int_{0}^{T}(p_{t}^{T}Qp_{t}+m^{T}p_{t})dt
\end{equation}%
where $Q\geq 0,$ $m$ is a vector and $p_{t}$ is the marginal probability
vector of $x_{t}.$ We note that the stochastic dynamic programming principle
is not directly applicable a problem of the form (\ref{NonStdOC}), and
application of variational techniques at best gives us a two-point boundary
value problem. Even if we did not have a cascade, the functional of the
above form can result in \emph{singular arcs. }To see this heuristically,
consider the optimal control problem on a non-cascade defined as 
\begin{eqnarray}
\dot{p} &=&(A+\sum_{i}u_{i}B_{i})p  \label{NonCasProb} \\
\eta  &=&\lim_{T\rightarrow \infty }\frac{1}{T}\int_{0}^{T}p^{T}Qpdt  \notag
\end{eqnarray}%
with $\mathcal{U=\{}a_{i}\leq u_{i}\leq b_{i}\}.$ The costate and
Hamiltonian equations for this problem are 
\begin{eqnarray*}
\dot{q} &=&-2Qp-A^{T}q-\left( \sum_{i}u_{i}B_{i}^{T}\right) q \\
H &=&q^{T}Ap+\sum_{i}q^{T}u_{i}B_{i}p+p^{T}Qp
\end{eqnarray*}%
so that 
\begin{equation*}
{\small \frac{\partial H}{\partial u_{i}}=q^{T}B_{i}p}
\end{equation*}%
If $q^{T}B_{i}p=0$ for any finite time interval, then we have a singular arc
so that the Hamiltonian provides no useful information. Characterizing
solutions to such singular optimal control problems is notoriously hard. To
see how we can get a singular arc in the case above, consider a
simplification of (\ref{NonCasProb}) with $A+uB$ of the form $%
A+uf_{i}e_{j}^{T}$ for $u\in \lbrack a,b].$For example,  
\begin{equation*}
B={\small 
\begin{bmatrix}
0 & 0 & 0 \\ 
0 & -1 & 0 \\ 
0 & 1 & 0%
\end{bmatrix}%
=%
\begin{bmatrix}
0 \\ 
-1 \\ 
1%
\end{bmatrix}%
\begin{bmatrix}
0 \\ 
1 \\ 
0%
\end{bmatrix}%
^{T}=(-e_{2}+e_{3})e_{2}^{T}=f_{3}e_{2}^{T}~}
\end{equation*}%
In steady state, one can show that 
\begin{equation*}
p(u)=p(0)-\frac{(e_{j}^{T}p(0))u~A^{+}f_{i}}{1+u(e_{j}^{T}A^{+}f_{i})}
\end{equation*}%
where {\small $A^{+}$ is the Moore-Penrose inverse of $A.$ If $\eta (u)%
\mathbf{~}$\textbf{were }of the form $c^{T}p$ (the "usual" stochastic
control case), then $u^{\ast }$ lies on the boundary of  }$\mathcal{U}$. In
the case $\eta (u)$ is of the form $p^{T}Qp$ (i.e. the non-linear stochastic
control case), then it is possible that $u^{\ast }$ is in the interior of $%
\mathcal{U}$. ~For the class of constant controls, if $u^{\ast }\in Int(%
\mathcal{U)}$ then one can show by computation that the corresponding $%
(p(u),q(u)$ correspond to singular arcs. The above argument heuristically
shows why the quadratic control problem OCP-II can be singular.

For a non-cascade, however, the steady state optimal control problem reduces
to a non-functional optimization problem, i.e. that of minimizing $%
p^{T}(u)Qp(u)+m^{T}p(u).$ However, for a cascade, $\eta $ depends on the
marginal probabilities of $x_{t}$ but it is the conditional probabilities $%
x_{t}|z_{t}$ that evolve in accordance with $\dot{p}=Ap.$ In general, it is
difficult to get an expression for $p(u)$ of the steady state marginal
probabilities of $x_{t}$ but we will below consider a special \emph{%
diagonalizable }case where $p(u_{i})$ satisfy $A_{i}p=0$ where $p$
represents the marginal probability vector of $x_{t}.$

\subsection{\label{CatsDilemma}"Cat's Dilemma" Revisited}

In the model presented in Section \ref{SectionExampleCatsDlm}, the
combination of dishes available is random and the cat needs to optimize its
selection strategy so as to get a balance of all three dishes. If we assume $%
s=f=1$ we note that $\mathbb{E}(x_{4})\rightarrow \frac{1}{2}$ as $%
t\rightarrow \infty $ regardless of $z$ or $u.$ Hence, the best balance of
foods is achieved when each of $\mathbb{E}(x_{1}),\mathbb{E}(x_{2}),\mathbb{E%
}(x_{3})$ are as close as possible to $\frac{1}{6}.$ Hence the problem can
be defined as one of minimizing the performance measure 
\begin{equation}
\eta (u)=\lim_{T\rightarrow \infty }\frac{1}{T}\int_{0}^{T}\left\Vert 
\mathbb{E}(Qx(t,z,u))-m\right\Vert ^{2}dt  \label{CatOptFn}
\end{equation}
where $\left\Vert {}\right\Vert $ is the Euclidean norm on $\mathbb{R}^{4}$
and $Q,m$ defined as 
\begin{equation}
Q=%
\begin{pmatrix}
1 & 0 & 0 & 0 \\ 
0 & 1 & 0 & 0 \\ 
0 & 0 & 1 & 0 \\ 
0 & 0 & 0 & 0%
\end{pmatrix}%
;~m=\frac{1}{6}%
\begin{pmatrix}
1 \\ 
1 \\ 
1 \\ 
0%
\end{pmatrix}
\label{CatOptFnQmDef}
\end{equation}

\subsection{\label{SectionBinaryDecisionProblem}A Binary Decision Problem}

The cat's dilemma can be generalized to a class of problems where one needs
to make a choice given two possibilities at a time, so as to maximize the
diversity of outcomes as a result of one's choices. If the total number of
outcomes is $N$ then binary possibilities are represented by the Markov
process $z(t)\in \{e_{i}\}_{i=1}^{r}$ , $r=\frac{1}{2}N(N-1)$, having
generator $C$ where, and the outcomes by the cascade Markov process $x(t)\in
\{e_{i}\}_{i=1}^{n}$,$~n=N+1,$ with transition matrix as in (\ref%
{CatTransMatrixDef}) with $r$ and $n$ dimensional analogs for $A_{0},A(z)$
and $B(z)$. The admissibility set $\mathcal{U}$ of controls $u(t,z,x)$ is
the set of functions $u:\mathbb{R}^{+}\times \{e_{i}\}_{i=1}^{r}\times
\{e_{i}\}_{i=1}^{n}\rightarrow \lbrack \frac{-1}{2},\frac{1}{2}]$ such that
for each $z$ and $t$ the matrix $P(t,z,u)$ is stochastic. (We can generalize
to the situation where, for example, $B(e_{ij})=(e_{i}-e_{j})^{T}e_{n}$ %
where\ $e_{ij}$ is choice $(i,j)$ etc.). This cascade model has simpler
representations as follows. We will assume $s=f=1.$

\begin{proposition}
\label{ProposRepDecProblem}The model described in Section \ref%
{SectionBinaryDecisionProblem} has the following properties.

\begin{enumerate}
\item (Open loop w.r.t $x$ )~For all $t,z$~the dynamics of $x(t)$ do not
depend on the controls $u(t,z,x)$ for all $x\neq e_{n}$

\item (Open loop representation w.r.t $z$ ) There exist rank one matrices $%
A_{j},B_{j}$ of the form $f_{j}e_{n}^{T}$ and (open-loop) controls $u_{j}:%
\mathbb{R}^{+}\rightarrow \lbrack a,b]$ ~for $j=1..r$ such that the
transition matrix (\ref{CatTransMatrixDef})~can be written as 
\begin{equation}
P(t,e_{j},u)=A_{0}+A_{j}+B_{j}u_{j}(t),~\text{for }j=1..r
\label{OpenLoopZCat}
\end{equation}

\item (Triangular Representation) The marginal probabilities $c_{j}(t)=\Pr
\{z(t)=e_{j}\}$ and $p_{k}(t)=\Pr \{x(t)=e_{k}\}$ satisfy the triangular
equations%
\begin{eqnarray}
\dot{c}(t) &=&Cc(t)  \label{DecoupledCat} \\
\dot{p}(t) &=&(A_{0}+\sum_{j=1}^{r}c_{j}(A_{j}+B_{j}u_{j}(t)))p  \notag
\end{eqnarray}
where $p(t)=[p_{1}(t)~...p_{n}(t)~]^{T}$ and $%
c(t)=[c_{1}(t)~...c_{r}(t)~]^{T}$
\end{enumerate}
\end{proposition}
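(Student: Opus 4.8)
The plan is to establish the three claims in sequence, since each builds on the structural features of the binary decision model inherited from the Cat's Dilemma of Section~\ref{SectionExampleCatsDlm}. The key observation throughout is that every food-combination state $z = e_j$ offers exactly two of the $N$ outcomes, so the transition matrices $A(e_j)$ and $B(e_j)$ only move probability between the ``Unfed'' state $e_n$ and the two available outcome states, while the ``satisfaction'' dynamics encoded in $A_0$ always return probability from any eaten state back to $e_n$ at rate $s=1$.

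First I would prove claim~(1). Inspecting the generalized analogs of the matrices $A_0, A(z), B(z)$ from~(\ref{CatTransMatrixDef}), one sees that the control $u$ enters only through $B(z)$, and that $B(e_j)$ (like $B(e_1),B(e_2),B(e_3)$ in the example) has nonzero entries only in its last column. Concretely, $B(e_j) = f_j e_n^T$ for some vector $f_j$, so $B(e_j)x = 0$ whenever $x = e_k$ with $k \neq n$. Hence the infinitesimal transition out of any non-Unfed state is governed entirely by $A_0 + A(e_j)$, which is control-free; the controls only bias the branching \emph{out of} the Unfed state. This immediately gives the ``open loop with respect to $x$'' property.

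Next, claim~(2) is essentially a restatement of the rank-one structure just used. I would exhibit $A_j := A(e_j)$ and $B_j := B(e_j)$ and verify directly that both have the form $f_j e_n^T$: since $A(e_j)$ also has nonzero entries confined to its last column (moving feeding probability $f$ out of Unfed and into the two available dishes), it too is rank one with the same null structure. Because, by~(1), the control acts nontrivially only on the coordinate $e_n^T x$, the feedback control $u(t,e_j,x)$ may be replaced without loss by an open-loop control $u_j(t) := u(t,e_j,e_n)$ taking values in $[a,b] = [-\tfrac12,\tfrac12]$, yielding~(\ref{OpenLoopZCat}). The only point requiring care is confirming that this substitution preserves admissibility, i.e.\ that $P(t,e_j,u)$ remains stochastic for each $z$ and $t$; this follows since the column sums of $B_j$ vanish and $A_0 + A_j$ is already a generator.

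Finally, claim~(3) follows by marginalizing the joint dynamics. The equation $\dot c = Cc$ is immediate since $z_t$ is an autonomous (non-cascade) Markov process with generator $C$. For the marginal $p(t)$ of $x_t$, I would write $p_k(t) = \sum_j p_{jk}(t)$ where $p_{jk} = \Pr\{z=e_j, x=e_k\}$, and use the conditional evolution $\dot p_j = P(e_j,u)p_j + p_j \dot c_j / c_j$ from~(\ref{StateEqns}); summing over $j$ and using that $\sum_j \dot c_j = 0$. The crucial simplification enabling the \emph{triangular} (rather than fully coupled) form is that, by~(2), $P(e_j,u) = A_0 + A_j + B_j u_j$ with $A_0$ \emph{independent of} $j$: the $j$-dependence lives entirely in the rank-one pieces $A_j + B_j u_j$, so summation pulls the weights $c_j$ through to give $\dot p = (A_0 + \sum_j c_j(A_j + B_j u_j))p$. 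The main obstacle I anticipate is justifying this last step rigorously, since in general the marginal of $x_t$ does not obey a closed equation in $p$ alone; one must verify that the rank-one $f_j e_n^T$ structure makes the relevant conditional expectations collapse so that only the scalar marginals $c_j(t)$ couple into the $p$-dynamics, producing a genuinely triangular (one-way coupled) system rather than a fully joint one.
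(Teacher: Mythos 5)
Your treatments of claims (1) and (2) are correct and essentially identical to the paper's: both rest on $B(e_j)=f_je_n^T$, so only the control value at $x=e_n$ affects the dynamics, and your open-loop parametrization $u_j(t):=u(t,e_j,e_n)$ is exactly the paper's $u_j(t)\equiv u(t,e_j)$, with the admissibility check a routine verification.

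The gap is in claim (3), and it is precisely the step you flag as ``the main obstacle'' without closing it. Two things go wrong. First, your summation step is invalid as written: summing $\dot p_j=P(e_j,u)p_j+p_j\dot c_j/c_j$ over $j$ leaves the residual term $\sum_j p_j\dot c_j/c_j$, which does \emph{not} vanish merely because $\sum_j\dot c_j=0$ (the vectors $p_j$ differ with $j$); it vanishes only under the product form $p_j=c_jp$, which is part of what must be proved. Second, and more fundamentally, the key identity $\sum_j(A_j+u_jB_j)p_j=\sum_jc_j(A_j+u_jB_j)p$ is not a consequence of the rank-one structure alone, as your last paragraph suggests. Since $A_j+u_jB_j=g_je_n^T$ annihilates $e_k$ for $k\neq n$, the rank-one structure reduces the coupling to a single coordinate, so the identity is equivalent to the scalar factorization $p_{jn}=c_jp_n$ for all $j$, i.e.\ to independence of the event $\{x_t=e_n\}$ from $z_t$. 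That is a genuine probabilistic fact needing proof, and it comes from the structure of $A_0$, not of $A_j,B_j$: with $s=f=1$ the conditional probability $q(t)=\Pr\{x_t=e_n\mid z\text{-path}\}$ obeys the autonomous ODE $\dot q=1-2q$ regardless of the $z$-trajectory and the controls (the $\pm u$ terms cancel in the total feeding rate out of $e_n$, and every fed state returns to $e_n$ at the common rate $s$), so $q$ is deterministic and $p_{jn}=c_jp_n$ follows. This is exactly the ingredient the paper supplies when it invokes Corollary \ref{CorrSufficientDiagz}(3), noting both that $(A_j+u_jB_j)e_k=0$ for $k=1..(n-1)$ \emph{and} that the form of $A_0$ makes $\Pr(x=e_n)$ independent of the state of $z$; without the second half, the marginal $p$ does not satisfy a closed equation and the triangular representation does not follow.
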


\begin{proof}
Since $B(e_{j})$ is of rank one and of the form $f_{j}e_{n}^{T}$ \ where $%
f_{j}$ is a column vector, the dynamics of $x(t)$ depend only the value of
control in state $x=e_{n}$ and $z.$ Thus w.lo.g write $u(t,z,x)$ as $u(t,z)$
instead. Open loop representation (\ref{OpenLoopZCat}) w.r.t $z$ is made
possible by using the parametrization $u_{j}(t)\equiv u(t,e_{j})$ with $%
B_{j}=B(e_{j})$ and $A_{j}=A(e_{j}).$ The triangular representation follows
from Corollary \ref{CorrSufficientDiagz} (Appendix B) since $%
(A_{j}+u_{j}B_{j})e_{k}=0$ for $j=1..r$ , $k=1..(n-1)$ and that the form of $%
A_{0}$ in (\ref{CatTransMatrixDef}) above above implies that $\Pr (x=e_{n})$ %
is independent of $\Pr (z=e_{j})$ for $j=1..r.$
\end{proof}

The performance measure to maximize diversification of outcomes is (\ref%
{CatOptFn}) which can be written using notation introduced in Proposition %
\ref{ProposRepDecProblem} (with $p(t)$ explicitly written as $p(t,u)$
instead), 
\begin{equation}
\eta (u)=\lim_{T\rightarrow \infty }\frac{1}{T}%
\int_{0}^{T}(Qp(t,u)-m)^{T}(Qp(t,u)-m)dt  \label{CatPerfDet}
\end{equation}%
Two classes of optimal control problems are discussed.

\subsection{ Problem 1 : The Steady State Case}

We assume $z(t)$ to be stationary\footnote{%
If the generator $C$ of $z(t)$ is irreducible then eventually $z(t)$ will
attain a time invariant distribution and hence the solution is no different.}%
. Using the Given a fixed value $T_{0}$ admissibility set $\mathcal{U}%
_{T_{0}}$ is restricted to the set of functions $u_{j}(t),j=1..r$ that are
constant for $t\geq T_{0}$,~as per representation defined in (\ref%
{OpenLoopZCat}). With $\eta (u)$ is in (\ref{CatPerfDet}), the optimization
problem is 
\begin{equation}
\eta ^{\ast }=\min_{u\in \mathcal{U}_{T_{0}}}\eta (u),~\ u^{\ast }=\arg
\min_{u\in \mathcal{U}_{T_{0}}}\eta (u)  \label{SSCatOpt}
\end{equation}
We will call this problem \textbf{OCP-IIS}

\begin{theorem}
\label{CatQuadProg}The solution to the optimization problem \textbf{OCP-IIS}
is given by the solution to the quadratic programming problem%
\begin{equation*}
\eta ^{\ast }=\min_{u}\frac{1}{2}u^{T}Hu+f^{T}u+k~~,\text{subject~to}-\frac{1%
}{2}e\leq Iu\leq \frac{1}{2}e
\end{equation*}%
where $u\in \mathbb{R}^{3},H=\frac{1}{2}A^{T}A,~f=A^{T}b,~k=b^{T}b$ with
matrix $A$ and vector $b$  depending on $(c_{1},c_{2}...c_{r})$~only, and
if $u^{0}$ is the minimizing value for the above, then any function $u(t)$ %
such that $u(t)=u^{0}$ for $t\geq T_{0}$ is an optimal control $u^{\ast }$.
\end{theorem}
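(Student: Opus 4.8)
The plan is to exploit the structure of the performance measure (\ref{CatPerfDet}): it is a long-run (Ces\`aro) time average, and admissible controls in $\mathcal{U}_{T_0}$ are by definition \emph{constant} for $t\ge T_0$. This collapses the functional optimization to a finite-dimensional one over the single frozen vector $u := u(t)|_{t\ge T_0}\in[-\tfrac12,\tfrac12]^r$ (with $r=3$ for the cat). First I would reduce to the steady state. Since $z(t)$ is stationary the weights $c_j$ in the triangular system (\ref{DecoupledCat}) are constant, so for $t\ge T_0$ the marginal $p(t)$ solves the time-homogeneous linear equation $\dot p = M(u)p$ with constant generator $M(u)=A_0+\sum_{j=1}^r c_j\bigl(A_j+B_j u_j\bigr)$. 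As $M(u)$ is an infinitesimal generator of a finite-state chain it has a stationary vector $p^\infty(u)$ with $p(t)\to p^\infty(u)$ exponentially. The integrand $\|Qp(t)-m\|^2$ is bounded and convergent, so the transient on $[0,T_0]$ contributes $O(T_0)/T\to 0$ and the Ces\`aro limit equals the steady-state value $\eta(u)=\|Qp^\infty(u)-m\|^2$. Thus OCP-IIS is equivalent to minimizing $\|Qp^\infty(u)-m\|^2$ over the box $u\in[-\tfrac12,\tfrac12]^r$.

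The second and central step is to show that $Qp^\infty(u)-m$ is \emph{affine} in $u$. Here I would use the rank-one form $A_j=f_j^{(A)}e_n^T$, $B_j=f_j^{(B)}e_n^T$ from Proposition \ref{ProposRepDecProblem}, which gives $A_j p=f_j^{(A)}p_n$ and $B_j p=f_j^{(B)}p_n$ with $p_n=\Pr\{x=e_n\}$. By Proposition \ref{ProposRepDecProblem}(3) and the discussion in Section \ref{CatsDilemma} (with $s=f=1$), the marginal $p_n$ decouples from both $z$ and $u$ and settles to $p_n^\infty=\tfrac12$, \emph{independent of} $u$. The stationarity condition $M(u)p^\infty=0$ then reduces to $A_0 p^\infty = -p_n^\infty\sum_{j=1}^r c_j\bigl(f_j^{(A)}+u_j f_j^{(B)}\bigr)$, whose right-hand side is affine in $u$. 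Inverting $A_0$ on the complement of its kernel (using the explicit triangular form of $A_0$ in (\ref{CatTransMatrixDef})) together with the normalization $e^T p^\infty=1$ yields $p^\infty(u)$, and hence $Qp^\infty(u)-m$, as an affine map $Au+b$ whose coefficients depend only on $(c_1,\dots,c_r)$.

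The third step assembles the quadratic program. Substituting $Qp^\infty(u)-m=Au+b$ into the reduced objective gives $\eta(u)=(Au+b)^T(Au+b)$, a convex quadratic since $A^TA$ is positive semidefinite; expanding and matching terms produces the objective $\tfrac12 u^T H u+f^T u+k$ with $H$ proportional to $A^TA$, $f$ proportional to $A^Tb$, and $k=b^Tb$ (an overall positive scaling does not affect the $\arg\min$). The admissibility bounds $u_j\in[-\tfrac12,\tfrac12]$ are exactly $-\tfrac12 e\le Iu\le\tfrac12 e$, so OCP-IIS becomes the stated box-constrained QP. Its minimizer $u^0$, together with any tail-constant extension $u(t)=u^0$ for $t\ge T_0$, is optimal, because Step 1 shows that only the tail affects $\eta$.

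The main obstacle is Step 2. In general the steady-state marginal is a rational (M\"obius-type) function of $u$ --- indeed the paper's own formula $p(u)=p(0)-\frac{(e_j^T p(0))\,u\,A^+f_i}{1+u(e_j^T A^+f_i)}$ is \emph{not} affine --- so the entire theorem hinges on verifying that the special structure here removes the denominator. The crucial facts to nail down are that $p_n^\infty=\tfrac12$ is genuinely $u$-independent (which is what kills the $u$-dependent denominator) and that $A_0$ is invertible on the relevant subspace so the remaining coordinates solve linearly; I would confirm both directly from the explicit matrices of Section \ref{SectionExampleCatsDlm}.
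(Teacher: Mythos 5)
Your proof is correct, and its overall skeleton coincides with the paper's: split the Ces\`aro average into a transient part on $[0,T_{0}]$ (which vanishes after dividing by $T$, since the integrand is bounded) and a tail governed by the invariant distribution, then show $Qp^{\infty}(u)-m$ is affine in the frozen control and expand the quadratic. The genuine difference lies in how the central affineness step is justified. The paper's proof invokes the closed form $p(u)=(ee^{T}+X^{T}(u)X(u))^{-1}e$ and then simply asserts $(Qp(u)-m)=\widetilde{p}$ with $\widetilde{p}=\frac{1}{2}Au+b$, deferring the actual verification to the explicit computation of $X(u)$ and $p(u)$ in Remark \ref{CatSSSoln}; as you correctly observe, that inverse formula alone would generically yield a rational (M\"obius-type) dependence on $u$, so the paper's proof is incomplete without the Remark. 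You instead derive affineness structurally: the rank-one forms $A_{j}=f_{j}e_{n}^{T}$, $B_{j}=f_{j}e_{n}^{T}$ reduce stationarity to $A_{0}p^{\infty}=-p_{n}^{\infty}\sum_{j}c_{j}(f_{j}^{(A)}+u_{j}f_{j}^{(B)})$, and the triangular form of $A_{0}$ together with the $u$-independent value $p_{n}^{\infty}=\frac{1}{2}$ eliminates the denominator and makes the first $n-1$ coordinates manifestly affine in $u$. This buys an explanation of \emph{why} the QP reduction works, and it extends immediately to the general $N$-outcome binary-decision model of Section \ref{SectionBinaryDecisionProblem}, whereas the paper's route rests on a hand computation specific to the cat example. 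Two minor cautions: both you and the paper assert irreducibility (hence uniqueness of the invariant law and exponential convergence) without checking degenerate parameters ($c_{j}=0$ or boundary $u$), where irreducibility can fail and one should argue via a single closed recurrent class instead; and to reproduce the theorem's exact constants $H=\frac{1}{2}A^{T}A$, $f=A^{T}b$, $k=b^{T}b$ (needed for the stated value of $\eta^{\ast}$, not just the minimizer) you should retain the factor $\frac{1}{2}$ in the affine map, i.e.\ $\widetilde{p}=\frac{1}{2}Au+b$, rather than absorbing it as a ``proportional'' scaling.
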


\begin{proof}
The infinitesimal generator $X(u)$ for $x(t)$ defined in (\ref{DecoupledCat}%
) is irreducible. Writing the unique time invariant solution to as $p(u)$ a
routine calculation shows that 
\begin{equation}
p(u)=(ee^{T}+X^{T}(u)X(u))^{-1}e  \label{SSProbMP}
\end{equation}%
For $u\in \mathcal{U}_{T_{0}}$ we can write 
\begin{eqnarray*}
\eta (u) &=&\lim_{T\rightarrow \infty }\frac{1}{T}%
(\int_{0}^{T_{0}}(Qp(t,u)-m)^{T}(Qp(t,u)-m))dt \\
& & +\lim_{T\rightarrow \infty }%
\frac{1}{T}\int_{T_{0}}^{T}(Qp(u)-m)^{T}(Qp(u)-m))dt \\
&=&(Qp(u)-m)^{T}(Qp(u)-m)
\end{eqnarray*}%
Since the first integrand $(Qp(t,u)-m)^{T}(Qp(t,u)-m))$ is bounded and the
second integrand $(Qp(u)-m)^{T}(Qp(u)-m))$ is independent of $t$. \ Using (%
\ref{SSProbMP}) write $(Qp(u)-m)^{T}(Qp(u)-m)=\widetilde{p}^{T}\widetilde{p}%
~ $where $\widetilde{p}=\frac{1}{2}Au+B$ \ and $A,b$ are per the statement.
Expanding $\widetilde{p}^{T}\widetilde{p}$ we get the quadratic programming
equation.
\end{proof}

\begin{claim}
The quadratic programming equation (Theorem \ref{CatQuadProg}) has a
solution $\eta ^{\ast }=0$ if and only if the corresponding minimizing value 
$u^{0}$ lies in the interior of the hypercube $[-\frac{1}{2},\frac{1}{2}%
]^{r} $
\end{claim}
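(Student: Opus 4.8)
The plan is to work entirely with the affine residual $\tilde p(u)=\tfrac12 Au+b\in\mathbb R^{3}$ produced in the proof of Theorem~\ref{CatQuadProg}, for which the objective is simply $\eta(u)=\tilde p(u)^{T}\tilde p(u)=\|\tilde p(u)\|^{2}\ge 0$. The whole argument rests on two structural identities that I would verify first from the explicit $A,b$ of the cat model: every column of $A$ sums to zero, i.e.\ $e^{T}A=0$, where $e$ is the all-ones vector, and $e^{T}b=0$. The second is transparent probabilistically, since $e^{T}\tilde p(u)=\sum_{k=1}^{3}(p_k-\tfrac16)=(1-p_4)-\tfrac12=0$ because the stationary mass $p_4$ of the \emph{Unfed} state is identically $\tfrac12$ for every admissible $u$; combined with $e^{T}A=0$ this forces $e^{T}b=0$. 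Together these give $\operatorname{range}(A)\subseteq e^{\perp}$, and since $\ker A=\operatorname{span}(v_0)$ with $v_0=(c_2c_3,\,c_1c_3,\,c_1c_2)^{T}$ is one-dimensional for an irreducible (hence strictly positive $c_i$) price chain $z$, in fact $\operatorname{range}(A)=e^{\perp}$. Consequently $\eta$ attains its \emph{unconstrained} minimum value $0$ on the whole line $L=\{u:\tfrac12 Au=-b\}$, which is nonempty precisely because $-b\in e^{\perp}=\operatorname{range}(A)$.

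For the direction ``interior minimizer $\Rightarrow\eta^{*}=0$'', suppose the minimizer $u^{0}$ of the (convex) quadratic $\tfrac12 u^{T}Hu+f^{T}u+k$ lies in the open cube. Then the gradient vanishes there: $Hu^{0}+f=A^{T}\!\left(\tfrac12 Au^{0}+b\right)=A^{T}\tilde p(u^{0})=0$. Hence $\tilde p(u^{0})\in\ker A^{T}=(\operatorname{range}A)^{\perp}=(e^{\perp})^{\perp}=\operatorname{span}(e)$. But by the first paragraph $\tilde p(u^{0})\in e^{\perp}$ as well, and $\operatorname{span}(e)\cap e^{\perp}=\{0\}$, so $\tilde p(u^{0})=0$ and therefore $\eta^{*}=\|\tilde p(u^{0})\|^{2}=0$.

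For the converse ``$\eta^{*}=0\Rightarrow$ interior minimizer'', $\eta^{*}=0$ forces the minimizing set to be exactly $L\cap[-\tfrac12,\tfrac12]^{3}$, a segment of the line $L$ whose direction is $v_0$. The key point is that $v_0$ has \emph{all} coordinates strictly positive, so $L$ is parallel to no facet $\{u_j=\pm\tfrac12\}$ of the cube; consequently $L$ cannot be contained in the boundary, and whenever $L\cap[-\tfrac12,\tfrac12]^{3}$ contains two distinct points its relative interior lies in the open cube, producing an interior minimizer.

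The one place that genuinely needs care, and the step I expect to be the main obstacle, is the degenerate possibility in this converse that $L$ meets the closed cube in a \emph{single} boundary point (a tangency), which would give $\eta^{*}=0$ with the only minimizer on the boundary and would make the ``only if'' fail as literally stated. I would control it by the same positivity of $v_0$: a single-point intersection requires $u^{0}$ to have one coordinate at $+\tfrac12$ and another at $-\tfrac12$ simultaneously, and substituting the explicit parametrization $u_2=\tau/c_2,\ u_1=(\tau+\tfrac16-\tfrac{c_3}{2})/c_1,\ u_3=(\tau-\tfrac16+\tfrac{c_1}{2})/c_3$ of $L$ into the endpoint conditions shows the feasible $\tau$-interval collapses to a point only on an exceptional lower-dimensional locus of price distributions $(c_1,c_2,c_3)$. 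Away from this non-generic locus the intersection is a nondegenerate segment and an interior minimizer exists, which together with the first two paragraphs closes the equivalence.
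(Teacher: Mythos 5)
Your argument is necessarily a different route from the paper's, because the paper gives this Claim no proof at all: it is asserted bare, supported only by the explicit $A$, $b$ of Remark~\ref{CatSSSoln} and the brute-force case enumeration (Cases 1--3) that follows. Your structural approach checks out and is the more illuminating one. The identities $e^{T}A=0$ and $e^{T}b=0$ are correct (the latter because $c_{1}+c_{2}+c_{3}=1$, equivalently because $p_{4}\equiv\tfrac12$ in steady state); for strictly positive $c_{i}$ one verifies $\operatorname{rank}A=2$, $\ker A=\operatorname{span}(c_{2}c_{3},c_{1}c_{3},c_{1}c_{2})^{T}$, and hence $\operatorname{range}A=e^{\perp}$, so your forward direction (interior stationary point forces $\widetilde p(u^{0})\in\operatorname{span}(e)\cap e^{\perp}=\{0\}$) is airtight, and your parametrization of the line $L$ does satisfy $Au=-2b$ (I checked rows 1 and 2; row 3 is the $e^{T}$ combination). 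Relative to the paper's explicit case analysis, your proof buys the ``interior $\Rightarrow\eta^{*}=0$'' direction from convexity alone and explains geometrically why zero-residual minimizers are interior when the feasible segment is nondegenerate. Note the restriction to all $c_{i}>0$ is essential: if two of the $c_{i}$ vanish, $A$ drops to rank one, $\ker A^{T}$ becomes two-dimensional, and the forward direction fails (e.g.\ $c=(1,0,0)$ has interior minimizers with $\eta^{*}>0$), so your irreducibility hypothesis is doing real work.

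The tangency degeneracy you flag as the main obstacle is not hypothetical, and this is where your proposal stops short: you leave open whether the exceptional locus is empty, when in fact it is nonempty, so the Claim as literally stated is \emph{false} there and your genericity caveat cannot be removed. Take $c=(\tfrac16,\tfrac23,\tfrac16)$: the second row of $\tfrac12 Au=-b$ reads $u_{1}-u_{3}=1$, which inside $[-\tfrac12,\tfrac12]^{3}$ forces $u_{1}=\tfrac12$, $u_{3}=-\tfrac12$, and then $u_{2}=0$; one checks $\widetilde p\bigl(\tfrac12,0,-\tfrac12\bigr)=0$. Thus $L$ meets the cube in the single boundary point $(\tfrac12,0,-\tfrac12)$, giving $\eta^{*}=0$ with the unique minimizer on the boundary --- exactly your one-coordinate-at-$+\tfrac12$, one-at-$-\tfrac12$ scenario. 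This sits on the transition $\max_{j}c_{j}=\tfrac23$ between the paper's Case 1 and Case 3 (and shows the paper's Case 1, which includes $c_{j}=\tfrac23$ with the claim of interior solution lines, is itself wrong at that boundary value). The clean fix is to exhibit this point and restate the Claim with a strict hypothesis, e.g.\ $\max_{j}c_{j}<\tfrac23$, or as two separate statements: $\eta^{*}=0$ iff $L$ meets the closed cube, and the minimizing set contains interior points except at tangency.
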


\begin{remark}
\label{CatSSSoln}Theorem \ref{CatQuadProg} can also be proved using explicit
computation for the Cat's Dilemma, with $X(u)$ and $p(u)$ given by 
\begin{eqnarray*}
X(u)&=&\allowbreak 
\begin{bmatrix}
-1 & 0 & 0 & c_{3}\left( u_{3}+\frac{1}{2}\right) -c_{2}\left( u_{2}-\frac{1%
}{2}\right) \\ 
0 & -1 & 0 & c_{1}\left( u_{1}+\frac{1}{2}\right) -c_{3}\left( u_{3}-\frac{1%
}{2}\right) \\ 
0 & 0 & -1 & c_{2}\left( u_{2}+\frac{1}{2}\right) -c_{1}\left( u_{1}-\frac{1%
}{2}\right) \\ 
1 & 1 & 1 & -1%
\end{bmatrix}%
\\
p(u)&=&\frac{1}{2}%
\begin{bmatrix}
c_{3}\left( u_{3}+\frac{1}{2}\right) -c_{2}\left( u_{2}-\frac{1}{2}\right)
\\ 
c_{1}\left( u_{1}+\frac{1}{2}\right) -c_{3}\left( u_{3}-\frac{1}{2}\right)
\\ 
c_{2}\left( u_{2}+\frac{1}{2}\right) -c_{1}\left( u_{1}-\frac{1}{2}\right)%
\end{bmatrix}%
\end{eqnarray*}
\end{remark}

and $A,b$ thus being computed as 
\begin{equation*}
A=%
\begin{bmatrix}
0 & -c_{2} & c_{3} \\ 
c_{1} & 0 & -c_{3} \\ 
-c_{1} & c_{2} & 0%
\end{bmatrix}%
,~b=%
\begin{bmatrix}
-\frac{1}{6}+\frac{1}{4}(c_{3}+c_{2}) \\ 
-\frac{1}{6}+\frac{1}{4}(c_{1}+c_{3}) \\ 
-\frac{1}{6}+\frac{1}{4}(c_{2}+c_{1})%
\end{bmatrix}%
\end{equation*}

\begin{remark}
We can solve the quadratic programming explicitly. The solutions $u^{0}\in 
\mathcal{C}$ where $\mathcal{C}$ is the closed cube $[-\frac{1}{2},\frac{1}{2%
}]^{3}$. For the general case of dimensions $r$ and $n$ the results are
similar.
\end{remark}

\label{QuadSolnCatCase1}\textbf{Case 1:~}When $0<c_{j}\leq \frac{2}{3},$ $%
j=1..3.$In this case,$\eta ^{\ast }=0$ and optimal solutions $u^{0}$ are
given by the lines $u_{1}=\frac{1}{2c_{1}}\left( c_{2}+2c_{3}u_{3}+\frac{2}{3%
}\right) ,u_{2}=\frac{1}{2c_{2}}\left( -c_{1}+2c_{3}u_{3}+\frac{2}{3}\right)
$ in the interior of $\mathcal{C}$. \textbf{Case 2: }When $c_{j}\leq \frac{2%
}{3}$ for all $j,$ and $c_{j}=0$ for some $j,j=1..3.$In this case $\eta
^{\ast }=0$ and the solutions are given by, for example, in the case $\{$%
\textbf{\ }$c_{1}=0,c_{3}\leq \frac{2}{3}$ and $c_{2}\leq \frac{2}{3}\}$ the
set of lines $u_{3}=-\frac{1}{3c_{3}}+\frac{1}{2},u_{2}=\frac{1}{3c_{2}}-%
\frac{1}{2}$in the interior of $\mathcal{C}$ but parallel to the faces. 
\textbf{Case 3:~}When $\frac{2}{3}<c_{j}\leq 1$ for some $j.$ Since $H$ is
singular, several local minima may exist. However, the isolines of global
minima are attained along constant values of $c_{i}$ in the case of $\frac{2%
}{3}<c_{i}\leq 1$ and the minimal values increase from $0$ for $c_{i}=\frac{2%
}{3}$ to \ $0.0408$\ \ for $c_{i}=1.$ For example, if $c_{2}>0$ then at most
two global minima are attained at $u=$ $(0,-\frac{1}{2},\frac{1}{2})$ or $%
u=(0,\frac{1}{2},\frac{1}{2})$ \ i.e. on the edges of $\mathcal{C}$. If $%
c_{2}=0$ then the global minimum is attained on the line $u_{1}=0,u_{3}=%
\frac{1}{2}$

\subsection{Problem 2: The Time Varying Case}

Again, we assume $z(t)$ to be stationary. With the admissibility set $%
\mathcal{U}$ is set of functions $u_{j}(t),j=1..r$ such that $u_{j}(t)\in
\lbrack -\frac{1}{2},\frac{1}{2}].$ As per representation defined in (\ref%
{OpenLoopZCat}) and with $\eta (u)$ is in (\ref{CatPerfDet}) the problem is%
\begin{equation}
\eta ^{\ast }=\min_{u\in \mathcal{U}}\eta (u),~~\ u^{\ast }=\arg \min_{u\in 
\mathcal{U}}\eta (u)  \label{CatOptProblem}
\end{equation}
which we call \textbf{OCP-IIT}. In the cases where the steady state optimal
control lies in the interior of $\mathcal{U}$, these controls are also
optimal within the class of time-varying controls.

\begin{proposition}
In the cases described in example of Section \ref{CatSSSoln} where the
solution $u^{0}$ to the quadratic programming equation (Theorem \ref%
{CatQuadProg}) lies in the interior of the hypercube $[-\frac{1}{2},\frac{1}{%
2}]^{r}$ the solution defined in Proposition \ref{CatQuadProg} to \textbf{%
OCP-IIS} for any $T_{0}$ is also a solution to \textbf{OCP-IIT}.
\end{proposition}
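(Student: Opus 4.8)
The plan is to prove this by a simple sandwiching argument resting on two elementary facts: the nesting $\mathcal{U}_{T_0}\subseteq\mathcal{U}$ of the feasible sets, and the manifest non-negativity of the performance functional $\eta$. The point is that once the interior steady-state control drives the functional to its floor of zero, the larger time-varying problem cannot do any better, so the same control remains optimal.

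First I would note that the functional $\eta$ in (\ref{CatPerfDet}) is the \emph{same} in both problems; only the admissible set differs. Any control that is constant for $t\geq T_0$ and valued in $[-\tfrac{1}{2},\tfrac{1}{2}]^{r}$ is in particular an admissible time-varying control, so $\mathcal{U}_{T_0}\subseteq\mathcal{U}$ and therefore
\begin{equation*}
\eta^{*}_{\mathrm{IIT}}=\min_{u\in\mathcal{U}}\eta(u)\leq\min_{u\in\mathcal{U}_{T_0}}\eta(u)=\eta^{*}_{\mathrm{IIS}}.
\end{equation*}

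Next I would invoke the Claim: by hypothesis $u^{0}$ lies in the interior of the hypercube, whence $\eta^{*}_{\mathrm{IIS}}=0$. On the other hand, for any $u\in\mathcal{U}$ the integrand $\|Qp(t,u)-m\|^{2}=(Qp(t,u)-m)^{T}(Qp(t,u)-m)$ is non-negative at every $t$, so its long-run time average (equivalently its liminf, if one prefers not to presume existence of the limit) is non-negative; hence $\eta(u)\geq 0$ for all $u$ and $\eta^{*}_{\mathrm{IIT}}\geq 0$. Chaining these gives $0\leq\eta^{*}_{\mathrm{IIT}}\leq\eta^{*}_{\mathrm{IIS}}=0$, so both optimal values equal $0$.

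Finally I would check attainment. The \textbf{OCP-IIS}-optimal control $u(t)$ with $u(t)=u^{0}$ for $t\geq T_0$ belongs to $\mathcal{U}$, and the computation in the proof of Theorem \ref{CatQuadProg} already gives $\eta(u)=(Qp(u^{0})-m)^{T}(Qp(u^{0})-m)$, which vanishes precisely because the quadratic program attains its minimum $0$ at the interior point $u^{0}$ (so that $Qp(u^{0})=m$). Thus the very same control solves \textbf{OCP-IIT}. The only point requiring a word of care, and the mild technical obstacle, is whether the Cesàro average is a genuine limit rather than merely a liminf: this is harmless here, since on $[0,T_0]$ the integrand is bounded and contributes $O(1/T)$, while on $[T_0,\infty)$ the irreducible finite-state generator $X(u^{0})$ drives $p(t,u^{0})$ to its stationary value $p(u^{0})$ exponentially fast, so $\|Qp(t,u^{0})-m\|^{2}\to 0$ and its time average vanishes.
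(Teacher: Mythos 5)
Your proposal is correct and follows essentially the same route as the paper's own proof: the paper likewise argues that in the interior cases $\eta^{\ast}=0$ by Theorem \ref{CatQuadProg}, that $\eta\geq 0$ always, and hence that the constant control remains optimal within the larger class of time-varying controls. Your added care about attainment and about the Ces\`aro limit genuinely existing (via exponential convergence under the irreducible generator $X(u^{0})$) is a welcome tightening of a point the paper leaves implicit, but it does not change the argument's structure.
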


\begin{proof}
In the cases of the example of Section \ref{CatSSSoln} where the optimal
controls are in the interior, optimal performance measure is $\eta ^{\ast
}=0.$ \ Since the performance measure $\eta $ defined in (\ref{CatOptFn})
always satisfies $\eta \geq 0,$ thus in these cases a constant control is
also optimal within the class of time-varying controls. And this holds for
constant controls in the class $\mathcal{U}_{T_{0}}$ for any finite $T_{0}$ %
(and thus by no means unique).
\end{proof}

\subsection{Singularity Of Optimal Controls}

The problems in Section \ref{SectionBinaryDecisionProblem} belong to the
category of singular control, and an analysis of singularity of optimal
solutions presents a slightly more general approach to finding the solution
to the time-varying problem (\ref{CatOptProblem}) than the approach above.
For this problem, using the representation of Proposition \ref%
{ProposRepDecProblem}, the Hamiltonian, state and costate equations can be
written as 
\begin{eqnarray}
H
&=&(Qp-m)^{T}(Qp-m)+q^{T}(A_{0}+\sum_{j=1}^{r}c_{j}A_{j}+%
\sum_{j=1}^{r}c_{j}u_{j}B_{j})p  \label{CatMaxPrinc} \\
~~\ \dot{p} &=&(A_{0}+\sum_{j=1}^{r}c_{j}(A_{j}+B_{j}u_{j}(t)))p \\
\dot{q} &=&-2(Qp-m)-(A^{T}+\sum_{j=1}^{r}c_{j}A_{j}^{T})q-(%
\sum_{j=1}^{r}c_{j}u_{j}B_{j}^{T})q
\end{eqnarray}%
However, we see from (\ref{CatMaxPrinc}) that the costate and state
equations are no longer decoupled, and thus trajectories $(q,p)$ that
minimize the Hamiltonian can not simply be obtained by solving an equivalent
minimization of the individual costate/state equations. In fact, as shown
below, we have the case of \textbf{singular arcs}, that is, trajectories
(solutions) where $q^{T}B_{i}p~=0.$ Such trajectories fail to give a
minimization condition for $H$ with respect to $u_{i}.$In such cases, the
Maximum Principle at best remains a necessary condition failing to provide
the optimal solution. Controls $u_{i}$ such that the corresponding solutions 
$(p,q)$ to the state/costate equations form singular arcs will be called 
\textbf{singular controls}.

\begin{proposition}
For $t>T_{0}$, the solutions $u^{\ast }$to the optimal control problem 
\textbf{OCP-IIS }that lie in the interior of $\mathcal{U}$ are singular.
\end{proposition}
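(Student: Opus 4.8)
The plan is to show that any interior optimal control
$u^{\ast}$ for \textbf{OCP-IIS} forces the switching function
$q^{T}B_{j}p$ to vanish identically on $t>T_{0}$, which is precisely the
definition of a singular arc given in the preceding discussion. The
starting point is the characterization from Theorem \ref{CatQuadProg}:
on $[T_{0},\infty)$ the controls are constant, $u_{j}(t)=u_{j}^{0}$, and
$p(t)$ has reached the time-invariant distribution $p(u^{0})$ satisfying
$X(u^{0})p(u^{0})=0$, where $X(u)=A_{0}+\sum_{j}c_{j}(A_{j}+B_{j}u_{j})$.
Thus on this interval both $p$ and $u$ are constant, so from the state
equation $\dot{p}=X(u^{0})p=0$, which is consistent.

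First I would compute the costate $q$ on the steady-state arc. Since
$p$ is constant and $u^{0}$ is interior, the first-order (stationarity)
condition for the quadratic program must hold with no active constraint:
the gradient of $(Qp(u)-m)^{T}(Qp(u)-m)$ with respect to each $u_{j}$
vanishes at $u^{0}$. Using the explicit steady-state map
$p(u)=(ee^{T}+X^{T}(u)X(u))^{-1}e$ from (\ref{SSProbMP}), or more
directly the parametrization $\widetilde{p}=\tfrac{1}{2}Au+b$ of
Theorem \ref{CatQuadProg}, interiority gives
$A^{T}\widetilde{p}=0$, i.e. $A^{T}(Qp(u^{0})-m)=0$. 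The key step is to
translate this stationarity condition into the vanishing of the
switching function $q^{T}B_{j}p$. For this I would identify the costate
$q$ on $[T_{0},\infty)$ as the (bounded) solution of the adjoint
equation $\dot{q}=-2(Qp-m)-X^{T}(u^{0})q$; because $p$ is constant, the
relevant bounded/steady costate solves $X^{T}(u^{0})q=-2(Qp-m)$, and
$\partial H/\partial u_{j}=c_{j}\,q^{T}B_{j}p$. Substituting the adjoint
relation and using the special rank-one structure $B_{j}=f_{j}e_{n}^{T}$,
$A_{j}=f_{j}e_{n}^{T}$ from Proposition \ref{ProposRepDecProblem}, one
reduces $q^{T}B_{j}p$ to a multiple of the $j$th component of
$A^{T}(Qp(u^{0})-m)$, which is zero by interior stationarity.

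The main obstacle is the middle step: correctly tying the abstract
Pontryagin costate $q$ to the finite-dimensional gradient of the
quadratic program. The subtlety is that $\eta$ is an average-cost
($\lim_{T\to\infty}\tfrac{1}{T}\int_{0}^{T}$) functional, so the
costate is not determined by a terminal condition but by the requirement
that $q$ stay bounded as $t\to\infty$; I must argue that on the
steady-state arc the only admissible (bounded) costate is the particular
solution of $X^{T}(u^{0})q=-2(Qp-m)$, and that the singular (nullspace)
directions of $X^{T}(u^{0})$ do not contribute to $q^{T}B_{j}p$. Here
the irreducibility of $X(u)$ (already invoked in Theorem
\ref{CatQuadProg}) is what pins down $p$ up to the normalization
$e^{T}p=1$ and controls the rank of $X^{T}(u^{0})$, so that the
projection of the adjoint relation onto $\operatorname{span}\{B_{j}p\}$
is well defined. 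Once $q^{T}B_{j}p=0$ is established for every
$j=1,\dots,r$ on $t>T_{0}$, the definition of singular control stated
just above the proposition applies verbatim, and the interior optimal
controls are singular, completing the argument.
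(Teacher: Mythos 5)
Your proposal is correct and takes essentially the same route as the paper's proof: constant control and invariant $p$ on $t>T_{0}$, the stationary costate relation $X^{T}(u^{0})q=-2(Qp-m)$, and the rank-one structure $B_{j}=f_{j}e_{n}^{T}$ to reduce the switching functions $q^{T}B_{j}p$ to differences $q_{i}-q_{j}=2(p_{i}-p_{j})$. The only variation is the final step --- the paper substitutes $p_{i}=p_{j}=\frac{1}{2N}$ (interior optima achieve $\eta ^{\ast }=0$), while you invoke the first-order condition $A^{T}(Qp(u^{0})-m)=0$ of the quadratic program, which is equivalent componentwise (for $c_{j}\neq 0$, with $c_{j}=0$ trivial) and, via your boundedness discussion of the average-cost costate and the nullspace direction $e\in \ker X^{T}(u^{0})$ with $e^{T}B_{j}=0$, is if anything slightly more careful than the paper's terse assertion that $\dot{q}=0$.
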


\begin{proof}
As $T\rightarrow \infty ,~u^{\ast }$is a constant control and so\ $p$
reaches an invariant distribution. Since the optimal trajectory must satisfy
the state/costate equation, we see that $\dot{q}$ must be zero as well.
Thus, from (\ref{CatMaxPrinc}) we get by putting $X(u)=%
\sum_{j=1}^{r}(A_{0}+c_{j}A_{j}+c_{j}u_{j}B_{j})$ 
\begin{equation*}
-2(Qp-m)-X^{T}(u)q=0
\end{equation*}
Expanding the above for the first $(n-1)$ rows of $X^{T}(u)q$ we get the
equations $q_{n}-q_{i}=-2(p_{i}-\frac{1}{2N})$ for $i=1..n-1.$ These give us
the equations $q_{i}-q_{j}=2(p_{i}-p_{j})$ for $i,j=1..n-1.$ The singularity
conditions $q^{T}B_{i}p=0$ expand to, by putting in the steady value of $%
p(u),$ to $q_{i}-q_{j}=0$ for $i,j=1..n-1.$ Since $p_{i}=p_{j}=\frac{1}{2N}$
when $u^{\ast }$ is in the interior of $\mathcal{U}$ we see that the optimal
solutions are singular.
\end{proof}

Thus, in the steady state case, optimal trajectories are singular. We now
show that this is also the case for the time-varying case.

\begin{proposition}
\label{SingularHzero}For the problem \textbf{OCP-IIT}, the value of the
Hamiltonian on singular arcs is zero.
\end{proposition}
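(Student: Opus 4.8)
The plan is to exploit that \textbf{OCP-IIT} is an \emph{autonomous} problem, so that the Pontryagin Hamiltonian is conserved along any extremal, and then to fix the conserved value from the behaviour of the optimal trajectory as $t\to\infty$. First I would differentiate the Hamiltonian $H$ of (\ref{CatMaxPrinc}) along a trajectory and substitute the canonical relations $\dot p=\nabla_q H$ and $\dot q=-\nabla_p H$ that the state and costate equations realise: one checks directly that $\nabla_q H=(A_0+\sum_j c_jA_j+\sum_j c_ju_jB_j)p=\dot p$, and, using $Q^TQ=Q$ together with $Qm=m$, that $-\nabla_p H=-2(Qp-m)-(A_0^T+\sum_j c_jA_j^T)q-(\sum_j c_ju_jB_j^T)q$ is exactly the stated $\dot q$. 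The $p$- and $q$-contributions to $\frac{dH}{dt}$ then cancel, leaving $\frac{dH}{dt}=\sum_{j=1}^r \frac{\partial H}{\partial u_j}\dot u_j$. Since $H$ depends on $u_j$ only through the term $c_ju_j\,q^TB_jp$, the switching functions are $\frac{\partial H}{\partial u_j}=c_j\,q^TB_jp$, which vanish identically on a singular arc by definition; hence $\frac{dH}{dt}=0$ and $H$ is constant along the arc.

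It then remains to evaluate this constant. By the preceding proposition the singular interior controls are constant in $t$, and under such a control the irreducible chain $x_t$ settles onto its unique invariant distribution $p(u)$ with $p_i=\tfrac{1}{2N}$ for $i=1,\dots,n-1$ (the relation already extracted in that proof). With $Q=\mathrm{diag}(1,\dots,1,0)$ and $m=\tfrac{1}{2N}(1,\dots,1,0)^T$ this gives $Qp=m$, so the tracking term $(Qp-m)^T(Qp-m)$ vanishes; and since $p(u)$ is invariant, $\dot p=(A_0+\sum_j c_jA_j+\sum_j c_ju_jB_j)p=0$, so the coupling term $q^T\dot p$ of $H$ vanishes as well. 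Hence $H=0$ at the invariant distribution, and by the constancy just established $H\equiv 0$ along the whole singular arc.

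The delicate point is not the conservation identity, which is routine, but pinning the constant to the value computed at the invariant distribution and making the average-cost normalisation precise. For singular arcs that are themselves steady states (the interior constant controls of \textbf{OCP-IIS}) the evaluation is immediate, since $p$ already sits at $p(u)$; the only extra work is transporting $H=0$ backward along any transient singular segment feeding into this steady state, which the constancy of $H$ supplies. I would make the infinite-horizon setup rigorous by viewing \textbf{OCP-IIT} as the $T\to\infty$ limit of the finite-horizon problems on $[0,T]$ with free-endpoint transversality $q(T)=0$, so that the conserved value of $H$ is identified with the optimal average cost $\eta^{\ast}$; the case analysis following Theorem \ref{CatQuadProg} shows $\eta^{\ast}=0$ exactly when $u^{\ast}$ is interior, i.e.\ exactly in the singular regime, which closes the argument. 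The main obstacle is therefore to confirm that the singular arcs here are precisely these zero-cost (interior) ones, so that the limiting value of the Hamiltonian can legitimately be read off at the invariant distribution.
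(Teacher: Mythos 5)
Your proposal is correct in substance but takes a genuinely different route from the paper's proof. The paper argues by brute-force integration: taking $p(0)=e_{n}$ without loss of generality, it solves $\dot{p}_{n}=1-2p_{n}$ to get $p_{n}(t)=\frac{1}{2}(1+e^{-2t})$, expands the singularity conditions $q^{T}B_{j}p=0$ to $q_{i}=q_{j}$ for $i,j=1..n-1$, deduces $p_{i}=p_{j}$ from the costate equations together with $q_{i}(\infty)=0$, obtains the closed form $p_{i}(t)=\frac{1}{2N}(1-e^{-2t})$, solves the costate ODEs explicitly, and verifies by direct substitution that $H=0$ pointwise along the entire transient arc --- note that on this arc the tracking term $(Qp-m)^{T}(Qp-m)$ is \emph{nonzero} at every finite $t$ and is cancelled by the coupling term $q^{T}X(u)p$, which is exactly what the explicit computation certifies. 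You instead exploit the autonomous structure: conservation of $H$ on singular arcs via $\frac{dH}{dt}=\sum_{j}\frac{\partial H}{\partial u_{j}}\dot{u}_{j}=0$ (the switching functions $c_{j}q^{T}B_{j}p$ vanish by definition of singularity), followed by evaluation of the constant in the limit $t\to\infty$, where $Qp=m$ kills the tracking term and $X(u)p=0$, with $q(\infty)=0$, kills the coupling term. Your route is cleaner and avoids solving any ODEs in closed form; the paper's computation, by contrast, hands over the explicit singular trajectories $(p(t),q(t))$, which it then reuses in the proof of Proposition \ref{ProposSingularIsOpt}.

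One repair is needed in your evaluation step. You invoke ``the preceding proposition'' to claim that singular interior controls are constant in $t$, but that proposition asserts the converse direction (interior optimal controls of \textbf{OCP-IIS}, which are constant by construction of $\mathcal{U}_{T_{0}}$, are singular); it does not say singular controls are constant, and in general they need not be --- the interior optima form lines of solutions, so nonconstant singular controls exist. Fortunately constancy is not actually required for your limit argument: the singularity relations alone force $q_{i}=q_{j}$ for $i,j=1..n-1$, hence via the costate equations and $q(\infty)=0$ they force $p_{i}=p_{j}$; combined with the control-independent dynamics $\dot{p}_{n}=1-2p_{n}$, which give $p_{n}\to\frac{1}{2}$, this yields $Qp\to m$ and $X(u)p\to 0$ regardless of how $u$ varies along the arc. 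With that substitution your conservation-plus-limit argument closes without appeal to the steady-state proposition, and the final identification of the conserved value with the optimal average cost $\eta^{\ast}=0$ is consistent with, though not needed for, the conclusion.
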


\begin{proof}
The state/costate/Hamiltonian are given by (\ref{CatMaxPrinc}). Without loss
of generality, let $p(0)=e_{n}.$ The state equations can be solved
explicitly for $p_{n}$ using $\dot{p}_{n}=1-2p_{n}$ to yield $p_{n}(t)=\frac{%
1}{2}(1+e^{-2t}).$ Singular arcs satisfy $q^{T}B_{j}p=0$ which expands to $%
p_{n}(q_{i}-q_{j})=0$ for $i,j=1..(n-1)$ i.e. $q_{i}=q_{j}$ for $i,j=1..n-1.$
From $q_{i}(\infty )=0$ we get $\dot{q}_{i}=\dot{q}_{j}$ or $p_{i}=p_{j}$ %
for $i,j=1..n-1$ using the costate equation. Using $\sum_{i=1}^{n}p_{i}=1$
we get the solution $p_{i}(t)=\frac{1}{2N}(1-e^{-2t})$ for $i=1..(n-1).$ Now
plugging these into the costate equations we can explicitly solve for $q_{i}$%
, $i=1..n$ for terminal condition $q_{i}(\infty )=0.$ Omitting details,
plugging the solutions into the Hamiltonian, it can be readily seen that $%
H=0.$
\end{proof}

\begin{corollary}
The solutions $u^{\ast }$to the optimal control problem (\ref{CatOptProblem}%
), such that $\lim_{t\rightarrow \infty }$ $u^{\ast }(t)$ lies in the
interior of $\mathcal{U}$, are singular.
\end{corollary}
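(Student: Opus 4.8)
The plan is to apply the Maximum Principle as a necessary condition for optimality and to exploit that the Hamiltonian in (\ref{CatMaxPrinc}) is affine in each control component $u_i$. First I would compute the switching function: the only $u_i$-dependent term in (\ref{CatMaxPrinc}) is $c_i u_i\, q^T B_i p$, so $\partial H/\partial u_i = c_i\, q^T B_i p$. Because $H$ is linear in $u_i$ over the admissible interval $[-\tfrac12,\tfrac12]$, pointwise minimization of the Hamiltonian forces the bang-bang form $u_i^*(t) = -\tfrac12\,\sgn(c_i\, q^T B_i p)$ at every instant for which the switching function is nonzero. This is the same mechanism already used in the no-cost-of-control corollary of Theorem \ref{TheoremBellman}.

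Next I would invoke the hypothesis that $\lim_{t\to\infty} u^*(t)$ lies in the interior of $\mathcal{U}$. Since $z(t)$ is assumed stationary (and, by the footnote to Problem 2, irreducible) we have $c_i > 0$, so the bang-bang formula would pin $u_i^*(t)$ to a boundary value $\pm\tfrac12$ whenever $q^T B_i p \neq 0$; this is incompatible with $u_i^*(t)$ tending to an interior value. Hence the switching function $q^T B_i p$ must vanish along the tail of the optimal trajectory, which is exactly the defining condition for a singular arc. By the definition of singular control stated above, $u^*$ is therefore singular. This is consistent with Proposition \ref{SingularHzero}, on whose singular arcs $H=0$ and the explicit asymptotics give $p_i \to \tfrac{1}{2N}$ and $q_i - q_j \to 0$; the argument is the time-varying analogue of the steady-state proposition preceding Proposition \ref{SingularHzero}.

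The step I expect to be the main obstacle is passing from ``the control has an interior limit'' to ``the switching function vanishes on a genuine arc'' rather than merely at isolated instants, since a bang-bang control could in principle approach an interior value through accumulating switches (chattering). To rule this out I would appeal to the explicit asymptotic solutions obtained in the proof of Proposition \ref{SingularHzero}, where $p_i(t)$ and the differences $q_i(t)-q_j(t)$ approach their limits monotonically, so that $q^T B_i p \to 0$ without sign changes near $t=\infty$; this forces a true singular arc in the limit rather than an oscillatory boundary solution, completing the identification of $u^*$ as singular.
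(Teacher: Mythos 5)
Your proposal is correct in substance but takes a genuinely different route from the paper's proof. You argue locally via the switching function: since $H$ in (\ref{CatMaxPrinc}) is affine in each $u_{j}$ with $\partial H/\partial u_{j}=c_{j}\,q^{T}B_{j}p$, pointwise minimization forces boundary values wherever the switching function is nonzero, and the hypothesis that $\lim_{t\rightarrow \infty }u^{\ast }(t)$ exists and is interior makes $u^{\ast }(t)$ pointwise interior for all large $t$, so $q^{T}B_{j}p$ must vanish on an entire tail interval (given $c_{j}>0$, which your irreducibility remark covers; if some $c_{j}=0$ that control direction is degenerate anyway). The paper instead argues globally through conservation of the Hamiltonian: on the steady-state tail with interior control it computes $H=0$ (the running cost vanishes since $Qp=m$, and $X(u)p=0$ kills the adjoint term), invokes the absence of explicit $t$-dependence in $H$ to conclude $H$ is constant along the optimal trajectory, and then — using Proposition \ref{SingularHzero}, which identifies $H=0$ as the value on singular arcs — asserts that the \emph{entire} optimal trajectory is singular, not merely its tail. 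Your approach buys a more elementary and tighter justification of the tail statement, which already suffices for the corollary as worded (the definition of singular control only requires the trajectory to contain a singular arc); the paper's approach buys the stronger all-time conclusion, at the price of a looser final step, since $H\equiv 0$ by itself does not immediately force $q^{T}B_{i}p\equiv 0$ at early times.

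Two remarks on your final paragraph. First, the chattering scenario you worry about cannot arise under the hypothesis as stated: if $u^{\ast }(t)$ oscillated between $\pm \frac{1}{2}$ with accumulating switches, the pointwise limit $\lim_{t\rightarrow \infty }u^{\ast }(t)$ would fail to exist, contradicting the assumption that it exists and is interior — so the control is eventually interior pointwise and the switching function vanishes on a genuine interval with no further argument. Second, the fix you propose is circular: the explicit asymptotics $p_{i}\rightarrow \frac{1}{2N}$ and $q_{i}-q_{j}\rightarrow 0$ in Proposition \ref{SingularHzero} are derived \emph{assuming} the trajectory is already on a singular arc, so they cannot be used to establish that the arc is singular. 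Deleting that paragraph leaves your first two paragraphs as a complete and correct proof of the corollary as stated.
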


\begin{proof}
In steady state, we see that the optimal trajectories (for which $u$ is in
the interior of $\mathcal{U}$)~yields $H=0$ since $(Cp-m)^{T}(Cp-m)=0$ and $%
X(u)p=0$. From the Maximum Principle, this must be the minimizing value of $%
H$ and since there is no explicit dependence of $H$ on $t$ this must be the
value of $H$ on optimal trajectories at all times. Hence, singular
trajectories that satisfy the state/costate equations also minimize the
Hamiltonian and so the entire optimal trajectory is singular.
\end{proof}

Now we show that singular solutions are also optimal for the case when
optimal controls are in the interior of $\mathcal{U}.$

\begin{proposition}
\label{ProposSingularIsOpt}For the problem \textbf{OCP-IIT} the value of $%
\eta $ as defined in (\ref{CatPerfDet}) on singular arcs is zero.
\end{proposition}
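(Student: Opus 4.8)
The plan is to exploit the explicit singular-arc trajectory already extracted in the proof of Proposition \ref{SingularHzero} and to show that, along it, the integrand of (\ref{CatPerfDet}) decays exponentially, so that its Ces\`aro (time) average necessarily vanishes. The conceptual point I would stress is that a singular arc drives the marginal distribution $p(t)$ to the invariant distribution at which $Qp$ equals the target $m$ \emph{exactly}; the cost incurred is therefore purely a transient, contributing only a finite amount to $\int_0^T$, which is then annihilated by the averaging factor $1/T$.

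Concretely, I would start from the singular-arc solution of Proposition \ref{SingularHzero}: taking $p(0)=e_n$ without loss of generality, one has $p_i(t)=\frac{1}{2N}(1-e^{-2t})$ for $i=1,\dots,n-1$ and $p_n(t)=\frac12(1+e^{-2t})$. Since $m_i=\frac{1}{2N}$ for $i=1,\dots,n-1$ and $m_n=0$ by (\ref{CatOptFnQmDef}), while $Q$ annihilates the $n$th coordinate, the residual along the arc is
\begin{equation*}
(Qp(t)-m)_i=p_i(t)-\tfrac{1}{2N}=-\tfrac{1}{2N}e^{-2t}\quad(i=1,\dots,n-1),\qquad (Qp(t)-m)_n=0 .
\end{equation*}
Using $n-1=N$, the integrand would then collapse to
\begin{equation*}
(Qp(t)-m)^T(Qp(t)-m)=(n-1)\left(\tfrac{1}{2N}e^{-2t}\right)^2=\tfrac{1}{4N}e^{-4t}.
\end{equation*}

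The remaining step is a one-line estimate. Since $\int_0^T \frac{1}{4N}e^{-4t}\,dt=\frac{1}{16N}(1-e^{-4T})\le \frac{1}{16N}$ is bounded uniformly in $T$, I would conclude
\begin{equation*}
\eta=\lim_{T\to\infty}\frac1T\int_0^T(Qp-m)^T(Qp-m)\,dt\le \lim_{T\to\infty}\frac{1}{16NT}=0,
\end{equation*}
and, as $\eta\ge 0$ always, $\eta=0$. For a general initial condition the same conclusion holds: $X(u)$ is irreducible (Theorem \ref{CatQuadProg}), so $p(t)$ converges exponentially to the target invariant distribution, the nonnegative integrand is integrable over $[0,\infty)$, and the time average of any nonnegative function with finite integral on $[0,\infty)$ is zero. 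There is no genuine analytic obstacle here; the one point that will require care is verifying that the singular arc reaches the target $m$ \emph{exactly}, so that the residual decays to $0$ rather than to a positive constant. This is exactly what Proposition \ref{SingularHzero} furnishes, and it is what separates the interior (singular) case from boundary controls, for which a nonvanishing residual would force $\eta>0$.
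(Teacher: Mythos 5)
Your proof is correct, but it takes a genuinely different route from the paper's. The paper stays entirely within the singular-control formalism: starting from $q^{T}B_{j}p=0$, it repeatedly differentiates the switching function along the arc, obtaining first $q_{i}=q_{j}$, then (from $\frac{d}{dt}(\partial H/\partial u_{j})=0$) $p_{i}=p_{j}$ for $i,j=1,\dots,n-1$, and finally showing that $\frac{d^{2}}{dt^{2}}(\partial H/\partial u_{j})=0$ reproduces exactly the steady-state equations of Case 1 and Case 2 of Section \ref{CatSSSoln}, i.e.\ the equations corresponding to $(Qp(u)-m)^{T}(Qp(u)-m)=0$ with $p(u)$ as in (\ref{SSProbMP}); the vanishing of $\eta$ is then read off from the vanishing of the steady-state residual. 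You instead take the explicit singular trajectory already computed in Proposition \ref{SingularHzero}, namely $p_{i}(t)=\frac{1}{2N}(1-e^{-2t})$ and $p_{n}(t)=\frac{1}{2}(1+e^{-2t})$, observe that the residual $Qp(t)-m$ is $O(e^{-2t})$ so that the integrand of (\ref{CatPerfDet}) equals $\frac{1}{4N}e^{-4t}$, and conclude by the elementary fact that a nonnegative integrand with finite integral on $[0,\infty)$ has zero Ces\`aro average. What your route buys is a self-contained, fully explicit verification: it replaces the paper's omitted second-derivative computation with a one-line integration and makes transparent why the transient cost is annihilated by the $1/T$ factor; what the paper's route buys is the structural identification of singular arcs with the solutions of the steady-state quadratic programming problem, which is the link exploited in the subsequent discussion of how optimal controls steer the system onto singular trajectories. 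One minor refinement: your appeal to irreducibility of $X(u)$ to handle general initial conditions is unnecessary (and slightly delicate for time-varying $u$); the singular-arc conditions $p_{i}=p_{j}$ together with $\dot{p}_{n}=1-2p_{n}$ already give $p_{n}(t)=\frac{1}{2}+(p_{n}(0)-\frac{1}{2})e^{-2t}$ and hence exponential convergence of each $p_{i}$ to $\frac{1}{2N}$ for arbitrary $p(0)$, so the same estimate goes through verbatim. You also correctly isolated the crux — that the arc reaches $m$ exactly rather than with a positive offset — which is precisely what separates the interior (singular) case from boundary controls.
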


\begin{proof}
As in the proof of proposition \ref{SingularHzero}, on singular arcs, $\frac{%
\partial H}{\partial u_{j}}=q^{T}B_{j}p=0$ for $j=1..r$ give the conditions $%
q_{i}=q_{j}$ for $i,j=1...(n-1).$ Evaluating $\frac{d}{dt}(\frac{\partial H}{%
\partial u_{j}})$ for $j=1..r$ \ and setting this to zero (details omitted)
yields further the conditions $p_{i}=p_{j}$ for $i,j=1...(n-1).$ Next,
evaluating $\frac{d^{2}}{dt^{2}}(\frac{\partial H}{\partial u_{j}})$ for $%
j=1..r$ and setting this to zero yields the same equations as in Case 1 and
Case 2 of (a generalized version of) the example presented in Section \ref%
{CatSSSoln} .That is, the equations corresponding to $%
(Qp(u)-m)^{T}(Qp(u)-m)=0$ where $p(u)$ is given by (\ref{SSProbMP}). That
is, $\eta =0.$
\end{proof}

Note that due to the singular nature of the problem, the above analysis does
not give us any information about the optimal control $u^{\ast }.$ However,
we saw from the steady state analysis that a $u$ such that is a constant
value satisfying the quadratic programming problem (QPP) (Proposition \ref%
{CatQuadProg}) after some finite time is an optimal solution. So if we
initially start on a singular trajectory then we remain on it. Otherwise
since $u$ is bounded, we can't jump immediately to the singular trajectories
and so it will be a bang/bang like control till we transition to an optimal
trajectory (though not necessarily constant) control - however, eventually
this will become constant. Thus any control that becomes the constant value
that is the solution to the QPP in finite time, and one that eventually
steers the system onto a singular trajectory is an optimal control.

\section{Conclusion}

A framework for studying the class of problems where the dynamics of a
controllable continuous-time finite-state Markov chain are dependent on an
external stochastic process was introduced in this paper and two categories
of optimal control problems were discussed. In the "type I" or "expected
utility maximization" problems, techniques based upon dynamic programming
were used to provide solutions for a general class of problems in the form
of a matrix differential equation. This result, proved in Theorem \ref%
{TheoremBellman} using the stochastic dynamic programming, was alternately
derived as using the maximum principle, and in the process we were able to
see a more general applicability of the variational approach. These
solutions were applied to a variety of toy examples in the area of dynamic
portfolio optimization. Our factored solutions reduce storage requirements
as well as computational complexity significantly. For example, in our
representation, a coupled problem with $r=10,n=1000$ that would normally
require storing a $1000\times 1000$ matrix needs at most ten $100\times 100$
matrices, thereby providing a reduction by a factor of $10.$ This approach
is also generalizable to multi-factor processes, with many interacting
Markov chains and with even synchronizing transitions. 

Another category of problems, called "type II" or "diversification
maximization" problems with performance functionals that are non-linear in
underlying state probabilities was discussed in the context of a cat feeding
example. It was shown that this problem is singular in the sense that the
maximum principle fails to provide an optimal solution, and alternative
techniques were explored in the solution of this problem.

Ongoing and future work in this area is focused on general techniques for
such singular problems, and extending the class of problems to more complex
ones such as multi-cascades (a set of multiple inter-dependent Markov
chains), hybrid cascades (for instance, a discrete-state Markov chain with
dependencies on continuous-state Gauss-Markov processes) and even decision
processes in the context of quantum Markov chains or quantum controls.
Computational considerations for large scale versions of the toy portfolio
examples presented in this paper will also be investigated.

\bigskip In this paper only the singular control problem defined in Section %
\ref{SectionBinaryDecisionProblem} was analyzed. The general problem of
minimizing a performance measure of the form 
\begin{equation*}
\eta =\int_{0}^{T}(\frac{1}{2}p^{T}Qp+c^{T}p)dt+\frac{1}{2}%
p^{T}(T)S_{f}p(T)+\phi _{f}^{T}p
\end{equation*}%
on a cascade MDP where $Q,S_{f}\geq 0$ needs to be investigated. For the
time-invariant case, following the analysis in where it was shown that if a
minimizer of $c^{T}p$ is in the interior of the admissibility set $\mathcal{U%
}$ then it must define a singular arc, we would like to derive a similar
result for the above case. We would further like to derive, for the
time-invariant case, sufficient conditions for singular arcs to be optimal
(i.e. analog of Proposition \ref{ProposSingularIsOpt}).

Future work in this class of singular problems also involves other
techniques such as variable transformations, as in \cite{Geerts1989135}, the
method of singular perturbations (as in \cite{GRSMN}), and numerical methods
such as Chebyshev-point collocation techniques.


\appendix

\section{Markov Processes on Product State Spaces}\label{SectionProductMP}

We explore representations of a Markov Process $y_{t}~$that evolves on the
product state space $\{e_{i}\}_{i=1}^{r}\times \{e_{i}\}_{i=1}^{n}$.~The
sample path $y(t)$ can be written as the tuple $(z(t),x(t))$ where $z(t)\in
\{e_{i}\}_{i=1}^{r}$ and $x(t)\in \{e_{i}\}_{i=1}^{n}.$ The corresponding
stochastic processes $z_{t}$ and $x_{t}$are the \textbf{components}\emph{\ }%
of $y_{t}.$ The transition matrix for $x_{t}$ may depend on $z(t)~$and hence
describes the propagation of the \emph{conditional probability} distribution 
$p_{x|z}$: The dynamics of component \emph{marginal} probabilities are not
necessarily governed by a single stochastic matrix. Different degrees of
coupling between $x_{t}~$and $y_{t}~$leads to a possible categorization of
the joint Markov Process $y_{t}.$

\begin{definition}
A Markov process $y_{t}~$on the state space $\{e_{i}\}_{i=1}^{r}\times
\{e_{i}\}_{i=1}^{n}$is called \textbf{tightly coupled} or \textbf{%
non-decomposable }if there exist states $(e_{i},e_{j})$ and $(e_{k},e_{l})$
with $i\neq k~$and $j\neq l~$having non-zero transition probability. If all
non-zero transition probabilities are between states of the form $%
(e_{i},e_{j})$ to $(e_{i},e_{k})$, or $(e_{i},e_{j})$ to $(e_{l},e_{j})$
then $y_{t}$ is called \textbf{weakly-coupled }or \textbf{decomposable}.
\end{definition}

\begin{definition}
A decomposable chain on $\{e_{i}\}_{i=1}^{r}\times \{e_{i}\}_{i=1}^{n}$ %
where the transition probability from state $(e_{i},e_{j})$ to $%
(e_{l},e_{j})$ does not depend on $j,$ for all  $i,l,j$ where $1\leq i,l\leq
r$ and $1\leq j\leq n,$ is called a \textbf{Cascade Markov process\footnote{%
In this paper we mainly focus on Cascade Markov processes, and they are
closely related to Markov-modulated Poisson processes (MMPPs) which have
vast applications in traffic control, operations research and electronics
and communications.}}.
\end{definition}

\begin{definition}
A cascade Markov process on $\{e_{i}\}_{i=1}^{r}\times \{e_{i}\}_{i=1}^{n}$ %
where the transition probability from state $(e_{i},e_{j})$ to $%
(e_{i},e_{k})$ does not depend on $i,$ for all$ i,j,k$ where $1\leq i\leq
r~~ $and $1\leq j,k\leq n,$ is called an \textbf{Uncoupled Markov Process}.
\end{definition}

Thus, in a decomposable chain, the jumps in the two component processes are
uncorrelated. However, the rates of the counters (and hence transition
probabilities) in a component can depend on the state of the another
component. In a cascade chain, the rates of the first component ($z_{t})$ do
not depend on the second component $x_{t}.$ In an uncoupled chain, the
component processes $z_{t}$ and $x_{t}$ are completely independent.
Decomposable Markov chains have \emph{functional }transition rates, that is,
the transition rates are state dependent but do not have any synchronous
transitions. Non-decomposable Markov chains exhibit \textit{synchronous
transitions}: that is, transitions amongst states of $x_{t}$ and $z_{t}$ can
occur simultaneously. 

\subsection{Sample\ Path and Transition Probability Representations}

It will be convenient to represent~sample paths $y(t)$ using the Kronecker
tensor product $y(t)=z(t)\otimes x(t)$ instead of the tuple $(z(t),x(t))$.
The state set $y(t)$ then becomes standard basis for $\mathbb{R}^{r\times n}$%
. Following the model in (\ref{ItoMC}) sample paths $y(t)$ have the Ito
representation 
\begin{equation}
dy=\sum_{i=1}^{q}G_{i}ydN_{i}
\end{equation}
\label{JointItoKron}

where $G_{i}~\in \mathbb{G}_{{}}^{rn}$ are distinct. Correspondingly, the
infinitesimal generator $P\in \widehat{P}_{rn}$ can be written as $%
P=\sum_{i=1}^{q}G_{i}\lambda _{i}$ where $\lambda _{i}$ is the rate of
counter $N_{i}.$The following results relate decomposability of sample path
and transition probability representations to the various levels of
couplings defined above.

\begin{proposition}
\label{ItoRepMCPropos}Let the Markov process $y_{t}$ be defined on the state
space $\{e_{i}\}_{i=1}^{rn}$ with the Ito representation (\ref{JointItoKron}%
). Then for each (distinct) $G_{i},$ $i=1..q$ (see notation defined in
Appendix \ref{NotSumm}),

\begin{enumerate}
\item $y_{t}$ is a decomposable Markov process if and only if $G_{i}$ can be
written as either $G_{i}=E_{i}^{r}\otimes G_{i}^{n}$ or $G_{i}=G_{i}^{r}%
\otimes E_{i}^{n}.$

\item If $G_{i}$ can be written as $G_{i}=E_{i}^{r}\otimes G_{i}^{n}$ or $%
G_{i}=G_{i}^{r}\otimes I_{n}$  then $y_{t}$ is a cascade Markov process.

\item If $G_{i}$ can be written as $G_{i}=I_{r}\otimes G_{i}^{n}$ or $%
G_{i}=G_{i}^{r}\otimes I_{n}$  then $y_{t}$ is an uncoupled Markov process
\end{enumerate}
\end{proposition}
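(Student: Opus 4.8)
The plan is to reduce the whole statement to one Kronecker-product identity for elementary transitions and then read each coupling condition off the definitions in the appendix. Under the identification $e_{(i,j)}^{rn}=e_i^r\otimes e_j^n$, the elementary joint generator that effects the single transition $(e_i,e_j)\to(e_{i'},e_{j'})$ factors as
\[
F_{(i',j'),(i,j)}^{rn}-F_{(i,j),(i,j)}^{rn}=F_{i'i}^r\otimes F_{j'j}^n-F_{ii}^r\otimes F_{jj}^n,
\]
which I would establish from the mixed-product rule $(P\otimes Q)(R\otimes S)=(PR)\otimes(QS)$ evaluated on basis vectors. First I would record the two decomposable specializations: when $i'=i$ the right-hand side factors as $E_i^r\otimes(F_{j'j}^n-F_{jj}^n)=E_i^r\otimes G^n$ with $E_i^r=F_{ii}^r$ and $G^n\in\mathbb{G}^n$, and symmetrically when $j'=j$ it factors as $(F_{i'i}^r-F_{ii}^r)\otimes E_j^n=G^r\otimes E_j^n$ with $G^r\in\mathbb{G}^r$. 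This one computation is the engine for all three parts; the projection subscript merely labels the counter and the held-fixed coordinate.

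For part 1 I would use that the $G_i$ are distinct elementary generators, so in $P=\sum_i\lambda_i G_i$ the transitions carrying nonzero rate are exactly those realized by some $G_i$ (discarding, without loss of generality, any counter with $\lambda_i=0$). By definition $y_t$ is decomposable precisely when every such transition holds one component fixed, i.e.\ has $i'=i$ or $j'=j$; by the factorization above this is equivalent to each $G_i$ having the form $E_i^r\otimes G_i^n$ or $G_i^r\otimes E_i^n$. Both implications are then immediate, giving the stated equivalence.

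For parts 2 and 3, which are sufficiency claims only, I would encode a component-independent transition by a single counter whose generator aggregates the per-coordinate elementary pieces at one common rate. A $z$-jump independent of $x$ is $\sum_{j}G^r\otimes E_j^n=G^r\otimes I_n$: firing it moves the $z$-component by $G^r$ whatever the value of $x$, so the rate from $(e_i,e_j)$ to $(e_l,e_j)$ is the same for every $j$, which is exactly the cascade condition; combined with $x$-jumps of the still-$z$-dependent form $E_i^r\otimes G_i^n$ this yields a cascade and proves part 2. Applying the same aggregation to the $x$-component as well, via $\sum_i E_i^r\otimes G^n=I_r\otimes G^n$, makes the $x$-jumps independent of $z$ too, so both components decouple and the process is uncoupled, proving part 3.

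The main obstacle---indeed the only point needing care rather than bookkeeping---is the logical asymmetry between part 1 and parts 2--3. I would stress that $G^r\otimes I_n$ and $I_r\otimes G^n$ are not elementary members of $\mathbb{G}^{rn}$ but aggregated (batch) generators, so they are merely one admissible encoding of a component-independent transition: the same cascade can equally be written with the disaggregated, rate-matched family $\{G^r\otimes E_j^n\}_j$, which never takes the $\otimes I_n$ shape. This non-uniqueness of the counter representation is exactly why parts 2 and 3 are stated as implications rather than equivalences, and I would make sure the sufficiency argument nowhere secretly presumes the $\otimes I$ encoding. Everything else is routine Kronecker algebra.
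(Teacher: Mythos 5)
Your proof is correct, and its overall architecture matches the paper's: both work on the Kronecker-factored action of each counter and classify counters by which component they move, proving part 1 in both directions and parts 2--3 by noting that a counter of the form $G^r\otimes I_n$ (resp.\ $I_r\otimes G^n$) drives the same $z$-transition (resp.\ $x$-transition) at the same rate regardless of the other coordinate, which is exactly the cascade (resp.\ uncoupled) condition. Where you genuinely diverge is the engine for part 1: the paper argues on sample-path increments, observing that $(E_i^r\otimes G_i^n)(z\otimes x)=E_i^r z\otimes G_i^n x$ is a rank-one tensor so the jump changes $x$ but not $z$, and conversely that a jump changing both components forces $d(zx^T)$ to have rank $>1$, ruling out either factored form; you instead prove the entrywise matrix-unit identity $F_{(i',j'),(i,j)}^{rn}-F_{(i,j),(i,j)}^{rn}=F_{i'i}^r\otimes F_{j'j}^n-F_{ii}^r\otimes F_{jj}^n$ via the mixed-product rule and read the equivalence off the cases $i'=i$ and $j'=j$. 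Your version is more elementary and buys two points of rigor the paper glosses over: the explicit wlog discarding zero-rate counters (without which the ``if and only if'' in part 1 fails as stated, since a coupled $G_i$ with $\lambda_i=0$ leaves the transition probabilities decomposable), and the observation that $G^r\otimes I_n$ and $I_r\otimes G^n$ are aggregated rather than elementary members of $\mathbb{G}^{rn}$, which correctly accounts for parts 2 and 3 being one-directional implications; the paper itself uses these batch terms without comment. In exchange, the paper's rank argument dispatches the ``changes both components'' direction in a single line without case bookkeeping.
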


\begin{proof}
\begin{enumerate}
\item To prove sufficiency, write (\ref{JointItoKron}) as $%
dy=\sum_{i=1}^{q_{1}}(E_{i}^{r}\otimes G_{i}^{n}~)(z\otimes
x)dN_{i}+\sum_{j=q_{1}+1}^{q}(G_{j}^{r}\otimes E_{j}^{n})(z\otimes
x)dN_{j}~.$ Since $(E_{i}^{r}\otimes G_{i}^{n}~)(z\otimes
x)=E_{i}^{r}z\otimes G_{i}^{n}x$ is a rank one tensor, $zx^{T}$is a rank one
matrix with exactly one non-zero row. Thus jumps in $N_{i}$ change $x$ but
not $z.$ Conversely, jumps in $N_{i}$ that change both $x$ and $z$ must have 
$d(zx^{T})$ of rank $>1$, i.e. $G_{i}\neq E_{i}^{r}z\otimes G_{i}^{n}x$ for
any $E_{i}^{r}$ and $G_{i}^{n}.$

\item In the decomposable change, transitions that change $z$ but not $x$ %
correspond to terms such as $(G_{j}^{r}\otimes I_{n})(z\otimes
x)dN_{j}~=(G_{j}^{r}z\otimes I_{n})dN_{j}$.~Thus the transition $G_{j}^{r}$
is driven by $N_{j}$ only, regardless of $x.$ Since $G_{j}^{r}$ are distinct
for distinct $j$, each transition in $z$ is independent of the value of $x$.

\item Follows by repeating the argument of (2) for the terms $(I_{r}\otimes
G_{i}^{n}~)(z\otimes x)dN_{i}$
\end{enumerate}
\end{proof}

\begin{proposition}
\label{ProposTransRep}Let the Markov process $y_{t}$ be defined on the joint
state space $\{e_{i}\}_{i=1}^{rn}$ with infinitesimal generator $P.$ Then,
as per notation defined in Appendix \ref{NotSumm},

\begin{enumerate}
\item If $y_{t}$ is decomposable, then $P$ can be written in the form $%
P=\sum_{i=1}^{p_{1}}E_{i}^{r}\otimes
B_{i}^{n}+\sum_{i=1}^{p_{2}}B_{i}^{r}\otimes E_{i}^{n}$ where $%
B_{i}^{n},B_{i}^{r}$ are matrices such that $\sum_{i=1}^{p_{1}}B_{i}^{n}\in 
\widehat{P}_{n}$ and $\sum_{i=1}^{p_{2}}B_{i}^{r}\in \widehat{P}_{r}.$

\item If $y_{t}$ is a cascade Markov process then $P$ can be written as $%
P=\sum_{i=1}^{p}E_{i}^{r}\otimes B_{i}^{n}+C\otimes I_{n}$ where $C\in 
\widehat{P}_{r},$ where $B_{i}^{n}$are matrices such that $%
\sum_{i=1}^{p_{1}}B_{i}^{n}\in \widehat{P}_{n}$

\item If $y_{t}$ is an uncoupled Markov process then $P$ can be written as 
\begin{equation}
P=I_{r}\otimes A+C\otimes I_{n}  \label{DiagnolizableMC}
\end{equation}

where $A\in \widehat{P}_{n}$ and $C\in \widehat{P}_{r}$ .
\end{enumerate}
\end{proposition}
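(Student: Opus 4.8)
The plan is to read off the Kronecker structure of the generator $P=\sum_{i=1}^{q}\lambda_{i}G_{i}$ directly from the sample-path characterization of Proposition \ref{ItoRepMCPropos}, using the bilinearity of the Kronecker product together with the elementary fact that any nonnegative combination of building blocks in $\mathbb{G}^{n}$ (resp. $\mathbb{G}^{r}$) is again an infinitesimal generator, i.e. lies in $\widehat{P}_{n}$ (resp. $\widehat{P}_{r}$): each $F_{kl}-F_{ll}$ has a single column summing to zero with nonnegative off-diagonal and nonpositive diagonal entry, and these sign and column-sum properties are preserved under nonnegative linear combinations and under summation. I will treat the three cases in the order decomposable $\Rightarrow$ cascade $\Rightarrow$ uncoupled, exploiting that, by the definitions in Appendix A, an uncoupled chain is a cascade chain and a cascade chain is decomposable, so that each successive case requires only one additional simplification of the preceding form.

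For the decomposable case, Proposition \ref{ItoRepMCPropos}(1) gives that every distinct $G_{i}$ is of the form $E_{i}^{r}\otimes G_{i}^{n}$ (a pure $x$-jump) or $G_{i}^{r}\otimes E_{i}^{n}$ (a pure $z$-jump). Substituting into $P=\sum_{i}\lambda_{i}G_{i}$ and collecting the first type by the conditioning state — grouping all counters whose $z$-indicator equals $e_{k}e_{k}^{T}=E_{k}^{r}$ — produces a single conditional $x$-generator $B_{k}^{n}=\sum_{i}\lambda_{i}G_{i}^{n}$ for each $z$-state $k$, so the $x$-jump part equals $\sum_{k}E_{k}^{r}\otimes B_{k}^{n}$; collecting the second type analogously by $x$-state gives $\sum_{j}\widetilde{B}_{j}^{r}\otimes E_{j}^{n}$. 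Each $B_{k}^{n}$ and each $\widetilde{B}_{j}^{r}$ is a nonnegative combination of building blocks, hence a generator, and therefore $\sum_{k}B_{k}^{n}\in\widehat{P}_{n}$ and $\sum_{j}\widetilde{B}_{j}^{r}\in\widehat{P}_{r}$, which is precisely the asserted form (1).

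The cascade and uncoupled cases follow from form (1) by a single algebraic simplification each, and this is where the real content lies. Since a cascade chain is decomposable, I start from the $z$-jump part $\sum_{j}\widetilde{B}_{j}^{r}\otimes E_{j}^{n}$; the cascade condition that the rate of a $z$-transition $(e_{a},e_{j})\to(e_{b},e_{j})$ is independent of $j$ forces the conditional $z$-generator $\widetilde{B}_{j}^{r}$ to be one and the same matrix $C$ for every $x$-state $j$, so that $\sum_{j}C\otimes e_{j}e_{j}^{T}=C\otimes\big(\sum_{j}e_{j}e_{j}^{T}\big)=C\otimes I_{n}$ with $C=\sum_{i}\lambda_{i}G_{i}^{r}\in\widehat{P}_{r}$; this yields form (2). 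For the uncoupled case I begin instead from the cascade form (2) and apply the symmetric independence condition — that the $x$-transition $(e_{i},e_{j})\to(e_{i},e_{k})$ is independent of $i$ — which forces $B_{i}^{n}$ to be a common matrix $A$, whence $\sum_{i}e_{i}e_{i}^{T}\otimes A=I_{r}\otimes A$ and $P=I_{r}\otimes A+C\otimes I_{n}$ with $A\in\widehat{P}_{n}$; this is form (3).

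I expect the only genuinely nonroutine step to be the translation of each state-space independence hypothesis into the corresponding ``factor out the identity'' identity $\sum_{j}e_{j}e_{j}^{T}=I_{n}$ (respectively $\sum_{i}e_{i}e_{i}^{T}=I_{r}$): one must check carefully that the cascade (resp. uncoupled) condition really makes the relevant family of conditional generators constant across the summed index before the common factor can be pulled out of the Kronecker product. The remaining work — substituting the Ito forms, regrouping the finitely many counters by conditioning state, and verifying membership in $\widehat{P}_{n}$ and $\widehat{P}_{r}$ — is bookkeeping; the one caveat is that if several of the grouped blocks $G_{i}^{n}$ (or $G_{i}^{r}$) coincide one first merges their Poisson counters as in the footnote to (\ref{ItoMC}), which affects none of the conclusions.
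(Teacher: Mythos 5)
Your proof is correct and follows the same skeleton as the paper's: both arguments substitute the Kronecker-factored jump matrices of Proposition \ref{ItoRepMCPropos} into $P=\sum_{i}\lambda_{i}G_{i}$, regroup the finite sum, and verify membership in $\widehat{P}_{n}$ and $\widehat{P}_{r}$ by noting that nonnegative combinations of the blocks $F_{kl}-F_{ll}$ are again generators. The one substantive difference concerns parts (2) and (3): the paper simply cites Proposition \ref{ItoRepMCPropos}(2) (resp.\ (3)) and sets $C=\sum_{j}\lambda_{j}G_{j}^{r}$, but those items as stated only give the sufficiency direction (Ito form $\Rightarrow$ cascade/uncoupled), whereas the proof of the present proposition needs the converse. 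Your derivation supplies exactly that converse: starting from the decomposable form, you use the definitional rate-independence to show the family of conditional generators $\widetilde{B}_{j}^{r}$ (resp.\ $B_{i}^{n}$) is constant in $j$ (resp.\ $i$), and then factor $\sum_{j}e_{j}e_{j}^{T}=I_{n}$ (resp.\ $\sum_{i}e_{i}e_{i}^{T}=I_{r}$) out of the Kronecker product; this makes your write-up, if anything, more logically complete than the paper's one-line citations. Two small bookkeeping points you should make explicit: in Proposition \ref{ItoRepMCPropos}(1) the matrices $E_{i}^{r}$ are arbitrary diagonal $0$--$1$ matrices, so before ``grouping by the conditioning state'' you must split each $E_{i}^{r}$ into a sum of rank-one terms $e_{k}e_{k}^{T}$ and use distributivity of $\otimes$ over this sum; and in the step where the cascade condition forces $\widetilde{B}_{j}^{r}\equiv C$, you should note that agreement of the off-diagonal entries across $j$ already pins down the diagonal entries, because each $\widetilde{B}_{j}^{r}$ has zero column sums.
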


\begin{proof}
\begin{enumerate}
\item For a decomposable chain from Proposition \ref{ItoRepMCPropos}~we can
write (with $q_{1}=p_{1}$ and $p_{1}+p_{2}=q$) $P=%
\sum_{i=1}^{p_{1}}(E_{i}^{r}\otimes G_{i}^{n}~\lambda
_{i})+\sum_{j=p_{1}+1}^{q}(G_{j}^{r}\lambda _{j}\otimes E_{j}^{n})$ .~The
result follows from the fact that $\sum_{i=1}^{m}G_{i}^{d}~\lambda _{i}$ $%
\in \widehat{P}_{d}$ for any integers $m$ and $d,$ and by shifting the
summation index in the second sum.

\item Follows from Proposition \ref{ItoRepMCPropos}(2) by setting $%
C=\sum_{j=p_{1}+1}^{q}G_{j}^{r}\lambda _{j}$ noting that $C$ $\in \widehat{%
P}_{r}$

\item Follows from Proposition \ref{ItoRepMCPropos}(1) as above.
\end{enumerate}
\end{proof}

The transition matrix representation (\ref{DiagnolizableMC}) above is not
unique to an uncoupled Markov process. In fact, any Markov process $y_{t}$ %
on joint state space $\{e_{i}\}_{i=1}^{rn}$ whose transition matrix $P$ can
be written in the form (\ref{DiagnolizableMC}) is said to be \textbf{%
diagonalizable. }We will shortly see some sufficient conditions for
diagonalizability in the context of MDPs. An important property of
diagonalizable Markov processes is that the \textit{marginal} probabilities
of the component processes evolve in accordance with stochastic matrices
given by the diagonal decomposition, and in fact this condition is also
sufficient to guarantee diagonalizability:

\begin{proposition}
\label{ProposCondnsDiag}Given a diagonalizable Markov process $%
y_{t}=z_{t}\otimes x_{t}$ whose transition matrix has the diagonal
representation (\ref{DiagnolizableMC}), the marginal probability
distributions $p_{z}$ and $p_{x}$ of the component processes $z_{t}$ and $%
x_{t}$ evolve in accordance with $\dot{p}_{z}(t)=Cp_{z}(t)~$ and $\dot{p}%
_{x}(t)=Ap_{x}(t)$ respectively.

Conversely, given a decomposable Markov process $y_{t}=z_{t}\otimes x_{t}$ %
such that the marginal probability distributions $p_{z}$ and $p_{x}$ of $%
z_{t}$ and $x_{t}$ evolve on $\{e_{i}\}_{i=1}^{r}$ and $\{e_{i}\}_{i=1}^{n}$
in accordance with $\dot{p}_{z}(t)=Cp_{z}(t)~$ and $\dot{p}_{x}(t)=Ap_{x}(t)~
$respectively, where $A\in \widehat{P}_{n}$ and $C\in \widehat{P}_{r},$
then $y_{t}$ is diagonalizable with the~representation given by (\ref%
{DiagnolizableMC}).
\end{proposition}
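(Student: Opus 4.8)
The plan is to phrase both directions in terms of the two linear ``marginalization'' maps $S_x = \mathbf{1}_r^{T}\otimes I_n$ and $S_z = I_r\otimes \mathbf{1}_n^{T}$, where $\mathbf{1}_r,\mathbf{1}_n$ are the all-ones vectors; applied to the joint law $p_y(t)$ of $y_t=z_t\otimes x_t$ these produce exactly the marginals $p_x = S_x p_y$ and $p_z = S_z p_y$, and $p_y$ itself satisfies $\dot p_y = P p_y$. The forward implication is then a short Kronecker computation, whereas the converse is the real content and will be obtained by combining the decomposable structure of $P$ (Proposition \ref{ProposTransRep}) with the marginal hypothesis read as a matrix identity.

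For the forward direction I would assume $P = I_r\otimes A + C\otimes I_n$ and compute $\dot p_x = S_x P p_y$. Using the mixed-product rule $(M\otimes N)(M'\otimes N') = MM'\otimes NN'$ gives $S_x(I_r\otimes A) = \mathbf{1}_r^{T}\otimes A = A\,S_x$ and $S_x(C\otimes I_n) = (\mathbf{1}_r^{T}C)\otimes I_n = 0$, the last equality because $C\in\widehat{P}_r$ has columns summing to zero, i.e.\ $\mathbf{1}_r^{T}C = 0$. Hence $\dot p_x = A\,S_x p_y = A p_x$, and the identical argument with $S_z$ and $\mathbf{1}_n^{T}A = 0$ yields $\dot p_z = C p_z$.

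For the converse I would first use decomposability to split the generator as $P = P_x + P_z$, where $P_x = \sum_{i=1}^{r}(e_i e_i^{T})\otimes A^{(i)}$ collects the transitions that move $x$ while holding $z=e_i$ fixed and $P_z = \sum_{j=1}^{n}C^{(j)}\otimes(e_j e_j^{T})$ collects those that move $z$ while holding $x=e_j$; here each conditional block $A^{(i)}\in\widehat{P}_n$ and $C^{(j)}\in\widehat{P}_r$ is itself an infinitesimal generator, which is just Proposition \ref{ProposTransRep}(1) regrouped by the conditioning state. The hypothesis that $p_x$ obeys $\dot p_x = A p_x$ for every initial law is equivalent to the matrix identity $S_x P = A\,S_x$, since evaluating at $t=0$ over all point-mass initial distributions (which span $\mathbb{R}^{rn}$) promotes the pointwise identity to an identity of matrices. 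Computing the left side blockwise, $S_x P_x = [\,A^{(1)}\mid\cdots\mid A^{(r)}\,]$ while $S_x P_z = \sum_j(\mathbf{1}_r^{T}C^{(j)})\otimes(e_j e_j^{T}) = 0$ because each $C^{(j)}$ is a generator; on the other hand $A\,S_x = [\,A\mid\cdots\mid A\,]$. Matching the $r$ column-blocks forces $A^{(i)} = A$ for all $i$, so $P_x = (\sum_i e_i e_i^{T})\otimes A = I_r\otimes A$. The symmetric computation with $S_z P = C\,S_z$ gives $C^{(j)} = C$ and $P_z = C\otimes I_n$, whence $P = I_r\otimes A + C\otimes I_n$, the representation (\ref{DiagnolizableMC}).

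The step I expect to carry the weight is the translation of the hypothesis into the identities $S_x P = A\,S_x$ and $S_z P = C\,S_z$: one must read ``the marginals evolve according to $A$ and $C$'' as holding for all initial distributions (or along trajectories dense enough to span $\mathbb{R}^{rn}$), so that an equality valid on every $p_y$ may be promoted to an equality of matrices. Once that is secured, the annihilation of the cross terms by $\mathbf{1}^{T}$ acting on a generator and the blockwise matching are routine; the one bookkeeping point is to apportion the diagonal (out-rate) entries of $P$ between $P_x$ and $P_z$ so that each conditional block $A^{(i)},C^{(j)}$ genuinely has zero column sums, which is precisely what makes it a legitimate generator and what kills $S_x P_z$ and $S_z P_x$.
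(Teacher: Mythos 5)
Your proof is correct; note first that the paper states Proposition \ref{ProposCondnsDiag} \emph{without} proof, so there is no in-paper argument to match against --- your write-up supplies the missing proof, and does so from exactly the ingredients the surrounding appendix makes available: the mixed-product rule for the forward direction, and the decomposable normal form of Proposition \ref{ProposTransRep}(1), regrouped by conditioning state, for the converse. The blockwise computations check out: $S_x(C\otimes I_n)=(\mathbf{1}_r^{T}C)\otimes I_n=0$ because generators in $\widehat{P}_r$ have zero column sums, and matching $[\,A^{(1)}\mid\cdots\mid A^{(r)}\,]$ against $[\,A\mid\cdots\mid A\,]$ does force $A^{(i)}=A$ for every $i$, with the dual computation giving $C^{(j)}=C$; your bookkeeping remark about apportioning the diagonal out-rates of $P$ between the conditional blocks is also exactly what is needed to make each $A^{(i)}$, $C^{(j)}$ a legitimate generator. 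The one point deserving emphasis is the step you yourself flag as load-bearing: promoting the trajectory identity $S_xPp_y=AS_xp_y$ to the matrix identity $S_xP=AS_x$ requires the marginal hypothesis to hold for all initial laws (or along a spanning family of trajectories). This is not pedantry --- for a single fixed process the converse is false as literally stated: start a decomposable chain whose $x$-rates genuinely depend on $z$ from a stationary joint law, and pick any generator $A$ that fixes the (constant) marginal $p_x$; then $\dot p_x = Ap_x$ holds trivially while $P$ need not have the form (\ref{DiagnolizableMC}). So your ``for every initial distribution'' reading is the one under which the proposition is true, and making that quantifier explicit would strengthen the statement itself, not just the proof.
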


From Propositions \ref{ProposTransRep} and \ref{ItoRepMCPropos} we get the
following:

\begin{proposition}
\label{ProposDecomposIto}Let $y_{t}=z_{t}\otimes x_{t}$ be a Markov process
in $r\times n$ states where $z_{t}\in \{e_{i}\}_{i=1}^{r}$ and $x_{t}\in
\{e_{i}\}_{i=1}^{n}.$ Then sample paths of $y_{t}$ can be written as 
\begin{equation*}
dy_{t}=(z_{t}\otimes dx_{t})+(dz_{t}\otimes x_{t})+(dz_{t}\otimes dx_{t})
\end{equation*}%
If $y_{t}$ is \emph{decomposable}, then the sample paths can be decomposed
into 
\begin{eqnarray*}
z_{t}\otimes dx_{t} &=&z_{t}\otimes \sum_{j=1}^{m}G_{j}(z)x_{t}dN_{j}(z_{t})
\\
dz_{t}\otimes x_{t} &=&\sum_{i=1}^{s}H_{i}(z)z_{t}dM_{i}(z_{t})\otimes x_{t}
\\
dz_{t}\otimes dx_{t} &=&0
\end{eqnarray*}%
where $G_{j}(z)\in \mathbb{G}^{n},H_{i}\in \mathbb{G}^{r}$ for each $z,x$
and $N_{j},M_{i}$ are doubly stochastic (Markov modulated) Poisson counters.
Furthermore, if $y_{t}$ is a Cascade MC then we get the following decoupled
Ito representation%
\begin{eqnarray*}
dz &=&\sum_{i=1}^{s}H_{i}zdM_{i} \\
dx &=&\sum_{i=1}^{m}G_{i}(z)xdN_{i}(z)
\end{eqnarray*}
\end{proposition}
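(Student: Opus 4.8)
The plan is to read the result off the structural characterizations already obtained in Propositions~\ref{ItoRepMCPropos} and~\ref{ProposTransRep}, after first recording the elementary product rule for increments of a Kronecker product. The three displayed identities are then simply the separate contributions of the two (resp. three) families of jump generators that those propositions isolate, and the ``Furthermore'' clause is the cascade specialization of the same splitting.

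First I would establish the universal decomposition $dy = (dz)\otimes x + z\otimes(dx) + (dz)\otimes(dx)$, which requires no coupling hypothesis. At each jump epoch the increment of $y = z\otimes x$ is $\Delta(z\otimes x) = (z+\Delta z)\otimes(x+\Delta x) - z\otimes x = (\Delta z)\otimes x + z\otimes(\Delta x) + (\Delta z)\otimes(\Delta x)$ by bilinearity of $\otimes$; expressing each component increment through its own Poisson counters and summing over jump epochs yields the stated differential form for any $y_t = z_t\otimes x_t$.

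Next, for the decomposable case I would invoke Proposition~\ref{ItoRepMCPropos}(1), by which the distinct generators $G_i$ in (\ref{JointItoKron}) fall into exactly two families: the ``$x$-type'' $E_i^r\otimes G_i^n$ with $G_i^n\in\mathbb{G}^n$, which move $x$ while fixing $z$, and the ``$z$-type'' $G_j^r\otimes E_j^n$ with $G_j^r\in\mathbb{G}^r$, which move $z$ while fixing $x$. Using $(E_i^r\otimes G_i^n)(z\otimes x) = (E_i^r z)\otimes(G_i^n x)$ together with the fact that $E_i^r$ selects the active $z$-coordinate, the $x$-type terms bundle into $z\otimes\sum_j G_j(z)x\,dN_j(z)$, the $z$-dependence of which generator fires and at what rate being absorbed into the Markov-modulated counters $N_j(z)$; symmetrically the $z$-type terms bundle into $\sum_i H_i(z)z\,dM_i(z)\otimes x$. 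Decomposability is precisely the absence of any generator changing both components at once, so no counter feeds the cross term and $(dz)\otimes(dx)=0$. For the cascade refinement I would instead use Proposition~\ref{ItoRepMCPropos}(2) (equivalently Proposition~\ref{ProposTransRep}(2)): the $z$-type generators now take the stronger form $G_j^r\otimes I_n$, so $(G_j^r\otimes I_n)(z\otimes x) = (G_j^r z)\otimes x$ fires irrespective of $x$. The $z$-driving counters are therefore ordinary Poisson (rate $\nu_i$, unmodulated by $x$), giving $dz = \sum_i H_i z\,dM_i$, while the $x$-dynamics retain their $z$-modulation $dx = \sum_i G_i(z)x\,dN_i(z)$.

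The delicate step is the passage from the contracted expression $(E_i^r z)\otimes(G_i^n x)$ to the advertised form $z\otimes\sum_j G_j(z)x\,dN_j(z)$: one must check that restricting to the event $\{z=e_k\}$ leaves exactly those $x$-transition generators appropriate to the $z$-state $e_k$, and that rebundling them into counters whose rates are modulated by $z_t$ reproduces the summed differential faithfully — in other words, the bookkeeping that converts the $z$-selection $E_i^r z$ into genuine Markov modulation of the $N_j$. The same verification is needed on the $z$-side, and in the cascade step one must confirm that it is the appearance of $I_n$ in place of $E_j^n$ that removes all $x$-dependence and demotes the $M_i$ from doubly-stochastic to ordinary Poisson counters.
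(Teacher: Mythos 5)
Your proposal is correct and follows essentially the same route as the paper, which states Proposition~\ref{ProposDecomposIto} as a direct consequence of Propositions~\ref{ItoRepMCPropos} and~\ref{ProposTransRep} without writing out a proof: the bilinear product rule for jump increments, the two-family splitting of the generators $E_i^r\otimes G_i^n$ and $G_j^r\otimes E_j^n$ in the decomposable case (with the cross term vanishing by definition of decomposability), and the replacement of $E_j^n$ by $I_n$ in the cascade case to decouple $dz$ from $x$. In fact you supply more detail than the paper does, in particular by explicitly flagging the bookkeeping that converts the selection $E_i^r z$ into Markov modulation of the counters $N_j(z)$, which the paper leaves implicit.
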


\begin{remark}
If $y_{t}$ is non-decomposable, the term $dz_{t}\otimes dx_{t}$ is non-zero,
so we can not write sample paths in decoupled form.
\end{remark}

\section{Diagonalizable Markov Decision Processes}\label{SectionDiagMP} 

\subsubsection{Properties of Diagonalizable MDPs}

If the MDP is diagonalizable, then some simplifications of the solutions
presented above are possible. Once again consider optimal control problem (%
\ref{OptCtrlProblem}) except that now the cascade is diagonalizable. Using
notation of Section \ref{SectionMaxPrinciple}, the joint probabilities $%
p_{i}$ satisfy (assume stationarity of $z(t))$ $\qquad \qquad \qquad $%
\begin{equation*}
\dot{p}_{i}=(A_{i}+\sum_{j=1}^{p}B_{ij}D_{ij})p_{i}
\end{equation*}

From Proposition \ref{ProposCondnsDiag} and the fact that the marginal
probability vector of $x(t)$ is $\sum_{i=1}^{r}p_{i}$ we must have, for some
stochastic matrix $\overline{A},$ 
\begin{equation}
\sum_{i=1}^{r}(A_{i}+\sum_{j=1}^{p}B_{ij}D_{ij})p_{i}=\overline{A}%
\sum_{i=1}^{r}p_{i}  \label{FullyDecplNecCondn}
\end{equation}

Thus we have the following useful lemma:

\begin{lemma}
\label{LemmaDiagonalizability}Let $z\in \{e_{i}\}_{i=1}^{r}$ $,x\in
\{e_{i}\}_{i=1}^{n}$ \ and $A(t,z),B_{j}(t,z),u_{j}(t,z,x),$ $j=1..p,$ and a
cascade MDP on $z\otimes x$ be as defined in Section \ref%
{SectionCascadeMDPModel1}. As before, use shorthand $A_{i}\equiv
A(t,e_{i}),~B_{ij}\equiv B_{j}(t,e_{i})$, and $u_{j}(t,e_{i},x)$ as the
diagonal matrix $D_{ij}$.Then the resulting Markov process is diagonalizable
if and only if there exists a stochastic matrix $\overline{A}(t)$ such that
the joint probabilities written as vectors $%
\{p_{i}(t)=[p_{i1},p_{i2}...p_{in}]^{T},i=1..r\}$ where $p_{ik}(t)=$ $\Pr
\{z(t)=e_{i},x(t)=e_{k}\}$ at each $t$ satisfy the equation (\ref%
{FullyDecplNecCondn}), assuming that $z(t)$ is stationary\footnote{%
Similar equation can be derived for non-stationary $z(t)$ but not needed in
this paper}
\end{lemma}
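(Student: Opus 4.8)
The plan is to deduce the lemma directly from Proposition \ref{ProposCondnsDiag}, which already characterizes diagonalizability of a decomposable process in terms of its component marginals evolving under fixed stochastic matrices. The bridge between that proposition and the stated condition (\ref{FullyDecplNecCondn}) is a single computation of the $x$-marginal dynamics, so the bulk of the argument is bookkeeping rather than new analysis.

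First I would record that, under the assumed stationarity of $z(t)$, the state equations (\ref{StateEqns}) reduce to $\dot{p}_{i}=(A_{i}+\sum_{j=1}^{p}B_{ij}D_{ij})p_{i}$, since the coupling term $p_{i}\dot{c}_{i}/c_{i}$ vanishes. Because the marginal probability vector of $x(t)$ is $p_{x}=\sum_{i=1}^{r}p_{i}$, differentiating term by term gives
\[
\dot{p}_{x}=\sum_{i=1}^{r}\dot{p}_{i}=\sum_{i=1}^{r}\Bigl(A_{i}+\sum_{j=1}^{p}B_{ij}D_{ij}\Bigr)p_{i}.
\]
This identity is the only place the cascade structure enters, and it turns the right-hand side of (\ref{FullyDecplNecCondn}) into precisely the statement that $\dot{p}_{x}=\overline{A}\,p_{x}$ for the candidate matrix $\overline{A}(t)$.

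For the forward implication I would assume the closed-loop process is diagonalizable, so its generator has the form (\ref{DiagnolizableMC}) with $x$-block $\overline{A}$. Proposition \ref{ProposCondnsDiag} then forces the marginal $p_{x}$ to satisfy $\dot{p}_{x}=\overline{A}\,p_{x}$; comparing with the displayed expression for $\dot{p}_{x}$ yields (\ref{FullyDecplNecCondn}) with that same $\overline{A}$. For the converse I would start from (\ref{FullyDecplNecCondn}) for some stochastic $\overline{A}(t)$, read off $\dot{p}_{x}=\overline{A}\,p_{x}$ from the computation above, observe that the $z$-marginal trivially obeys $\dot{p}_{z}=Cp_{z}$ because substituting a control $u(t,z,x)$ leaves the $z$-dynamics unaffected (the closed-loop chain is still a cascade, hence decomposable), and then invoke the converse half of Proposition \ref{ProposCondnsDiag} to conclude diagonalizability.

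The step I expect to be the main obstacle is justifying that the closed-loop chain --- after the feedback control $u(t,z,x)$ has been substituted into the rates --- remains a decomposable (indeed cascade) Markov process, which is the hypothesis Proposition \ref{ProposCondnsDiag} requires. One must check that inserting an $x$-dependent control does not create synchronous $z$--$x$ transitions: since the $B_{ij}$ act only on the $x$-component for each fixed $z=e_{i}$ and the $z$-generator $C$ is untouched, no term of the form $dz\otimes dx$ is introduced, so decomposability is preserved. A secondary point worth verifying is that the condition (\ref{FullyDecplNecCondn}), stated pointwise along the trajectory, is equivalent to the generator-level factorization (\ref{DiagnolizableMC}); this holds because the equivalence of Proposition \ref{ProposCondnsDiag} is itself phrased through the marginal evolution, so the pointwise and structural statements coincide once $\overline{A}$ is taken independent of the initial distribution.
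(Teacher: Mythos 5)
Your proposal is correct and takes essentially the same route as the paper: there the lemma is justified by exactly the preceding computation, namely reducing the state equations (\ref{StateEqns}) under stationarity of $z(t)$ to $\dot{p}_{i}=(A_{i}+\sum_{j=1}^{p}B_{ij}D_{ij})p_{i}$, summing over $i$ to obtain the $x$-marginal dynamics, and invoking both directions of Proposition \ref{ProposCondnsDiag}. Your added checks --- that the closed-loop chain remains decomposable (so the converse half of that proposition applies) and that the pointwise condition (\ref{FullyDecplNecCondn}) matches the generator-level factorization (\ref{DiagnolizableMC}) --- merely make explicit what the paper leaves implicit.
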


\begin{corollary}
\label{CorrSufficientDiagz}(\textbf{Sufficient Conditions for diagonalizable
MDP}).The cascade MDP\ defined in the hypothesis of Lemma \ref%
{LemmaDiagonalizability} is diagonalizable if any of following hold:

\begin{enumerate}
\item $A(t,z),~B_{j}(t,z)$ and $u_{j}(t,z,x)$ are$\ $independent of $z$, $%
j=1,2..p.$ That is, $\mathcal{U}$ is restricted to the set of measurable
functions on the space $\mathbb{R}^{+}\mathbb{\times }\{e_{i}\}_{i=1}^{n}$ %
only (i.e. no feedback allowed on state $z$ )

\item For each $x\in \{e_{k}\}_{k=1}^{n}$ and $t,$ the sum $%
A(t,z)+\sum_{j=1}^{p}u_{j}(t,z,x)B_{j}(t,z)$ is independent of $z$ for all
admissible controls $u_{j}.$

\item For each $i,k$ such that the $k^{\prime }th$ row of $%
A_{i}+\sum_{j=1}^{p}B_{ij}D_{ij}$ does not vanish for all $t$ and admissible
controls $D_{ij}$,~the marginal probabilities $p_{i}^{Z}(t)\equiv \Pr
\{z(t)=e_{i}\}$ and $p_{k}^{X}(t)\equiv \Pr \{x(t)=e_{k}\}$ are
uncorrelated, i.e. $p_{ik}(t)=p_{i}^{Z}(t)p_{k}^{X}(t)$
\end{enumerate}
\end{corollary}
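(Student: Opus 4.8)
The plan is to verify each of the three conditions against the characterization furnished by Lemma~\ref{LemmaDiagonalizability}: the cascade MDP is diagonalizable precisely when there is a stochastic matrix $\overline{A}(t)$ for which the joint probability vectors satisfy $\sum_{i=1}^{r}P_i p_i = \overline{A}\sum_{i=1}^{r}p_i$, i.e. equation~(\ref{FullyDecplNecCondn}), where I abbreviate $P_i = A_i + \sum_{j=1}^{p}B_{ij}D_{ij}$ for the $i$-th conditional generator as in~(\ref{GenrDiagForm}) and recall that $z(t)$ is stationary, so the weights $p_i^Z \equiv \Pr\{z=e_i\}$ are constant in $t$. For each case I would simply exhibit an admissible $\overline{A}$ and check it is a bona fide infinitesimal generator.

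For condition (1), the absence of feedback on $z$ makes $A_i$, $B_{ij}$ and $D_{ij}$ independent of $i$, so all conditional generators coincide, $P_i = P$; then $\sum_i P p_i = P\sum_i p_i$ and one takes $\overline{A}=P$, which is stochastic by construction. For condition (2), the key algebraic observation is that the $k$-th column of $P_i$ depends on the controls only through $u_j(e_i,e_k)$, since $D_{ij}e_k = u_j(e_i,e_k)e_k$ gives $P_i e_k = \bigl(A_i + \sum_j u_j(e_i,e_k)B_{ij}\bigr)e_k$. The hypothesis states exactly that $A(z)+\sum_j u_j(z,e_k)B_j(z)$ is $z$-independent, so the $k$-th column of $P_i$ does not depend on $i$; letting $k$ range over $1,\dots,n$ shows $P_i$ itself is $i$-independent, and one again sets $\overline{A}=P_i$. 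Both of these are immediate.

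The substance lies in condition (3). Here the natural candidate is the stationary-$z$-averaged generator $\overline{A} = \sum_{i=1}^{r}p_i^Z P_i$, which is a genuine generator because it is a convex combination (weights $p_i^Z\ge 0$, $\sum_i p_i^Z = 1$) of the generators $P_i$. To verify~(\ref{FullyDecplNecCondn}) I would expand $\sum_i P_i p_i = \sum_i\sum_k p_{ik}\,(P_i e_k)$, so that each joint probability $p_{ik}$ enters weighted by the $k$-th column of $P_i$; consequently only those $p_{ik}$ lying on the active transition support of $P_i$ (the non-vanishing part addressed by hypothesis~(3)) actually contribute, while entries with a vanishing row/column drop out. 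On exactly this index set condition~(3) supplies the factorization $p_{ik}=p_i^Z p_k^X$, and substituting it collapses the double sum: using $\sum_i p_i^Z = 1$ one gets $\sum_i p_{ik} = p_k^X$, whence $\sum_i P_i p_i = \bigl(\sum_i p_i^Z P_i\bigr)p^X = \overline{A}\,p_x$, with $p_x = \sum_i p_i = p^X$. This is precisely~(\ref{FullyDecplNecCondn}), and by Proposition~\ref{ProposCondnsDiag} the marginal then obeys the closed equation $\dot{p}_x = \overline{A}p_x$.

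I expect the main obstacle to be exactly the bookkeeping in condition~(3): one must argue carefully that the factorization is needed only on the active transition support, so that the partial independence granted by the hypothesis suffices, and in particular align the entries $p_{ik}$ that carry nonzero weight (those hit by a nonzero column of $P_i$) with the non-vanishing-row description used in the statement. One must also confirm that $\overline{A}=\sum_i p_i^Z P_i$ is time-independent when $P_i$ is constant (invoking stationarity of $z$) and remains stochastic for every admissible control, so that it genuinely governs the marginal dynamics rather than merely satisfying the identity pointwise in $t$.
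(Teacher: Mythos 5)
Your proposal is correct and follows essentially the same route as the paper: for conditions (1) and (2) the paper simply declares them trivial (your column-wise check via $D_{ij}e_{k}=u_{j}(e_{i},e_{k})e_{k}$ is the same observation spelled out), and for condition (3) the paper likewise takes $\overline{A}=\sum_{i=1}^{r}p_{i}^{Z}\bigl(A_{i}+\sum_{j=1}^{p}B_{ij}D_{ij}\bigr)$, substitutes the factorization $p_{ik}=p_{i}^{Z}p_{k}^{X}$ into the entrywise expansion of the left side of (\ref{FullyDecplNecCondn}), regroups using $\sum_{l}p_{lk}=p_{k}^{X}$, and invokes Lemma \ref{LemmaDiagonalizability}. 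If anything you are more careful than the paper, which substitutes the factorization in all terms without remarking that it is only needed where the corresponding column of $A_{i}+\sum_{j}B_{ij}D_{ij}$ is nonzero (the bookkeeping you flag), and note that Lemma \ref{LemmaDiagonalizability} permits $\overline{A}(t)$ to vary in $t$, so your final worry about time-independence is not actually required.
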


\begin{proof}
The first and second conditions are trivial. For the third, note that if $%
p_{ik}(t)=p_{i}^{Z}(t)p_{k}^{X}(t)$ then we can write the $m^{\prime }th$ \
row of the left hand side of (\ref{FullyDecplNecCondn}) as in fully expanded
form, using notation $(M)_{ij}$ for the $(i,j)^{th}$ entry of matrix $M$%
\begin{eqnarray*}
&&\sum_{k=1}^{n}\sum_{i=1}^{r}\sum_{j=1}^{p}(A_{i}+B_{ij}D_{ij})_{mk}p_{ik}
\\
&=&\sum_{k=1}^{n}\sum_{i=1}^{r}%
\sum_{j=1}^{p}(A_{i}+B_{ij}D_{ij})_{mk}p_{i}^{Z}p_{k}^{X} \\
&=&\sum_{k=1}^{n}(\sum_{i=1}^{r}p_{i}^{Z}%
\sum_{j=1}^{p}(A_{i}+B_{ij}D_{ij}))_{mk}\sum_{l=1}^{r}p_{lk}
\end{eqnarray*}

Setting $\overline{A}=$ $\sum_{i=1}^{r}p_{i}^{Z}%
\sum_{j=1}^{p}(A_{i}+B_{ij}D_{ij})$which is readily verified to be a
stochastic matrix, the result follows from Lemma \ref{LemmaDiagonalizability}%
.
\end{proof}

\subsubsection{Some Problems on Diagonalizable MDPs}\label{NoZFbck}

Note that from (\ref{FullyDecplNecCondn}) above, a diagonalizable MDP\ can
be rewritten as a partial feedback problem, by possibly introducing matrices 
$\overline{A}_{0}(t),\overline{B_{i}}(t)$ and controls $\overline{u_{j}}%
(t,x)$ such that $\overline{A}(t)=\overline{A}_{0}(t)+\sum_{j=1}^{\overline{p%
}}\overline{u_{j}}(t,x)\overline{B_{i}}(t).$ Thus all optimal control
problems on diagonalizable MDPs are in the category of partial feedback
problems.

Consider, once again the optimal control problem (\ref{OptCtrlProblem})
except that now the MDP is diagonalizable. Simplified solutions are
available in the following two cases.

\begin{theorem}
\label{ThmDiagCase}Let $z\in \{e_{i}\}_{i=1}^{r}$ $,x\in
\{e_{i}\}_{i=1}^{n}~ $and $A_{0},A$,$B_{i}$,$T$, $\mathcal{U},\psi $,$\Phi $,%
$L,\eta $,be as defined for the cascade MDP\ on $z\otimes x$  of Theorem \ref%
{TheoremBellman}. In addition, let $A$,$B_{i}$ and $\mathcal{U}$ satisfy the
hypothesis of Corollary \ref{CorrSufficientDiagz}.1. Then if the cost
functional $L$ or terminal condition $\Phi $ do not depend on $z,$ the to
the optimal control problem defined in (\ref{OptCtrlProblem}) has the
solution 
\begin{eqnarray*}
\eta ^{\ast } &=&\mathbb{E}k^{T}(0)x(0) \\
u^{\ast } &=&\arg \min_{u(x)\in \mathcal{U}}(\sum_{i=1}^{p}u_{i}k^{T}B_{i}x+%
\psi (u))
\end{eqnarray*}%
where $k$ satisfies the vector differential equation 
\begin{eqnarray*}
\dot{k} &=&-A^{T}k-L^{T}e_{1}-\min_{u(x)\in \mathcal{U}}(%
\sum_{i=1}^{p}u_{i}k^{T}B_{i}x+\psi (u)) \\
k(T) &=&\Phi ^{T}e_{1}
\end{eqnarray*}
\end{theorem}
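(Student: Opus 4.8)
The plan is to obtain Theorem~\ref{ThmDiagCase} as a direct specialization of Theorem~\ref{TheoremBellman}. The central claim I would establish is that, under the present hypotheses, the unique $n\times r$ solution $K(t)$ of the matrix Bellman equation~(\ref{BellmanPartiallyDec}) has every column equal to a single vector $k(t)\in\mathbb{R}^{n}$, i.e. $K(t)=k(t)\,e^{T}$ with $e=(1,\dots,1)^{T}\in\mathbb{R}^{r}$. Granting this, reading off one column of~(\ref{BellmanPartiallyDec}) produces the stated vector ODE for $k$, and the expressions for $\eta^{\ast}$ and $u^{\ast}$ drop out by substitution. (I read the hypothesis ``$L$ or $\Phi$ do not depend on $z$'' as requiring that neither depends on $z$, since only then is the reduction forced.)

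The whole reduction hinges on one identity: because $C$ is an infinitesimal generator its columns sum to zero, so $e^{T}C=0$, and therefore the backward $z$-term annihilates the ansatz, $KC=k\,e^{T}C=0$. This is precisely the term that would otherwise couple distinct columns of $K$, so its vanishing is what lets the common-column structure survive the backward flow. The remaining pieces collapse for elementary reasons: by Corollary~\ref{CorrSufficientDiagz}.1 we have $A(z)\equiv A$ and $B_{i}(z)\equiv B_{i}$ independent of $z$ and may take the controls in feedback form $u(x)$; $z$-independence of $L$ and $\Phi$ forces each to have identical columns, say $L=\ell e^{T}$ and $\Phi=\phi e^{T}$; and for every basis vector $z=e_{j}$ the ansatz gives $z^{T}K^{T}=(e_{j}^{T}e)k^{T}=k^{T}$. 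Consequently $A^{T}(z)K=\sum_{z}A^{T}(z)Kzz^{T}=A^{T}K$, while the minimand $\min_{u(x)}(\sum_{i}u_{i}z^{T}K^{T}B_{i}x+\psi(u))=\min_{u(x)}(\sum_{i}u_{i}k^{T}B_{i}x+\psi(u))$ no longer sees $z$.

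I would then substitute $K=k\,e^{T}$ into~(\ref{BellmanPartiallyDec}): using $KC=0$, $A_{0}^{T}K=(A_{0}^{T}k)e^{T}$, $A^{T}(z)K=(A^{T}k)e^{T}$, and writing the $z$-independent minimization term in its matrix representation as $m\,e^{T}$ (with $m\in\mathbb{R}^{n}$ collecting its values over $x=e_{1},\dots,e_{n}$), the equation becomes $\dot{k}\,e^{T}=-\ell e^{T}-(A_{0}+A)^{T}k\,e^{T}-m\,e^{T}$. Right-multiplying by any $e_{j}$ (so $e^{T}e_{j}=1$) extracts the single vector equation $\dot{k}=-(A_{0}+A)^{T}k-\ell-\min_{u(x)}(\sum_{i}u_{i}k^{T}B_{i}x+\psi(u))$, which is exactly the asserted ODE once ``$A$'' is read as the combined $z$-independent drift $A_{0}+A$ and $\ell$ (resp. $\phi$) as the common column of $L$ (resp. $\Phi$) supplying the forcing (resp. terminal) data $k(T)=\phi$. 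Crucially this computation shows $k\,e^{T}$ genuinely solves~(\ref{BellmanPartiallyDec}) with the correct terminal value, so the uniqueness already proved in Theorem~\ref{TheoremBellman} identifies it as the solution. Finally $\eta^{\ast}=\mathbb{E}\,z^{T}(0)K^{T}(0)x(0)=\mathbb{E}\,k^{T}(0)x(0)$ and $u^{\ast}=\arg\min_{u(x)}(\sum_{i}u_{i}k^{T}B_{i}x+\psi(u))$ follow from the same collapse $z^{T}K^{T}=k^{T}$.

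The step I expect to be delicate is verifying that the common-column form is genuinely \emph{invariant} under the backward ODE, not merely consistent with the terminal data; this is where every hypothesis is spent. If either $L$ or $\Phi$ retained honest $z$-dependence, the forcing term or terminal condition would immediately break the rank-one-column structure and no reduction would hold, so the ``and'' reading of the hypothesis is essential, while the identity $e^{T}C=0$ is what neutralizes the only term capable of mixing columns during integration. A minor but worthwhile alternative check is the probabilistic one: $z$-independence of $L,\Phi$ makes $\eta$ depend on $x$ only through its marginal $\mathbb{E}x$, which by diagonalizability (Corollary~\ref{CorrSufficientDiagz}.1 and Proposition~\ref{ProposCondnsDiag}) evolves autonomously; OCP-I thereby reduces to an ordinary non-cascade MDP on $x$, whose Bellman equation is the vector ODE above, giving an independent confirmation of the result.
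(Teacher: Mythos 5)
Your proposal is correct and follows essentially the same route as the paper, which simply asserts that under these hypotheses the matrix Bellman equation (\ref{BellmanPartiallyDec}) collapses to the single-state (non-cascade) Bellman equation on the state space of $x(t)$; your rank-one ansatz $K=k\,e^{T}$, the observation that $e^{T}C=0$ kills the column-coupling term $KC$, and the appeal to the uniqueness in Theorem \ref{TheoremBellman} supply precisely the verification the paper leaves implicit, and your closing probabilistic check mirrors the paper's one-line justification. Your readings that the hypothesis requires \emph{both} $L$ and $\Phi$ to be $z$-independent, and that the $A$ in the stated ODE should be understood as the combined drift $A_{0}+A$ (with $L^{T}e_{1}$, $\Phi^{T}e_{1}$ the common columns), are the correct interpretations of the statement.
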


\begin{proof}
In this case since we have no dependence of $A_{1,}B_{j}$ or $u_{j}$ on $z$ %
and neither that of $L$ or $\Phi $ the Bellman equation (\ref%
{BellmanPartiallyDec}) reduces to the single state Bellman equation (See
Theorem 1 in [2]) defined on the state space of $x(t).$ Hence we can use a
much simplified version of the Bellman equation to find the optimal control. 
\textit{Note, however, this does not necessarily imply complete
independence, }in the sense that the marginal probabilities may still be
correlated.
\end{proof}

\begin{remark}
Note that in view of the introductory remark in Section \ref{NoZFbck} the
condition requiring satisfaction of hypothesis of Corollary \ref%
{CorrSufficientDiagz}.1 is not necessary for a diagonalizable MDP.
\end{remark}

\begin{theorem}
Let $z\in \{e_{i}\}_{i=1}^{r}$ $,x\in \{e_{i}\}_{i=1}^{n}$ and $A_{0},A$,$%
B_{i}$,$T$,$\psi $,$\Phi $,$L,\eta $,be as defined for the cascade MDP\ on $%
z\otimes x$  of Theorem \ref{TheoremBellman}. \ Let $\mathcal{U~}$be
restricted to the set of measurable functions on the space $\mathbb{R}^{+}%
\mathbb{\times }\{e_{i}\}_{i=1}^{n}$(i.e. no feedback allowed on state $z$
), and further let the MDP satisfy the hypothesis of Corollary \ref%
{CorrSufficientDiagz}.3. Using notation $c_{i}(t)=\Pr
\{z(t)=e_{i}\},A_{i}(t)=A(t,e_{i}),B_{ij}(t)=$ $B_{j}(t,e_{i})$ the optimal
control problem defined in (\ref{OptCtrlProblem}) has the solution 
\begin{eqnarray*}
\eta ^{\ast } &=&c^{T}(0)\mathbb{E}k^{T}(0)x(0) \\
u^{\ast } &=&\arg \min_{u(x)\in \mathcal{U}}(\sum_{i=1}^{p}u_{i}k^{T}(%
\sum_{i=1}^{r}c_{i}B_{ij})x+\psi (u))
\end{eqnarray*}%
where $k$ satisfies the vector differential equation 
\begin{eqnarray*}
\dot{k}&=&-A_{0}^{T}k-(\sum_{i=1}^{r}c_{i}A_{i}^{T})k-L^{T}c \\
&&-\min_{u(x)\in \mathcal{U}}(\sum_{i=1}^{p}u_{i}k^{T}(%
\sum_{i=1}^{r}c_{i}B_{ij})x+\psi (u)) \\
k(T)&=&\Phi ^{T}c
\end{eqnarray*}
\end{theorem}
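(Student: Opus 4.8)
The plan is to exploit diagonalizability to collapse the cascade into an ordinary (non-cascade) MDP on the state space of $x(t)$ and then invoke the single-state Bellman equation, i.e. the $r=1$ specialization of Theorem \ref{TheoremBellman} (equivalently Theorem 1 of \cite{BROCKOPT}). First I would invoke the standing hypothesis that Corollary \ref{CorrSufficientDiagz}.3 holds, which by Lemma \ref{LemmaDiagonalizability} makes the process diagonalizable, so that by Proposition \ref{ProposCondnsDiag} the marginal probability vector $p^{X}(t)$ of $x(t)$ evolves under a single stochastic matrix. Reading off the averaging identity (\ref{FullyDecplNecCondn}) used in the proof of Corollary \ref{CorrSufficientDiagz}.3, and using that no feedback on $z$ is permitted so that the diagonal control matrices $D_{ij}=D_{j}$ are independent of $i$, this generator is
\begin{equation*}
\overline{A}(t,u)=A_{0}+\sum_{i=1}^{r}c_{i}(t)A_{i}+\sum_{j=1}^{p}\Bigl(\sum_{i=1}^{r}c_{i}(t)B_{ij}\Bigr)D_{j},
\end{equation*}
where $c(t)$ is the deterministic, control-independent solution of $\dot{c}=Cc$.

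Next I would rewrite the performance measure (\ref{OCPIPerfMeas}) entirely in terms of $p^{X}$. The crucial step is that the hypothesis of Corollary \ref{CorrSufficientDiagz}.3 supplies the factorization $p_{ik}(t)=c_{i}(t)\,p^{X}_{k}(t)$, which linearizes the bilinear running and terminal costs:
\begin{equation*}
\mathbb{E}\bigl[z^{T}L^{T}x\bigr]=(L^{T}c)^{T}p^{X},\qquad \mathbb{E}\bigl[z^{T}(T)\Phi^{T}x(T)\bigr]=(\Phi^{T}c(T))^{T}p^{X}(T).
\end{equation*}
Since the controls depend only on $x$, the penalty $\mathbb{E}\psi(u)$ is the $x$-averaged control cost exactly as in the Hamiltonian of Theorem \ref{TheoremMaxP}. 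Thus $\eta$ becomes the cost of a finite-horizon non-cascade MDP on $\{e_{k}\}_{k=1}^{n}$ with time-varying generator $\overline{A}(t,u)$, linear running-cost vector $L^{T}c(t)$, and terminal vector $\Phi^{T}c(T)$.

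With the reduction complete, I would apply the minimum-return-function argument of Theorem \ref{TheoremBellman} in the case $r=1$: taking the return function to be $k^{T}(t)x$, the It\^o/martingale computation together with the dynamic programming inequality yields
\begin{equation*}
\dot{k}=-A_{0}^{T}k-\Bigl(\sum_{i=1}^{r}c_{i}A_{i}^{T}\Bigr)k-L^{T}c-\min_{u(x)\in\mathcal{U}}\Bigl(\sum_{j=1}^{p}u_{j}k^{T}\Bigl(\sum_{i=1}^{r}c_{i}B_{ij}\Bigr)x+\psi(u)\Bigr),
\end{equation*}
with $k(T)=\Phi^{T}c(T)$, where the first two terms are precisely $-\overline{A}_{0}^{T}k$ for the control-free part $\overline{A}_{0}=A_{0}+\sum_{i}c_{i}A_{i}$ of $\overline{A}$. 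The optimal feedback $u^{\ast}$ is the minimizer appearing inside the bracket, and the optimal value follows by pairing $k(0)$ against the initial marginal of $x$, giving $\eta^{\ast}$ as in the statement.

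I expect the main obstacle to be the averaging step: verifying that $\overline{A}(t,u)$ is genuinely the generator of the $x$-\emph{marginal} (and not merely of some conditional law), that it remains a bona fide infinitesimal generator for every admissible control, and that the deterministic weights $c_{i}(t)$ and the transpose conventions relating the bilinear cost $z^{T}L^{T}x$ to the vector forcing term $L^{T}c$ are tracked consistently. Once the factorization $p_{ik}=c_{i}p^{X}_{k}$ is secured, the remainder is a direct transcription of the single-state result, so the entire substantive content lies in justifying the diagonalization.
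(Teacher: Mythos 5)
Your argument is correct, but it takes a genuinely different route from the paper's. The paper proves this theorem inside the variational framework of Theorem \ref{TheoremMaxP}: it takes the Hamiltonian (\ref{Hamiltonian}), substitutes the factorization $p_{ik}=c_{i}p_{k}$ supplied by the hypothesis of Corollary \ref{CorrSufficientDiagz}.3, interchanges the order of summation to rewrite the term to be minimized as $\sum_{k}p_{k}\sum_{j}u_{jk}\bigl(\sum_{i}c_{i}q_{i}^{T}B_{ij}\bigr)$, and observes that since $p_{k}\geq 0$ the minimization is achieved pointwise in $k$ independently of $p$; the costate equation therefore decouples from the state and is itself the single-state Bellman equation. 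You instead perform the reduction at the level of the process: via Lemma \ref{LemmaDiagonalizability}, Proposition \ref{ProposCondnsDiag} and the averaging identity (\ref{FullyDecplNecCondn}) you exhibit the $x$-marginal as a non-cascade MDP with averaged generator $\overline{A}(t,u)$, linearize the running and terminal costs through the same factorization, and then apply the $r=1$ case of Theorem \ref{TheoremBellman} to the reduced problem. The pivotal algebraic step --- the factorization plus the summation interchange producing $\sum_{i}c_{i}B_{ij}$ --- is identical in both, so the proofs are close cousins; what your route buys is an explicit state-space reduction (truer to the diagonalization program announced in the introduction), avoidance of the costate machinery, retention of the control cost $\psi$ throughout (the paper's proof parenthetically sets it aside), and direct handling of nonstationary $z$ through $\dot{c}=Cc$, whereas the Hamiltonian (\ref{Hamiltonian}) was derived under a stationarity assumption that the paper's proof inherits silently. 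What the paper's route buys is economy: with Theorem \ref{TheoremMaxP} already in hand, the decoupling is a two-line observation. One caveat you rightly flag applies equally to both proofs: the hypothesis of Corollary \ref{CorrSufficientDiagz}.3 asserts $p_{ik}=c_{i}p_{k}$ only where rows of the controlled generator do not vanish, while the cost linearization uses it for all $(i,k)$ and for every admissible control; you should state the factorization in that uniform strength when securing the reduction.
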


\begin{proof}
In this case, if we examine the Hamiltonian in (\ref{Hamiltonian}) we note
that in the term to be minimized becomes $\sum_{i=1}^{r}\sum_{j=1}^{p}%
\sum_{k=1}^{n}p_{ik}(u_{jk}q_{i}^{T}B_{ij}),$ (assuming no control cost) But
since $p_{ik}=p_{k}c_{i}$ where $p_{k}$ and $c_{i}$  are the marginal
probabilities of $x(t)=e_{k}$ and $z(t)=e_{i}$ respectively, this otherwise
non trivial minimization becomes trivial since we can now interchange the
summation order to write this sum as by writing $B_{j}=.%
\sum_{i=1}^{r}c_{i}B_{ij}$  $\sum_{k=1}^{n}p_{k}\sum_{j=1}^{p}u_{jk}(%
\sum_{i=1}^{r}c_{i}(q_{i}^{T}B_{j}))$ and since $p_{k}\geq 0$ we achieve
minimization by choosing $u_{jk}$ to minimize $%
(\sum_{i=1}^{r}c_{i}(q_{i}^{T}B_{j})).$ This then becomes the condition for
the minimum in the costate equation as well, and hence we have removed
dependency of the costate equation on the state $p$ and so we can solve the
costate equation (i.e. this becomes the single state Bellman equation).
\end{proof}

\section{Portfolio Optimization}

\subsection{\label{SecPOBg}Background: Portfolio Value, Wealth and Investment%
}

A portfolio consists of a finite set~of assets (such as stocks or bonds),
with the \emph{weight process} $x_{t}$ denoting the vector of amounts (also
called allocations or weights) of the assets. The \emph{price process} $%
z_{t}$ denotes the vector of market prices of the assets We define the
portfolio value $v(t,z,x)$ as the net value of the current asset holdings
for weights $x$ and prices $z$.~If $x(t)$ and $z(t)$ take values in finite
sets of standard basis vectors, then $v$ can be represented by the matrix $%
V(t)$ as $v(t,z,x)=z^{T}V(t)x$.~Using the Ito rule, we can write 
\begin{equation*}
dv=dz^{T}V^{T}x+z^{T}V^{T}dx
\end{equation*}%
In a non self-financing model, depending on the current value of the
portfolio, a weight shift will require buying/selling assets using an
investment (or consumption, which is the negative of the investment). If $%
s(t)$ represents the net \textbf{investment }into the portfolio up to time $t$%
, the incremental investment is the change in the portfolio value due to
weight shift. Hence,%
\begin{equation}
ds=z^{T}V^{T}dx  \label{InvestDef}
\end{equation}%
Similarly, the \textbf{wealth} of the portfolio (i.e. its intrinsic worth)
at time $t$ is defined as $w(t)=v(t)-s(t).$ So that the wealth represents
the net effect of changes in asset prices, and we can write%
\begin{equation}
dw=dz^{T}V^{T}x  \label{WealthDef}
\end{equation}

\subsection{\protect\bigskip \label{AssetExSF43}Self-Financing Portfolio
Problem}

We assume there are two stocks $S_{1}$ and $S_{2}$ whose prices each evolve
independently on a state space of $\{-1,1\}.$ Assume a portfolio that can
shift weights between the two assets with allowable weights $W$ of $%
(2,0),(1,1),(0,2)$ so that the portfolio has a constant total position (of $2
$). Further, we allow only weight adjustments of $+1$ or $-1$ for each
asset, and we further restrict the weight shifts to only those that do not
cause a change in net value for any given asset price. The latter condition
makes the portfolio self-financing.

The resulting process can be modeled as a cascade MDP. Let $z_{t}$ be the
(joint) prices of the two assets with prices $(-1,-1)$, $(-1,1)$, $(1,-1)$, $%
(1,1)$ represented as states $e_{1},e_{2},e_{3},e_{4}$ respectively. Let $%
x_{t}$ be the choice of weights with weights $(0,2)$, $(1,1)$, $(2,0)$
represented as states $e_{1},e_{2},e_{3}$ respectively. Transition rates of $%
z_{t}$ are determined by some pricing model, whereas the rates of $x_{t}$
which represent allowable weight shifts are controlled by the portfolio
manager. The portfolio value $v(z_{t},x_{t})$ can be written using its
matrix representation, $v(z,x)=z^{T}Vx$, where $V$ is

\begin{equation}
{\tiny V=%
\begin{pmatrix}
-2 & 2 & 0 & 2 \\ 
-2 & 2 & -2 & 2 \\ 
-2 & 0 & -2 & 2%
\end{pmatrix}%
}  \label{app:AssetVMSFP}
\end{equation}%
The portfolio manager is able to adjust the rate of increasing the first
weight by an amount $u$ and, independently that of decreasing the first
weight by an amount $d$ (which has the effect of simultaneously decreasing
or increasing the weight of the second asset). The resulting transitions of $%
x_{t}$ depend on $z_{t}$ (see Figure \ref{fig:appSFPModel} ) and transition matrices $P(z)$ of
the weights $x_{t}$ can be written as $P(z)=$\thinspace $A(z)+uB(z)+dD(z)$,
where $A(z),B(z),D(z)$ are: {\tiny {\ 
\begin{equation*}
\begin{tabular}{cccc}
$A(e_{1})=\frac{1}{2}%
\begin{pmatrix}
-1 & 1 & 0 \\ 
1 & -2 & 1 \\ 
0 & 1 & -1%
\end{pmatrix}%
$ & $A(e_{2})=\frac{1}{2}%
\begin{pmatrix}
-1 & 1 & 0 \\ 
1 & -1 & 0 \\ 
0 & 0 & 0%
\end{pmatrix}%
$ & $A(e_{3})=\frac{1}{2}%
\begin{pmatrix}
0 & 0 & 0 \\ 
0 & -1 & 1 \\ 
0 & 1 & -1%
\end{pmatrix}%
$ & $A(e_{4})=\frac{1}{2}%
\begin{pmatrix}
-1 & 1 & 0 \\ 
1 & -2 & 1 \\ 
0 & 1 & -1%
\end{pmatrix}%
$ \\ 
$B(e_{1})=%
\begin{pmatrix}
-1 & 0 & 0 \\ 
1 & -1 & 0 \\ 
0 & 1 & 0%
\end{pmatrix}%
$ & $B(e_{2})=%
\begin{pmatrix}
-1 & 0 & 0 \\ 
1 & 0 & 0 \\ 
0 & 0 & 0%
\end{pmatrix}%
$ & $B(e_{3})=%
\begin{pmatrix}
0 & 0 & 0 \\ 
0 & -1 & 0 \\ 
0 & 1 & 0%
\end{pmatrix}%
$ & $B(e_{4})=%
\begin{pmatrix}
0 & 0 & 0 \\ 
0 & -1 & 0 \\ 
0 & 1 & 0%
\end{pmatrix}%
$ \\ 
$D(e_{1})=%
\begin{pmatrix}
0 & 1 & 0 \\ 
0 & -1 & 1 \\ 
0 & 0 & -1%
\end{pmatrix}%
$ & $D(e_{2})=%
\begin{pmatrix}
0 & 1 & 0 \\ 
0 & -1 & 0 \\ 
0 & 0 & 0%
\end{pmatrix}%
$ & $D(e_{3})=%
\begin{pmatrix}
0 & 0 & 0 \\ 
0 & 0 & 1 \\ 
0 & 0 & -1%
\end{pmatrix}%
$ & $D(e_{4})=%
\begin{pmatrix}
0 & 1 & 0 \\ 
0 & -1 & 1 \\ 
0 & 0 & -1%
\end{pmatrix}%
$%
\end{tabular}%
\end{equation*}%
}} For $P(z)$ to be a proper transition matrix we require admissible
controls $u,d$ to satisfy $\left\vert u\right\vert ,\left\vert d\right\vert
\leq \frac{1}{2}$.~The portfolio manager may choose $u,d$ in accordance with
current values of $x_{t}$ and $z_{t}$ so that $u,d$ are Markovian feedback
controls $u(t,z_{t},x_{t})$ and $d(t,z_{t},x_{t}).$ Note that this model
differs from the traditional Merton-like models where only feedback on the
total value $v_{t}$ is allowed. Note that it is the self-financing
constraint that leads to the dependence on the current price $z_{t}$ of the
transitions of $x$ which allows us to model this problem as a cascade.

\begin{center}
\begin{figure}[th]
\centering
\subfigure[$z=e_1$ or $z=e_4$]{
\begin{tikzpicture}[->,>=stealth',shorten >=1pt,auto,node distance=2.4cm,
semithick]
\tikzstyle{every state}=[fill=white,draw=black,thick,text=black,scale=1]
\node[state]         (A)              {\small{$(1,1)$}};
\node[state]         (B) [left of=A] {\small{$(0,2)$}};
\node[state]         (C) [right of=A] at (A) {\small{$(2,0)$}};
\path (A) edge  [bend right] node[above] {$\frac{1}{2}+d$} (B);
\path (A) edge  [bend right] node[below] {$\frac{1}{2}+u$} (C);
\path (B) edge  [bend right] node[below] {$\frac{1}{2}+u$} (A);
\path (C) edge  [bend right] node[above] {$\frac{1}{2}+d$} (A);
\end{tikzpicture}
\label{fig:appSFP1}
} 
\subfigure[$z=e_2$]{
\begin{tikzpicture}[->,>=stealth',shorten >=1pt,auto,node distance=2.4cm,
semithick]
\tikzstyle{every state}=[fill=white,draw=black,thick,text=black,scale=1]
\node[state]         (A)              {\small{$(1,1)$}};
\node[state]         (B) [left of=A] {\small{$(0,2)$}};
\path (A) edge  [bend right] node[above] {$\frac{1}{2}+d$} (B);
\path (B) edge  [bend right] node[below] {$\frac{1}{2}+u$} (A);
\end{tikzpicture}
\label{fig:appSFP2}
} 
\subfigure[$z=e_3$]{
\begin{tikzpicture}[->,>=stealth',shorten >=1pt,auto,node distance=2.4cm,
semithick]
\tikzstyle{every state}=[fill=white,draw=black,thick,text=black,scale=1]
\node[state]         (A)              {\small{$(1,1)$}};
\node[state]         (C) [right of=A] at (A) {\small{$(2,0)$}};
\path (A) edge  [bend right] node[below] {$\frac{1}{2}+u$} (C);
\path (C) edge  [bend right] node[above] {$\frac{1}{2}+d$} (A);
\end{tikzpicture}
\label{fig:appSFP3}
} \label{fig:appSFPModel}
\caption[Self Financing Portfolio Transition Diagrams]{Transition diagram of
weight $x(t)$ in the self-financing portfolio for various asset prices $z(t)$
are shown in \subref{fig:appSFP1}, \subref{fig:appSFP2} and \subref{fig:appSFP3}.
States $e_{1}$,$e_{2}$,$e_{3}$,$e_{4}$ of $z(t)$ correspond to price vectors
(-1,-1),(-1,1),(1,-1),(1,1) respectively. Self-transitions are omitted for
clarity.}
\end{figure}
\end{center}

Consider the problem of maximizing the expected terminal value $v(T)$ of the
portfolio for a fixed horizon $T$ for the above self-financing portfolio
model~\ref{SecSFPModel}. With $x,z,u,d,V,A,B,D$ as defined thereof, we wish
to maximize the performance measure given by 
\begin{equation*}
\eta (u,d)=\mathbb{E(}v(T))
\end{equation*}%
Using Theorem \ref{TheoremBellman} we see the solution to this \textbf{OCP-I}
problem is obtained by solving the matrix equation with boundary condition $%
K(T)=-V$ 
\begin{equation}
\dot{K}=-KC-A^{T}(z)K+\frac{1}{2}\left\vert K^{T}B(z)\right\vert +\frac{1}{2}%
\left\vert K^{T}D(z)\right\vert   \label{app:SelfFnSoln}
\end{equation}%
with the optimum performance measure and controls (in feedback form) given
by 
\begin{eqnarray*}
\eta ^{\ast } &=&z^{T}(0)K^{T}(0)x(0) \\
u^{\ast }(t,z,x) &=&-\frac{1}{2}\sgn(z^{T}K(t)^{T}B(z)x) \\
d^{\ast }(t,z,x) &=&-\frac{1}{2}\sgn(z^{T}K(t)^{T}D(z)x)
\end{eqnarray*}%
with $K(t)$ being the solution to (\ref{app:SelfFnSoln}). Some solutions for (%
\ref{app:SelfFnSoln}) and corresponding optimal controls are plotted for $T=1,15~
$in Figure \ref{fig:FigOptimalStrategy_SF_43} for
various initial conditions (mixes of the assets in the portfolio initially).
Results also show that as $T\rightarrow \infty ,$the value of $\eta ^{\ast }$
approaches a constant value of $0.4725$ regardless of the initial values $%
z(0),x(0).$ That is the maximal possible terminal value for the portfolio is 
$0.4725.$ However, we do not see a steady state constant value for the
optimal controls $u^{\ast }(z,x)$ and $d^{\ast }(z,x)$ and that
near the portfolio expiration date, more vigorous buying/selling activity is
necessary. \ If the matrix $C$ were reducible or time-varying in our
example, multiple steady-states are possible as $T\rightarrow \infty $ and
the initial trading activity will be more significant.

\begin{figure}[!ht]
\centering
\includegraphics[width=\linewidth]{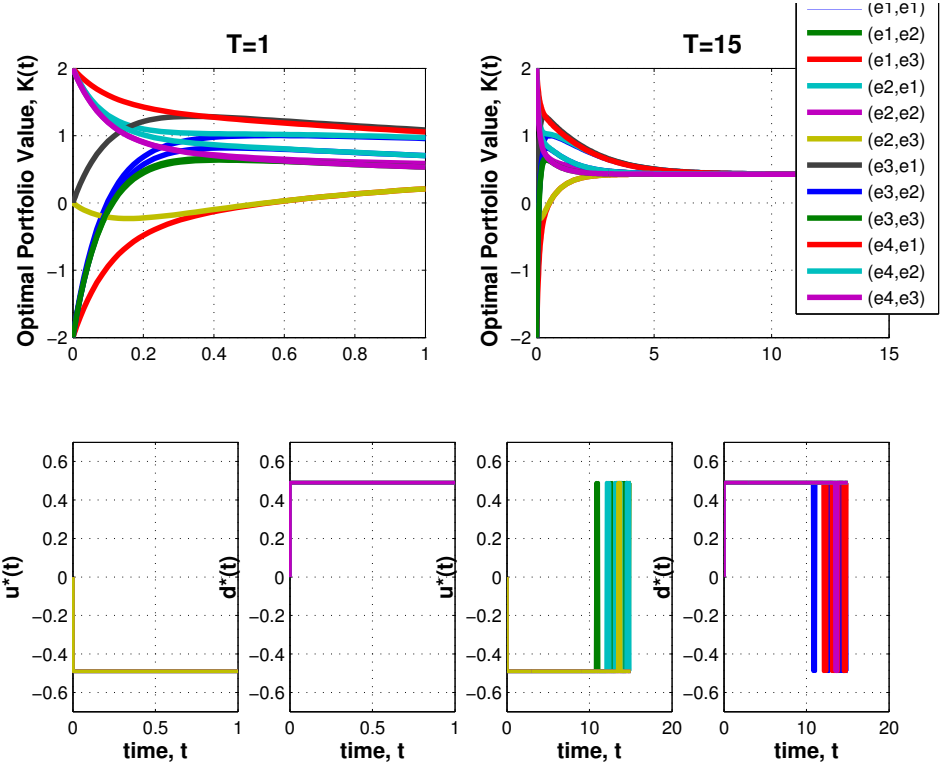}
\caption[Self-Financing Portfolio]{Solution to problem \ref{AssetExSF43}.
Minimum Return Function $k(t,z,x)$,
optimal up controls $u^{\ast }(t,z,x)$ and down controls $d^{\ast }(t,z,x)$
for the self-financing portfolio with maximal terminal wealth. Figures(a) and
(b) are for $T=1$ and $T=15$ respectively. Various $(z,x)$ values are
represented by the vectors $(e_{i},e_{j})$.}
\label{fig:FigOptimalStrategy_SF_43}
\end{figure}

\subsection{An Investment-Consumption Portfolio Problem}\label{AssetExIC}

An alternate model for portfolio allocation than discussed in the
self-financing Portfolio example (Section ) is presented as a \textbf{OCP-I }%
problem in this section. If we do not restrict the weight adjustments in the
model of Section \ref{SecSFPModel} to cases which keep the value a constant,
(i.e. we allow only weight adjustments of $+1$ or $-1$ for each asset,
regardless of the current portfolio value) we get a non self-financing
portfolio. The difference in the portfolio value as a result of weight shift
must be the result of an equivalent investment or consumption. Once again,
modeling this is as a cascade with $z_{t}$ and $x_{t}$ as in Section \ref%
{SecSFPModel}, the portfolio value matrix (\ref{app:AssetVMSFP}) is replaced by 
\begin{equation}
V=%
\begin{pmatrix}
-2 & 2 & -2 & 2 \\ 
-2 & 0 & 0 & 2 \\ 
-2 & 2 & 2 & 2%
\end{pmatrix}
\label{Asset1DVM}
\end{equation}%
As before, the portfolio manager can control the up and down rates $u,d$ %
resulting in the transitions of $x_{t}\ $(See Figure) described by the
matrices $P(z)$ $=$\thinspace $A(z)+uB(z)+dD(z)$ with%
\begin{equation*}
\begin{tabular}{cc}
$A(z)=%
\begin{pmatrix}
-\frac{1}{2} & \frac{1}{2} & 0 \\ 
\frac{1}{2} & -1 & \frac{1}{2} \\ 
0 & \frac{1}{2} & -\frac{1}{2}%
\end{pmatrix}%
$ &  \\ 
$B(z)=%
\begin{pmatrix}
-1 & 0 & 0 \\ 
1 & -1 & 0 \\ 
0 & 1 & 0%
\end{pmatrix}%
$ & $D(z)=%
\begin{pmatrix}
0 & 1 & 0 \\ 
0 & -1 & 1 \\ 
0 & 0 & -1%
\end{pmatrix}%
$%
\end{tabular}%
\end{equation*}%
and admissibility condition $\left\vert u\right\vert ,\left\vert
d\right\vert \leq \frac{1}{2}$. Note in this cascade model the matrices $%
A,B,D$ do not depend on $z$ but we will see next that the performance
measure does depend on $z.$

\begin{center}
\begin{figure}[tbp]
\centering
\begin{tikzpicture}[->,>=stealth',shorten >=1pt,auto,node distance=2.4cm,
semithick]
\tikzstyle{every state}=[fill=white,draw=black,thick,text=black,scale=1]
\node[state]         (A)              {\small{$(1,1)$}};
\node[state]         (B) [left of=A] {\small{$(0,2)$}};
\node[state]         (C) [right of=A] at (A) {\small{$(2,0)$}};
\path (A) edge  [bend right] node[above] {$\frac{1}{2}+d$} (B);
\path (A) edge  [bend right] node[below] {$\frac{1}{2}+u$} (C);
\path (B) edge  [bend right] node[below] {$\frac{1}{2}+u$} (A);
\path (C) edge  [bend right] node[above] {$\frac{1}{2}+d$} (A);

\end{tikzpicture}
\label{fig:AssetEx1}
\caption[Investment/Consumption Portfolio Transition Diagram]{Transition
diagram of weights $x(t)$ for controls $u$,$d$ in the investment/consumption
portfolio. Self-transitions are omitted for clarity.}
\end{figure}
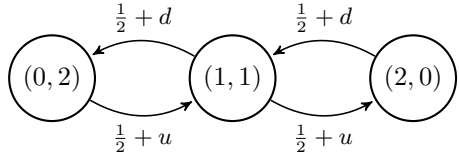
\end{center}

\subsubsection{\label{Asset_Example_1_1}Problem 1: Minimal Investment}

We wish to minimize the total amount of investment up to a fixed horizon $T.$ %
We can write a performance measure $\eta (u,d)$ that represents the net
investment into the portfolio up to time $T$ as 
\begin{equation*}
\eta (u,d)=\mathbb{E(}s(T))
\end{equation*}%
where $s(t)$ is the \emph{investment} process\ (See Appendix \ref{SecPOBg}).
Using (\ref{InvestDef}) 
\begin{equation*}
\mathbb{E(}ds(t))=\mathbb{E(}z^{T}V^{T}dx)=\mathbb{E(}z^{T}V^{T}(A+uB+dD)x)dt
\end{equation*}%
Writing the matrix $\Phi (u,d)=V^{T}(A+uB+dD)$ 
\begin{equation*}
\eta (u,d)=\mathbb{E}\int_{0}^{T}z^{T}(t)\Phi (u,d)x(t)dt
\end{equation*}%
The goal then is to choose $u,d$ so as to minimize $\eta (u,d)$ subject to $%
u,d\in \mathcal{U~}$where the admissibility set $\mathcal{U}$ is the set of
past measurable functions $u(z,x)$ such that $\left\vert u(z,x)\right\vert
\leq \frac{1}{2}$ for each $z,x.$ Using Theorem \ref{TheoremBellman} we see
the solution to this \textbf{OCP-I} problem is obtained by solving the
matrix equation with boundary condition $K(T)=0$ 
\begin{equation}
\dot{K}=-KC-A^{T}(K+V)+\frac{1}{2}\left\vert B^{T}(K+V)\right\vert +\frac{1}{%
2}\left\vert D^{T}(K+V)\right\vert  \label{Asset1_1_Bellman}
\end{equation}%
(where the notation $\left\vert M\right\vert $ for a matrix $M$ above
represents the element-by-element absolute value of a matrix) with the
optimal performance measure and controls (in feedback form) given by 
\begin{eqnarray*}
\eta ^{\ast } &=&z^{T}(0)K(0)x(0) \\
u^{\ast }(t,z,x) &=&-\frac{1}{2}\sgn(z^{T}(K(t)+V)^{T}Bx) \\
d^{\ast }(t,z,x) &=&-\frac{1}{2}\sgn(z^{T}(K(t)+V)^{T}Dx)
\end{eqnarray*}%
where $K(t)$ is the solution to (\ref{Asset1_1_Bellman}). Some solutions to (%
\ref{Asset1_1_Bellman}) and corresponding optimal controls are plotted for $%
T=1,10$ in Figure \ref{fig:FigAssetEx_MinInvest}(a) and \ref%
{fig:FigAssetEx_MinInvest}(b). Results also show that as $T\rightarrow
\infty ,$the value of $\eta ^{\ast }/T$ approaches a constant value of $%
-0.535$ regardless of the initial values $z(0),x(0)$ and in this case we see
that the optimal controls $u^{\ast }(z,x)$ and $d^{\ast }(z,x)$ expressed in
matrix form ($u^{\ast }(z,x)$ written as $z^{T}u^{\ast }x$ etc.) 
\begin{equation*}
u^{\ast }=%
\begin{pmatrix}
\frac{1}{2} & \frac{1}{2} & -\frac{1}{2} & \frac{1}{2} \\ 
-\frac{1}{2} & \frac{1}{2} & -\frac{1}{2} & \frac{1}{2} \\ 
0 & 0 & 0 & 0%
\end{pmatrix}%
;~d^{\ast }=%
\begin{pmatrix}
0 & 0 & 0 & 0 \\ 
-\frac{1}{2} & -\frac{1}{2} & \frac{1}{2} & -\frac{1}{2} \\ 
\frac{1}{2} & -\frac{1}{2} & \frac{1}{2} & -\frac{1}{2}%
\end{pmatrix}%
\end{equation*}%
(the values of $u^{\ast }(z,e_{3})$ and $d^{\ast }(z,e_{1})$ are immaterial
as they do not impact the dynamics). This means that one can expect a
constant cash flow of $0.535$ by the above strategy, and that this value is
maximal. Note also that the optimal controls \emph{do} depend on $z$ and so
the resulting weight and asset probabilities are not independent.

\begin{figure}[!ht]
\centering
\includegraphics[width=\linewidth]{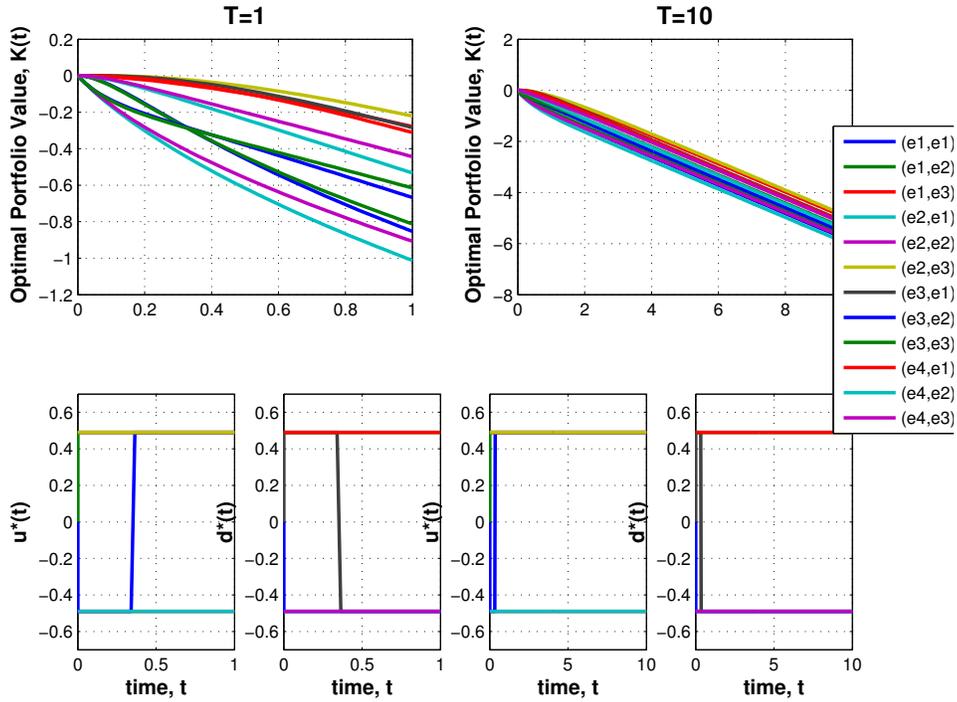}
\caption[Minimum Investment Portfolio]{Solution to problem \ref{Asset_Example_1_1}.
Minimum Return Function $k(t,z,x)$,
optimal up controls $u^{\ast }(t,z,x)$ and down controls $d^{\ast }(t,z,x)$
for the self-financing portfolio with maximal terminal wealth. Figures(a) and
(b) are for $T=1$ and $T=10$ respectively. Various $(z,x)$ values are
represented by the vectors $(e_{i},e_{j})$.}
\label{fig:FigAssetEx_MinInvest}
\end{figure}

\subsubsection{\label{Asset_Example_1_2}Problem 2 : Maximal Terminal Wealth}

In this case the performance measure that needs to be maximized is given by%
\begin{equation*}
\eta (u,d)=\mathbb{E(}w(T))=\mathbb{E}\int_{0}^{T}z^{T}(t)C^{T}V^{T}x(t)dt
\end{equation*}%
where $w(t)$ is the \emph{wealth }process (Appendix \ref{SecPOBg})~for $%
u,d\in \mathcal{U}$ as above. Again, from Theorem \ref{TheoremBellman} the
solution to this \textbf{OCP-I} problem is obtained by solving the matrix
equation with boundary condition $K(T)=0$ 
\begin{equation}
\dot{K}=-(K-V)C-A^{T}K+\frac{1}{2}\left\vert B^{T}K\right\vert +\frac{1}{2}%
\left\vert D^{T}K\right\vert ~  \label{Asset1_2_Bellman}
\end{equation}%
whose solution $K(t)$ gives the optimal performance measure and controls as:%
\begin{eqnarray*}
\eta ^{\ast } &=&z^{T}(0)K(0)x(0) \\
u^{\ast }(t,z,x) &=&-\frac{1}{2}\sgn(z^{T}K^{T}(t)Bx) \\
d^{\ast }(t,z,x) &=&-\frac{1}{2}\sgn(z^{T}K^{T}(t)Dx)
\end{eqnarray*}%
Some numerical results for the above problem with $V$ as in (\ref{Asset1DVM}%
) are plotted for $T=1,10$ in Figure \ref{fig:FigAssetEx_MaxWealth} (a) and %
\ref{fig:FigAssetEx_MaxWealth} (b).\ Results also show that as $T\rightarrow
\infty ,$the value of $\eta ^{\ast }/T$ approaches a constant value of $%
-0.533$ regardless of the initial values $z(0),x(0)$ and in this case we see
that the optimal controls $u^{\ast }(z,x)$ and $d^{\ast }(z,x)$ expressed in
matrix form ($u^{\ast }(z,x)$ written as $z^{T}u^{\ast }x$ etc.) are 
\begin{equation*}
u^{\ast }=%
\begin{pmatrix}
\frac{1}{2} & -\frac{1}{2} & \frac{1}{2} & \frac{1}{2} \\ 
-\frac{1}{2} & -\frac{1}{2} & \frac{1}{2} & -\frac{1}{2} \\ 
0 & 0 & 0 & 0%
\end{pmatrix}%
;~d^{\ast }=%
\begin{pmatrix}
0 & 0 & 0 & 0 \\ 
-\frac{1}{2} & \frac{1}{2} & -\frac{1}{2} & \frac{1}{2} \\ 
\frac{1}{2} & -\frac{1}{2} & -\frac{1}{2} & -\frac{1}{2}%
\end{pmatrix}%
\end{equation*}

\begin{figure}[!ht]
\centering
\includegraphics[width=\linewidth]{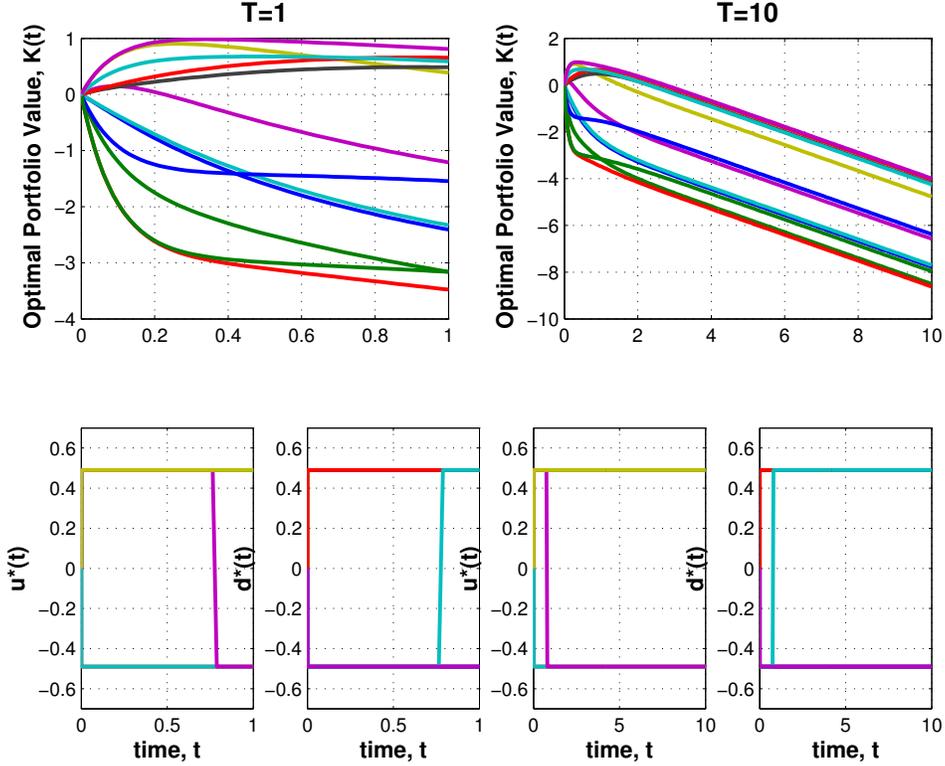}
\caption[Maximal Wealth Investment Portfolio]{Solution to problem \ref{Asset_Example_1_2}.
Minimum Return Function $k(t,z,x)$,
optimal up controls $u^{\ast }(t,z,x)$ and down controls $d^{\ast }(t,z,x)$
for the self-financing portfolio with maximal terminal wealth. Figures(a) and
(b) are for $T=1$ and $T=10$ respectively. Various $(z,x)$ values are
represented by the vectors $(e_{i},e_{j})$.}
\label{fig:FigAssetEx_MaxWealth}
\end{figure}

\subsubsection{\label{Asset_Example_1_3}Problem 3 - Minimal Investment with Partial Feedback}

In the investment/consumption model, the control matrices $A,B,D$ do not
depend on $z.$ As a result one may be tempted to think that a partial
feedback optimization problem, i.e. where the controls are allowed to depend
on $x$ but not $z$ would give the same optimal performance. However, one
sees from Theorem \ref{TheoremBellman} the solution to the minimal
investment case is obtained by solving the matrix equation subject to $%
K(T)=0 $%
\begin{eqnarray}
\dot{p}_{z} &=&Cp_{z};~p_{z}(0)=\mathbb{E}z(0)  \label{AssetPartialBllmn} \\
\dot{K} &=&-KC-A^{T}(K+V)+\frac{1}{2}\left\vert B^{T}(K+V)(\mathbf{e}%
_{r}p_{z}^{T})\right\vert  \notag \\
&&+\frac{1}{2}\left\vert D^{T}(K+V)(\mathbf{e}_{r}p_{z}^{T})\right\vert 
\notag
\end{eqnarray}%
where $p_{z}$ is the probability vector for $z$. And the optimal performance
and controls are given by 
\begin{eqnarray*}
\eta ^{\ast } &=&p_{z}^{T}(0)K^{T}(0)x(0) \\
u^{\ast }(t,x) &=&-\frac{1}{2}\sgn((\mathbf{e}%
_{r}p_{z}^{T}(t))(K(t)+V)^{T}Bx) \\
d^{\ast }(t,x) &=&-\frac{1}{2}((\mathbf{e}_{r}p_{z}^{T}(t))(K(t)+V)^{T}Dx)
\end{eqnarray*}%
where $K(t),p_{z}(t)$ are solutions to \ref{AssetPartialBllmn}. The best
performance in this case is worse than that in the full feedback case, as
indeed shown by numerical simulation as in \ref{fig:FigAssetEx_MinInvestNZ}%
(a),(b) for $T=1,10.$ Comparing with the respective minimum return functions
of the full feedback case, the steady state case maximal cash flow rate is
only $0.22$ compared to $0.533$.

\begin{figure}[!ht]
\centering
\includegraphics[width=\linewidth]{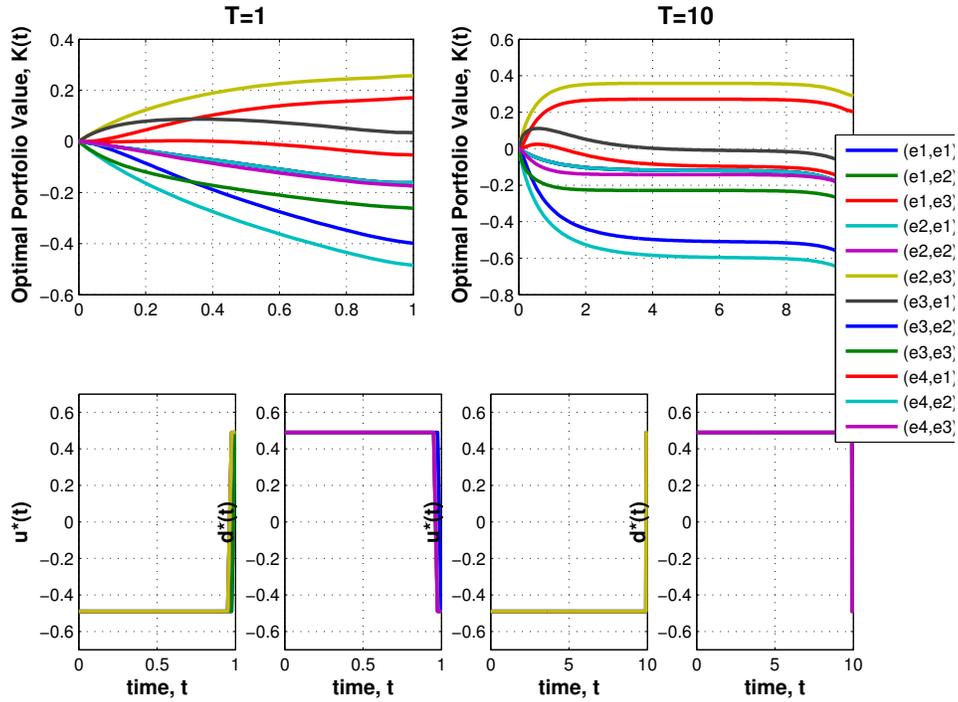}
\caption[Maximal Wealth Investment Portfolio With Partial Feedback]{Solution to problem \ref{Asset_Example_1_3}.
Minimum Return Function $k(t,z,x)$,
optimal up controls $u^{\ast }(t,z,x)$ and down controls $d^{\ast }(t,z,x)$
for the self-financing portfolio with maximal terminal wealth. Figures(a) and
(b) are for $T=1$ and $T=10$ respectively. Various $(z,x)$ values are
represented by the vectors $(e_{i},e_{j})$.}
\label{fig:FigAssetEx_MinInvestNZ}
\end{figure}

\subsection{Some Variations on Portfolio Problems}

Some variants of the examples presented here and in Section \ref{SecPOEx}
include the following:

\subsubsection{\textit{Utility Functions.and Discounting}}

In traditional portfolio optimization problems, one minimizes $\mathbb{E(}%
U(s(T)))$o maximizes $\mathbb{E(}U(w(T)))$ where $U(.)$ is a non-decreasing
and concave function, called the \textit{utility function}. In the above
examples, for simplicity of demonstration of the MDP techniques, we assumed $%
U(C)=C$. Utility functions are chosen based upon risk preferences of agents
and the financial environment, and some standard ones include the $U(C)=%
\frac{C^{\gamma }}{\gamma }$ (with $\gamma <1)$ or $U(C)=\log C.$ \
Furthermore, one may wish to optimize the \textit{discounted }value i.e $%
\mathbb{E}\int_{0}^{T}e^{-\alpha t}U(w(t))dt$ for some $\alpha >0$ instead.
The solutions to optimization problems of minimum investment and maximum
wealth in these cases are identical to (\ref{Asset1_1_Bellman}) and (\ref%
{Asset1_2_Bellman}) with $V$ replaced by $e^{-\alpha t}U(V).$

\subsubsection{\textit{Value Payoff Functions}}

The particular model we chose led to a value payoff $V$ as in (\ref%
{Asset1DVM}) though the problems presented above are completely generic with
respect to $V$ in that any other value of $V$ would work as well. In that
case we will have different mappings of the states $e_{1},e_{2},e_{3}$ of $x$
to the weights and that of $e_{1},e_{2},e_{3},e_{4}$ of $z$ to asset prices,
but it is only the value matrix $V$ that appears in any of the solutions and
these mappings are immaterial.

\subsubsection{\textit{Transaction Costs}}

If buying/selling of assets incurs a transaction cost then every weight
shift is associated with a cost. This can be modeled in terms of the control
costs. We can see that a value of $u=-\frac{1}{2}$ represents the case of a
minimal rate of buying the first asset, while $u=\frac{1}{2}$ represents a
maximal rate of buying the first asset. Likewise, the values $d=-\frac{1}{2}$
to $d=\frac{1}{2}$ represent the range of the rates of selling the first
asset. Hence a reasonable metric for the transaction costs would be $(u+%
\frac{1}{2})^{2}+(d+\frac{1}{2})^{2}.$ For example, a performance measure
like ($\alpha >0$)

\begin{equation*}
\eta (u,d)=\mathbb{E}\int_{0}^{T}\alpha ((u+\frac{1}{2})^{2}+(d+\frac{1}{2}%
)^{2})dt+\mathbb{E(}U(s(T)))
\end{equation*}

\section{Summary of Notations and Symbols}\label{NotSumm}

A stochastic basis $(\Omega ,\mathcal{F},\mathbb{F},\mathbb{P})$ is assumed
where $(\Omega ,\mathcal{F},\mathbb{P})$ is a probability space and $\mathbb{%
F}$ a filtration $(\mathcal{F}_{t})_{t\in T}$ on this space for a totally
ordered index set $T$ ($\subseteq \mathbb{R}^{+}$in our case). All
stochastic processes are assumed to be right continuous and adapted to $%
\mathbb{F}$.

\begin{tabular}{lp{6.53cm}}
$\mathbb{F}$ & A filtration $(\mathcal{F}_{t})_{t\in T}$ on $(\Omega ,%
\mathcal{F},\mathbb{P})$ where $T$ is a totally ordered index set \\ 
$\mathbb{G}^{n}$ & The space of square matrices of dimension $n$ of the form 
$F_{kl}-F_{ll}$ ~where $F_{ij}$ is the matrix of all zeros except for one in
the $i^{\prime }th$ row and $j^{\prime }th$ column \\ 
$\mathbb{E}^{n}$ & The space of diagonal $n\times n$ matrices with only $1$%
's or $0$'s \\ 
$I_{n}$ & $n\times n$ identity matrix, $I_{n}\in \mathbb{E}^{n}$ \\ 
$\mathbb{P}^{n}$ & The space of all stochastic matrices of dimension $n$ \\ 
$\{e_{i}\}_{i=1}^{n}$ & The set of $n$ standard basis vectors in $\mathbb{R}%
^{n}$ \\ 
$\mathbf{\phi }(t)$ & A real-valued function $\phi $ on $\mathbb{R}%
^{+}\times \{e_{i}\}_{i=1}^{n}$ will be written as the vector $\mathbf{\phi }%
(t)\in \mathbb{R}^{n}$ as $\phi (t,x)=\mathbf{\phi }^{T}(t)x$ where $x\in
\{e_{i}\}_{i=1}^{n}$ \\ 
$\mathbf{\Phi }(t)$ & A real-valued function $\phi $ on $\mathbb{R}%
^{+}\times \{e_{i}\}_{i=1}^{r}$ $\times \{e_{i}\}_{i=1}^{n}$~is written as
the $r\times n$ real matrix $\mathbf{\Phi }(t)$ as $\phi (t,z,x)=z^{T}\Phi
(t)x$ where $z\in \{e_{i}\}_{i=1}^{r}$ and $x\in \{e_{i}\}_{i=1}^{n}$ \\ 
$A^{T}(z)K$ & Denotes the matrix whose $j^{\prime }th$ column is $%
A(e_{j})K^{T}e_{j}^{T}~\ \ $which can be more explicitly written as $%
\sum_{z}A^{T}(z)Kzz^{T}$ \\ 
$\left\vert M\right\vert $ & For a matrix $M$ represents the
element-by-element absolute value of a matrix \\ 
$M^{.2}$ & For a matrix $M$ represents the element-by-element squared \\ 
$\mathbf{e}_{r}$ & The $r-$vector $[1~1...1]^{T}$%
\end{tabular}

\section*{Acknowledgements}

Special thanks to Dr. Roger Brockett of Harvard School of Engineering and Applied Sciences who provided inspiration for the problems discussed in this paper, 
and to Dr. Andrew JK Phillips at Harvard Medical School for valuable feedback.

\bibliographystyle{acmtrans-ims}
\bibliography{manish}

\end{document}